%% file: arxiv.tex
\documentclass[10pt,journal,compsoc]{IEEEtran}

\usepackage[T1]{fontenc}
\usepackage[utf8]{inputenc}
\usepackage{lmodern}
\usepackage{amsmath,amssymb}
\usepackage{amsthm}
\usepackage{graphicx}
\usepackage{booktabs}
\usepackage{url}

\graphicspath{{figures/}}

\renewcommand{\thetable}{\arabic{table}}

\newtheoremstyle{suppresult}{\smallskipamount}{\smallskipamount}{\normalfont}{\parindent}%
  {\bfseries}{.}{.5em}{\thmname{#1}\thmnote{ (#3)}}
\newtheoremstyle{suppprelim}{\smallskipamount}{\smallskipamount}{\normalfont}{\parindent}%
  {\bfseries}{.}{.5em}{\thmname{#1}\thmnote{ #3}}
\newtheoremstyle{suppremark}{\smallskipamount}{\smallskipamount}{\normalfont}{\parindent}%
  {\itshape}{.}{.5em}{\thmname{#1}\thmnote{ (#3)}}
\newcommand{\suppresultname}{}
\theoremstyle{suppresult}
\newtheorem*{suppnamedresult}{\suppresultname}
\newenvironment{proposition}[1]
  {\renewcommand{\suppresultname}{Proposition #1}\begin{suppnamedresult}}{\end{suppnamedresult}}
\newtheorem*{corollary}{Corollary}
\newtheorem*{statement}{Statement}
\newtheorem*{claim}{Claim}
\newtheorem*{setting}{Setting}
\newtheorem*{assumption}{Assumption}
\newtheorem*{assumptions}{Assumptions}
\theoremstyle{suppprelim}
\newtheorem*{suppprelimresult}{\suppresultname}
\makeatletter
\newenvironment{prelim}[1]
  {\renewcommand{\suppresultname}{(Pre-#1)}\def\@currentlabel{Pre-#1}\begin{suppprelimresult}}%
  {\end{suppprelimresult}}
\makeatother
\theoremstyle{suppremark}
\newtheorem*{remark}{Remark}

\begin{document}

\title{DOMIC: Provably Calibrated Detection of Dependence-Structure Change Points via Density-Operator Mutual Information}

\author{Jiwon~Kang and Yun~Am~Seo
\thanks{J. Kang is with the Department of Data Science, Jeju National University,
Jeju 63243, Republic of Korea (ORCID 0009-0000-0675-0572).}%
\thanks{Y. A. Seo is with the Department of Data Science, Jeju National University,
Jeju 63243, Republic of Korea, and with the NAVI Hyper-Tropicalization Research
Institute, Jeju National University, Jeju 63243, Republic of Korea
(ORCID 0000-0001-9283-4376).}%
\thanks{Corresponding author: Yun Am Seo (e-mail: seoya@jejunu.ac.kr).}}

\maketitle

\begin{abstract}
\input{abstract}
\end{abstract}

\begin{IEEEkeywords}
Change detection algorithms, density operators, dependence structure, entropy, kernel methods, mutual information, permutation test, random Fourier features, time series analysis
\end{IEEEkeywords}

\input{body}

\clearpage
\onecolumn
\renewcommand{\thesection}{\Alph{section}}
\renewcommand{\thesubsection}{\thesection.\arabic{subsection}}
\renewcommand{\thesectiondis}{Supplement~\thesection.}
\renewcommand{\thesubsectiondis}{\thesubsection}
\setcounter{section}{0}
\begin{center}{\Large Supplementary Material}\end{center}
\noindent\emph{Section, equation, table, figure, and citation numbers of the form ``Section 4.3'', ``(8)'', or ``[28]'' refer to the main article above; sections of the form A.x and B.x, and figures and tables labelled B.x, are internal to this Supplementary Material.}

\input{appendixA}

\setcounter{figure}{0}
\setcounter{table}{0}
\renewcommand{\thefigure}{B.\arabic{figure}}
\renewcommand{\thetable}{B.\arabic{table}}
\input{appendixB}

\end{document}

%% file: abstract.tex
Dependence can change between variable blocks without changing correlation; serial dependence complicates calibration. We propose DOMIC, a rank-based detector using density-operator mutual information (DOMI). Unit-norm random features make sample-level partial traces exact, so prefix sums compute segment statistics at per-split cost independent of segment length; a Gram form uses the exact kernel. Permutation tests are finite-sample exact under pair or block exchangeability for windows and schemes fixed in advance; validity after data-driven scheme selection is unproved. For i.i.d. pairs with feature dimension, frequency draw and bandwidth fixed, we prove a weighted chi-square limit at independence for the rank-based random-feature statistic. An additive entropy cost with Holevo gains drives segmentation. DOMI attains higher localised power than rank, copula, kernel and distance baselines in most non-Gaussian and correlation-free settings; the Gram form leads or ties in all 16 non-Gaussian settings where any statistic exceeds 0.10. On a three-break benchmark, Holevo partitioning returns exactly three breaks in 47.5\% of replicates, against at most 19\% for calibrated baselines. Block-calibrated primary scans reject in all three pre-specified pairs of a US financial panel, with strongest stage-two evidence for stocks and Treasury yields. One of 27 weather candidates passes the multiplicity-corrected re-test; candidate selection is unadjusted.

%% file: body.tex
\section*{1. Introduction}
\addcontentsline{toc}{section}{1. Introduction}

Change-point analysis is mature for changes in the mean or variance of a time series. Changes in the \emph{dependence structure} between components of a multivariate series concern which variables move together, and in what manner. Such changes matter in finance (co-movements strengthening in crises), in climate science (seasonal shifts in temperature--humidity coupling) and in physiology (vital signs decoupling before adverse events). They are harder to detect: dependence may change without any change in correlation, for example through a coupling of amplitudes or tails, and in applications the margins may drift at the same time, so a useful dependence-change statistic must be \emph{specific}: it should not reject under purely marginal changes. This paper proposes DOMIC (density-operator mutual information for change points), an offline, rank-based detector of dependence changes that combines a permutation scan, finite-sample exact for exchangeability groups fixed in advance, with segment statistics computed exactly from prefix sums and an additive cost for multiple breaks.

DOMIC is built on the \textbf{density operator} of a segment: a positive semi-definite, trace-one matrix \(\rho\) obtained by trace-normalising the second moment of a feature embedding of the data. It needs no density estimation or matrix inversion, remains defined when the feature moment is singular, and carries a calculus (von Neumann entropy, Holevo information, partial traces) that reduces to the Shannon calculus for commuting operators (Lemma 1). The density operator also admits a dual Gram-matrix representation: \(\rho\) has the same non-zero spectrum as the trace-normalised Gram matrix of the features, and with the exact kernel its order-one entropies are the matrix-based entropies of {[}14,15{]} (Section 4.8). DOMIC takes the complementary primal route. With unit-norm features the \textbf{partial trace}, which maps a joint state \(\rho_{XY}\) to its marginal \(\rho_X\), is exact at the sample level, so the density-operator mutual information (DOMI) \(I(X{:}Y)=S(\rho_X)+S(\rho_Y)-S(\rho_{XY})\) of every segment is read from prefix sums of small moment matrices, and the limit law developed here is stated for that statistic. Kernel mean-embedding procedures, which dominate nonparametric change-point detection {[}28{]}, use different statistics. All computation is classical; no quantum hardware is involved.

Concretely, we transform each of two variable blocks to ranks
(pseudo-observations), embed the ranks by unit-norm random Fourier
features, and form per-sample pure product states whose segment average
is a separable joint density operator. Ranks make the statistic invariant to strictly increasing marginal transformations over the whole window (Lemma 4(ii)). Figure 1 summarises the procedure.

\begin{figure*}[!t]
\centering
\includegraphics[width=0.85\textwidth,height=0.42\textheight,keepaspectratio]{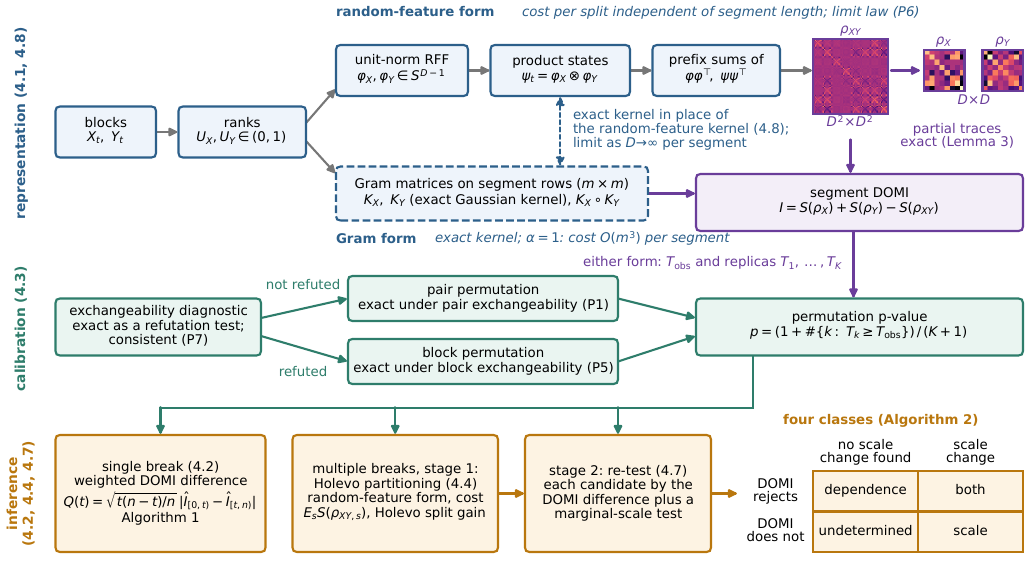}
\caption{Overview of DOMIC in three bands. Top: ranked blocks give the segment DOMI in two computational forms, the random-feature form (solid) and the Gram form (dashed). Middle: an exchangeability diagnostic selects pair or block permutation, which calibrates either form by a permutation p-value; exactness holds for a window and scheme fixed in advance and is not established after the diagnostic's selection; stage-two p-values are nominal. Bottom: the single-break test, and segmentation by Holevo partitioning followed by a re-test that sorts the candidates into four classes.}
\end{figure*}

Two recent segmentation methods are closest in aim: BMCTC {[}37{]}, a moving-cut total-correlation method, segments a moving-window matrix-based total correlation by the Bernaola-Galván rule, and a copula-entropy method {[}39{]} tests equality of whole segment distributions against a fixed threshold; neither provides the finite-sample calibration of a scan established here, and DOMIC, unlike the second, targets dependence changes alone.

DOMIC fits no model of the data-generating process: it is a \emph{statistic} computed on a segment, with a calibration that attaches a p-value and a cost that drives a segmentation, and its comparators are the model-free, retrospective procedures of Section 2, not autoregressive or generalised autoregressive conditional heteroskedasticity (GARCH) models; model-based monitors are compared in Supplementary Section B.4. Because the temporal dynamics are not modelled, the null hypothesis is one of exchangeability, which serial dependence violates; Section 4.3 quantifies the cost and gives a block-permutation calibration that restores the level to near-nominal under short-range serial dependence.

\textbf{(C1) A per-split cost independent of segment length.} Because unit-norm features make the partial trace exact at the sample level (Lemma 3), every segment DOMI of the random-feature form is read from prefix sums of \(D^2\times D^2\) and \(D\times D\) moment matrices, at a cost per candidate split that does not depend on the segment length (Section 4.6). The Gram form, which replaces the random-feature kernel by the exact Gaussian kernel and is the matrix-based mutual information of {[}15{]} on ranks, is more powerful at moderate length and costs \(O(m^3)\) per segment (Section 4.8).

\textbf{(C2) A limit law for the rank statistic at a fixed bandwidth.} For i.i.d. pairs at the independence boundary, with \(D\), the frequency draw and the bandwidth fixed, the random-feature form on ranks taken once over the window has a weighted-\(\chi^2\) limit at rate \(m\) at each fixed split, with a corresponding process limit over the split fractions (P6, P6-R). Simulating that limit calibrates, without permutations, a scan whose null is independence throughout (Section 4.6); the limit is proved at a fixed bandwidth and checked empirically at the deployed one, and it does not calibrate a test between two dependent regimes.

\textbf{(C3) Calibration and segmentation.} Under the null of no change we assume exchangeability, of the pairs of a window or of blocks of consecutive observations. For a window and scheme fixed in advance, pair and block permutation calibrate the scan exactly in finite samples (P1, P5), a general property of permutation tests {[}25{]} that holds for any statistic treated symmetrically. In the procedure an exchangeability diagnostic, consistent against stationary ergodic serial dependence visible to its lag statistics (P7), chooses between the two schemes; exactness after this data-driven choice is not established. The energy-weighted von Neumann cost is additive over segments, with split gain \(E_s\) times a Holevo information, so penalised optimal partitioning applies directly; we call the resulting algorithm Holevo partitioning. Being a functional of the joint state, this cost also responds to marginal change; re-testing each candidate with the DOMI difference lowers the fraction of replicates with a false detection on a marginal-only design from 0.975 to 0.13 (Section 4.7 and Supplementary Table B.7).

\textbf{(C4) Evidence.} On identical rank inputs and under one calibration, DOMI exceeds kernel, distance-correlation, rank and copula statistics by at least 15 percentage points of localised power on the correlation-free alternatives S2 and S4 wherever it or one of these statistics exceeds 0.10, at the reported calibration draw; across three calibration draws the margin on S4 against HSIC ranges from 4.6 to 40.4 points (Section 6.6). On six non-Gaussian dependence changes with standard normal margins the Gram form is the most powerful statistic of Table 1, alone or tied, in all 16 settings in which any statistic exceeds 0.10, and under the tested marginal scale changes the DOMI difference keeps detection rates of 0.032--0.058 (Section 6.2). No method is uniformly best: the copula test of {[}10{]} detects the copula-shape change S3, BMCTC {[}37{]}, calibrated alike, is more powerful on the sign-mixed change S4, and correlation statistics are more powerful than the random-feature form on monotone correlation. At the stronger level of the three-break benchmark MB, Holevo partitioning returns exactly three breaks in 47.5\% of replicates, against at most 19\% for kernel change-point, e.divisive and matrix-information baselines with penalties or significance levels calibrated alike to a nominal 5\% null detection rate (Section 6.3). On serially dependent records, a season-restricted calibration lowers the random-date rejection rate of the dependence test on hourly weather from 9.7\% to 5.4\% at 0.05 (Section 6.7), and block-calibrated primary scans reject in all three pre-specified pairs of a US financial panel (Section 6.8).

\section*{2. Related work}
\addcontentsline{toc}{section}{2. Related work}

\textbf{Change-point detection.} Classical parametric segmentation is
dominated by likelihood costs with penalised dynamic programming,
notably PELT {[}1{]}, and by CUSUM-type statistics; wild binary
segmentation {[}2{]} addresses multiscale localisation.
Covariance-structure breaks are treated in {[}3{]} and {[}4{]}.
Nonparametric approaches include kernel change-point analysis {[}5,6{]},
energy-distance methods {[}7{]}, MMD scan statistics {[}8{]}, classifier- or ratio-based tests such as KL-CPD {[}9{]}, and autoencoder-based detectors such as MC-TIRE {[}38{]}; see {[}28{]} for a survey. All of these target changes in the joint or marginal
distribution; none is specific to dependence.

\textbf{Dependence and copula change points.} Rank-based approaches to dependence-change detection include sequential empirical-copula Cramér--von Mises tests {[}10,11{]} and tests accommodating marginal changes at \emph{known}
instants {[}12{]}. A recent kernel approach detects changes in the
causal dependence of one variable on another, given a designated set of
confounders, through mean embeddings of the corresponding conditional
copulas {[}13{]} --- the prior work closest to ours in problem
formulation. Their statistic is a first-moment (mean-embedding) object; ours is the von Neumann entropy of the second moment. For a joint-state statistic on raw-data features, replacing the entropy by a Frobenius norm costs power on dependence and rotation alternatives (Supplementary Section B.2); this ablation is not run on DOMI itself. We do not run the method of {[}13{]} as a baseline, because its estimand requires a conditioning set that the bivariate designs of Section 5 do not supply.
The test of {[}10{]} ranks the observations within each segment, which keeps its false-alarm rate near nominal under marginal scale changes; the same windowed statistic loses this property when computed from global pseudo-observations (Section 6.2). The copula-entropy method of {[}39{]} responds to marginal changes too, since its null is equality of segment distributions.

\textbf{Model-based dependence monitors.} A parallel econometric
tradition models the dependence parameter itself as time-varying or regime-switching, as DCC-GARCH does {[}29{]}; the CUSUM test of Wied et al.~{[}30{]} addresses
constancy of the correlation coefficient directly. The time-varying and switching models track a chosen parametric dependence measure rather than test for breaks with error control, the CUSUM test targets the correlation coefficient alone, and the correlation-based implementations evaluated here are structurally insensitive to dependence changes that leave the correlation unchanged (Supplementary Section B.4).

\textbf{Matrix-based and kernel information theory.} Matrix-based Rényi
entropy from normalised Gram matrices {[}14{]}, its multivariate
mutual-information extension {[}15{]}, and the kernel von Neumann
information theory of Bach {[}16{]} define information measures on kernel second moments. DOMI belongs to this class. By the duality of Section 4.8,
DOMI computed with the exact Gaussian kernel is the matrix-based mutual
information of {[}15{]} at order \(\alpha=1\), evaluated on ranks, and
the random-feature form of Section 4.1 approximates it. Qian et al.
{[}37{]} apply matrix-based Rényi total correlation to abrupt changes in
the dependence among hydrological series: a moving-window sequence of
the measure is segmented by the Bernaola-Galván algorithm, whose
significance threshold is derived for independent observations. A
moving-window sequence is autocorrelated by construction, so the
threshold is not guaranteed to control the false-alarm rate of the scan;
in our reimplementation it rejects in every null replicate
(Supplementary Section B.13). That work does not examine changes confined to the margins and evaluates a full Gram matrix in every window; to our knowledge no earlier change-point method for this class controls the false-alarm rate in finite samples. On the quantum side,
``quantum change-point'' detection {[}17{]} concerns sequences of
physical quantum states, and quantum-kernel estimation of mutual information {[}18{]} targets static independence testing.

\section*{3. Density-operator statistics for change
points}
\addcontentsline{toc}{section}{3. Density-operator statistics for change
points}

\subsection*{3.1 Segment density
operators}
\addcontentsline{toc}{subsection}{3.1 Segment density operators}

Let \(z_1,\dots,z_n\in\mathbb{R}^p\) be a feature sequence (specified in Section 4). For a segment
\(s=[a,b)\) define the second moment, the energy and the \textbf{density
operator}

\begin{equation}\begin{split}M_s&=|s|^{-1}\sum_{t\in s}z_tz_t^{\top},\\ E_s&=\sum_{t\in s}\lVert z_t\rVert^2=|s|\cdot\mathrm{tr}\,M_s,\\ \rho_s&=M_s/\mathrm{tr}\,M_s.\end{split}\end{equation}

Every \(\rho_s\) is positive semi-definite with unit trace and rank at
most \(\min(|s|,p)\); no inversion or regularisation is involved, so the
construction is well-defined in the singular regime \(|s|<p\). Prefix
sums of \(z_tz_t^{\top}\) give any segment's \(M_s\) in \(O(p^2)\) time.

If \(s=L\sqcup R\)
then the energies \(E_s\) and the unnormalised moments \(|s|M_s\) are
additive, and

\begin{equation}\rho_s = q_L\rho_L + q_R\rho_R,\qquad q_i = E_i/E_s,\end{equation}
the \emph{mixture identity}.

Let \(S(\rho)=-\mathrm{tr}\,\rho\log\rho\) be the von Neumann entropy.

\textbf{Lemma 1 (reduction; Supplementary Section A.1).} If \([\rho_1,\rho_2]=0\)
with a common eigenbasis and eigenvalue vectors \(\lambda_1,\lambda_2\),
then the quantum Jensen--Shannon divergence {[}21{]}, squared Bures
distance, and quantum relative entropy of \((\rho_1,\rho_2)\) equal the
classical Jensen--Shannon divergence, twice the squared Hellinger
distance, and the Kullback--Leibler divergence of
\((\lambda_1,\lambda_2)\), respectively.

Here \(\lambda_1,\lambda_2\) are the coordinates of \(\rho_1\) and \(\rho_2\) in the shared basis, paired index by index, not the independently sorted spectra \(\lambda^{\downarrow}\) of Lemma 2; behaviour beyond the classical case arises from non-commutativity or from a mismatch in that ordering.

\textbf{Lemma 2 (spectral-alignment decomposition; Supplementary Section A.2).} For any density
operators \(\rho_1,\rho_2\),
\(\mathrm{QJSD}(\rho_1,\rho_2)=\mathrm{JSD}(\lambda^{\downarrow}(\rho_1),\lambda^{\downarrow}(\rho_2))+\Delta_{\mathrm{nc}}(\rho_1,\rho_2)\)
with \(\Delta_{\mathrm{nc}}\ge 0\), where \(\lambda^{\downarrow}\)
denotes the sorted spectrum; \(\Delta_{\mathrm{nc}}=0\) if and only if
the two operators commute with similarly ordered spectra.

\(\Delta_{\mathrm{nc}}\) registers a change in \emph{which} directions carry the second moment, not in its sorted spectrum. In the ablation of Supplementary Section B.2 the spectral term alone is powerless on rotation and markedly weaker on dependence without correlation; the gain from \(\Delta_{\mathrm{nc}}\) is not uniform across alternative types.

\subsection*{3.2 A segment-separable entropy cost and the Holevo split
gain}
\addcontentsline{toc}{subsection}{3.2 A segment-separable entropy cost
and the Holevo split gain}

Define the segment cost \(C(s)=E_s\,S(\rho_s)\). The segmentation
objective is the sum of \(C\) over the segments of a partition (the
property optimal partitioning requires), and by (2) and concavity of
\(S\),

\begin{equation}\begin{split}C(s)-C(L)-C(R)&=E_s\,\chi,\\ \chi &= S(q_L\rho_L+q_R\rho_R)-q_LS(\rho_L)\\&\quad-q_RS(\rho_R)\ \ge\ 0,\end{split}\end{equation}
with equality iff \(\rho_L=\rho_R\). The split gain is \(E_s\) times a \textbf{Holevo information} {[}20{]} --- the quantum bound on extractable classical information about the segment label; for segments of equal energy it is the quantum Jensen--Shannon divergence of \(\rho_L\) and \(\rho_R\) (Section 3.4). \(C\) can therefore serve as the segment cost of penalised optimal partitioning, playing the role that \(|s|\cdot\log\det\hat\Sigma_s\) plays for Gaussian likelihood, but bounded, inversion-free, and defined for singular moments.

\subsection*{3.3 Plug-in bias and its
removal}
\addcontentsline{toc}{subsection}{3.3 Plug-in bias and its removal}

\(S(\hat\rho_s)\) is biased downward on short segments, so raw costs of different lengths are not comparable and a raw DOMI plug-in is biased upward under independence. The segmentation cost is corrected by subtracting a mean over permuted copies (Section 4.4); the single-break test is not corrected (Section 4.2).

\subsection*{3.4 Why entropy rather than
Frobenius}
\addcontentsline{toc}{subsection}{3.4 Why entropy rather than Frobenius}

The natural ``classical'' statistic on the same second moment is
\(\lVert M_L-M_R\rVert_F\), equivalent to an MMD with a squared kernel.
A local expansion identifies the directions in which the entropy
functional is more sensitive: for a perturbation \(\delta\) of \(\rho\),

\begin{equation}\mathrm{QJSD}(\rho,\rho+\delta)=\tfrac18\sum_{ij}|\delta_{ij}|^2/L(\lambda_i,\lambda_j)+o(\lVert\delta\rVert^2)\end{equation}
in the eigenbasis of \(\rho\), where \(L\) is the logarithmic mean of
the eigenvalue pair (Supplementary Section A.9); the three functionals
weight the direction \((i,j)\) by the reciprocal logarithmic mean
(QJSD), the reciprocal arithmetic mean (Bures/quantum Fisher) and a constant (Frobenius; Supplementary Figure B.3).

Since \(L(a,b)\le(a+b)/2\), the entropic weighting amplifies highly asymmetric eigenvalue pairs, the low-eigenvalue directions, more than either alternative, though only logarithmically in the eigenvalue ratio; the expansion is local and assumes \(\rho\succ 0\) (Supplementary Section A.9). The expansion motivates the entropy functional without showing that the targeted alternatives perturb those directions, so the case is empirical: in the joint-state comparison of Supplementary Section B.2 it is more powerful on dependence and rotation alternatives, though not on every type.

\subsection*{3.5 Statistical guarantees}
\addcontentsline{toc}{subsection}{3.5 Statistical guarantees}

Seven results, P1--P7, establish the guarantees of the calibration and of the statistic (the \(D\to\infty\) half of P4 only conditionally on a spectral-decay hypothesis); P5 and P7 are stated in Section 4.3 with the calibration they concern. Proofs are in Supplementary Sections A.5--A.8 and A.11--A.13, with standard operator-theoretic tools from {[}26,27{]}; Supplementary Section B.3 gives finite-sample illustrations and simulation studies of properties these proofs do not establish. \textbf{(P1, finite-sample exactness.)} If, under the no-change null, the pairs \((X_t,Y_t)\) of a window fixed in advance are exchangeable, then for
any statistic treating the observed series and \(K\) jointly permuted
replicas symmetrically --- a studentised maximum whose per-\(t\) moments
are computed from all \(K{+}1\) curves qualifies --- the permutation
p-value \((1+\#\{T_k\ge T_{\mathrm{obs}}\})/(K+1)\) satisfies
\(P(p\le\alpha)\le\alpha\) exactly {[}25{]} (Supplementary Section A.5). The pair-permutation tests reported here use that symmetric form; the block-permutation tests rest on P5.

\textbf{(P2, asymptotic normality; Supplementary Section A.6.)} For a stationary \(\alpha\)-mixing series with fixed \(D\), bounded features and a full-rank population moment, the delta method yields \(\sqrt{m}\)-normality of the segment DOMI plug-in; a process-level version (P2$'$) is a theorem for i.i.d. segments and supplies the tightness used in P3. Both are stated under the oracle marginal transform, so their variance need not be that of the rank statistic we compute. At the independence null the influence function vanishes identically, \(\sigma^2=0\), and the correct normalisation is \(m\), not \(\sqrt{m}\); the limit is then the weighted-\(\chi^2\) law of Proposition P6.

\textbf{(P6, degenerate limit law at the independence boundary; oracle
margins.)} Let the pairs be i.i.d. with independent components, keep
\(D\) and the unit-norm feature maps fixed, apply them to the true
marginal transforms, and assume \(\rho_X\) and \(\rho_Y\) nonsingular.
Then the first-order term of the entropy expansion of \(\hat I\)
vanishes \emph{identically} --- a sample-path identity, because Lemma 3
makes \(\mathrm{Tr}_Y\hat\rho_{XY}=\hat\rho_X\) exact and
\(\log(\rho_X\otimes\rho_Y)\) splits as
\(\log\rho_X\otimes I+I\otimes\log\rho_Y\) --- and \(m\,\hat I_m\)
converges in distribution to
\(\tfrac12\big[q_{\rho_X\otimes\rho_Y}(G)-q_{\rho_X}(\mathrm{Tr}_Y G)-q_{\rho_Y}(\mathrm{Tr}_X G)\big]\),
where \(G\) is the Gaussian limit of
\(\sqrt m(\hat\rho_{XY}-\rho_X\otimes\rho_Y)\) and \(q_\rho\) is the
quadratic form with the reciprocal-logarithmic-mean kernel of Section
3.4: a weighted sum of independent \(\chi^2_1\) variables with
non-negative weights, at rate \(m\). Those weights sum to the bias
constant of Lemma 4(iv), whose strict positivity at independence is
computed in closed form at every configuration used here rather than
proved in general (Supplementary Sections A.10 and A.12). A corollary in Supplementary Section
A.12 extends the limit to the process level over split fractions ---
jointly over the two segments, hence to the unstudentised grid scan of
Section 4.6 --- by the same cancellation applied per segment together
with a Donsker argument. Proposition P6-R in the same appendix then
removes the oracle-margin assumption at the boundary: the conclusion, pointwise and at the process level, holds for the statistic as computed at a fixed bandwidth, with ranks taken once over the whole window as in Section 4.6. The rank corrections change the covariance of the Gaussian limit
{[}10,11,34{]} but enter only in the product form (marginal fluctuation \(\otimes\) marginal state), which the limiting quadratic form
annihilates, so the limit law is that of P6. Section 4.6 uses this limit to calibrate the scan without permutations.

\textbf{(P3, detection and localisation consistency.)} Under a fixed
dependence change the criterion process converges uniformly to a
deterministic limit that is bounded away from zero at the true break
fraction, so power tends to one for any threshold sequence \(c_n\to 0\)
with \(\sqrt{n}c_n\to\infty\). Consistency of the break \emph{location} requires in addition an identifiability condition (A3), a unique maximiser of the weighted limit criterion, which we verify numerically for the S2 family (Supplementary Section A.7). Both statements
concern the unstudentised criterion; consistency for the studentised
estimator actually deployed, and localisation \emph{rates}, remain open.

\textbf{(P4, identifiability; Supplementary Section A.8.)} Population independence
implies \(I=0\) exactly, and the converse fails at finite \(D\):
Supplementary Section A.8 constructs a dependent copula density whose
\(D^2\times D^2\) moment equals the independent product's exactly, so
that \(I=0\) although the pair is not independent, for any unit-norm feature map with finitely many components; for a copula density fixed in advance, and frequencies
drawn from any law with a Lebesgue density, \(I>0\) almost surely
already at \(D=4\) per block, by an argument through the Paley--Wiener
theorem and the kernel characterisation of independence {[}22,36{]},
carried through for scalar blocks --- the case of every application in
Section 6, though not of the vector-valued blocks of Supplementary
Section B.6. Whether \(\lim_D I\) exists is settled only under a uniform
spectral-decay hypothesis we do not establish. The detection result and
the limit are proved for the paired feature map, whose norm is constant
by construction; the deployed map divides pointwise by a non-constant norm, so the detection proof does not apply to it directly, and its conclusion is checked numerically on the evaluated alternatives (Supplementary Section A.8). Detection requires only that the two regimes have distinct population \(I\), which Section 6.4 verifies for the two scenario families where the question is delicate. A change between regimes of equal \(I\), such as a Gaussian correlation reversing its sign, is not located by the scan (Supplementary Section B.21).

\section*{4. Density-operator mutual information for dependence
changes}
\addcontentsline{toc}{section}{4. Density-operator mutual information for
dependence changes}

\subsection*{4.1 Construction}
\addcontentsline{toc}{subsection}{4.1 Construction}

Let \((X_t,Y_t)\), \(t=1,\dots,n\), be two \emph{variable blocks} of a
multivariate series (\emph{block} also denotes a run of time points in Section 4.3); the experiments of the main text use scalar blocks, and Supplementary Section B.6 measures power and specificity when the blocks are vector-valued at a fixed feature dimension. Compute
componentwise ranks over the full observation window, scaled to
\((0,1)\) (pseudo-observations), and embed each block with random
Fourier features (RFF) {[}19{]} normalised to unit Euclidean norm,

\begin{equation}\varphi_{X,t}=\varphi(U_{X,t})/\lVert\varphi(U_{X,t})\rVert\in S^{D-1},\end{equation}
and likewise \(\varphi_{Y,t}\). Each
time point contributes a pure product state
\(\psi_t=\varphi_{X,t}\otimes\varphi_{Y,t}\), and a segment defines the
separable joint state and its marginals,

\begin{equation}\rho_{XY}=\mathrm{mean}_t\,\psi_t\psi_t^{\top}\in\mathbb{R}^{D^2\times D^2},\end{equation}
\(\rho_X=\mathrm{mean}_t\,\varphi_{X,t}\varphi_{X,t}^{\top}\) and \(\rho_Y\) likewise.

\textbf{Lemma 3 (exact partial traces; Supplementary Section A.4).} If both blocks'
features are unit-norm,
\(\lVert\varphi_{X,t}\rVert=\lVert\varphi_{Y,t}\rVert=1\) for all \(t\),
then \(\mathrm{Tr}_Y\,\rho_{XY}=\rho_X\) and
\(\mathrm{Tr}_X\,\rho_{XY}=\rho_Y\) (the first identity alone needs only
the second norm condition). \emph{Proof:}
\(\mathrm{Tr}_Y(uu^{\top}\otimes vv^{\top})=\lVert v\rVert^2\,uu^{\top}\);
average over \(t\). $\blacksquare$

The \textbf{segment DOMI} is

\begin{equation}\begin{split}I(X{:}Y)&=S(\rho_X)+S(\rho_Y)-S(\rho_{XY})\\&=S(\rho_{XY}\Vert\rho_X\otimes\rho_Y)\ \ge\ 0.\end{split}\end{equation}

At the population level independence gives
\(\rho_{XY}=\rho_X\otimes\rho_Y\), hence \(I=0\); \(I\) responds to any
dependence expressible in the feature space, not only correlation. (Sample quantities carry hats, \(\hat\rho\), \(\hat I\), where population counterparts appear.)

\textbf{Lemma 4 (elementary properties; Supplementary Section A.10).}
(i) \emph{Range:} \(0\le \hat I\le\min(S(\rho_X),S(\rho_Y))\le\log D\),
the upper bound following from separability of \(\rho_{XY}\) together
with strong subadditivity. (ii) \emph{Invariance:} \(\hat I\) is exactly
unchanged under strictly increasing transformations of either block and
under orthogonal changes of the feature basis; tied values are ranked in time order, which such transformations preserve. The Gram form also ignores the sign of an untied block. (iii) \emph{Consistency:}
\(\hat I\to I\) almost surely under assumptions (A0)--(A1) of Supplementary Section
A.6. (iv) \emph{Bias:} for fixed \(D\) and i.i.d. pairs under (A0) and (A2),
\(\mathbb{E}[\hat I]-I=c/m+o(1/m)\), with \(c\) given in closed form by
the entropy Hessian of Supplementary Section A.0 (Pre-7).

The invariance of (ii) is exact for a transformation applied to the \emph{whole window}; a marginal change part-way through the series is only partly absorbed by the pooled ranks, and Section 6.2 measures the residual. By (i), \(\hat I\) saturates as the dependence approaches a deterministic relation, whereas classical mutual information diverges; for the rank embedding the operative ceiling \(\min(S(\rho_X),S(\rho_Y))\) lies between about 0.8 and 1.3 nats for \(D\) from 4 to 64 and does not grow with \(D\) (Supplementary Section A.10), which compresses differences between strongly dependent regimes (Section 6.4). In (iv), \(c\) was positive at every independence configuration evaluated, and \(c\ge 0\) at the boundary is a theorem (Supplementary Section A.10(iv), via A.12); the plug-in is thus biased \emph{upward} near independence, and Section 4.2 gives a correction for short segments. Under near-deterministic dependence \(c\) changes sign (Supplementary Section A.10). At \(D=8\) under independence the closed form evaluates to \(c\approx3.07\) and the measured \(m\,\mathbb{E}[\hat I]\) is 3.12, a finite-sample comparison.

The kernel bandwidth of the random features is set by the median heuristic on the rank inputs, a rule fixed across data sets and scenarios, so that no reported power is the best of a bandwidth search. Supplementary Section B.7 measures what the rule costs: halving the bandwidth would raise power on every alternative and lower the false-alarm rate, a quarter is better still on the copula-shape alternative and collapses the correlation ones, and the deployed setting is best on none of the alternatives evaluated. Every comparison is made at a common bandwidth; Supplementary Section B.7 also sweeps each statistic over the same grid. The feature dimension is \(D=8\) by default (Supplementary Section B.1 reports the sensitivity to \(D\)).

\subsection*{4.2 Change-point statistics}
\addcontentsline{toc}{subsection}{4.2 Change-point statistics}

For a candidate split \(t\) we use the weighted DOMI difference

\begin{equation}Q(t)=\sqrt{t(n-t)/n}\,\bigl|\hat I_{[0,t)}-\hat I_{[t,n)}\bigr|\end{equation}
(global form; a sliding two-window form is analogous), and, for changes
in the joint state, the Holevo statistic \(n\chi(t)\) of Section 3.2
applied to \(\rho_{XY}\). The plug-in bias of \(\hat I\) can be corrected by subtracting, per segment, the mean DOMI of copies in which the Y rows are permuted --- a correction designed for the independence case that over-corrects under strong dependence (Supplementary Section A.10). The test of Section 4.3 uses the uncorrected \(Q(t)\), since the observed statistic and its replicas carry the same bias.

\subsection*{4.3 Fully data-driven calibration by pair
permutation}
\addcontentsline{toc}{subsection}{4.3 Fully data-driven calibration by
pair permutation}

We calibrate from the observed series alone: generate \(K\) replicas by
\emph{jointly permuting the time indices of the pairs} \((X_t,Y_t)\) --- under a no-change null stated as exchangeability of the pairs, the replicas preserve the dependence level while destroying any change
structure. Studentise \(Q(t)\) pointwise by the mean and standard
deviation taken over all \(K{+}1\) curves, the observed one included,
and reject when the permutation p-value

\begin{equation}p=\frac{1+\#\{k\ge 1:\ T_k\ge T_{\mathrm{obs}}\}}{K+1}\end{equation}
is at most \(\alpha\). Proposition P1 then applies verbatim. What is permuted fixes the null. Permuting only the Y rows generates an \emph{independence} null, which misstates the variability of the statistic when the series is dependent throughout, as a no-change null for a dependence change allows: at the deployed configuration (\(D=8\), \(K=99\), 200 replicates, level 0.05) the Y-only scheme rejects in 96.5\% of replicates under a constant Gaussian-copula null and in 26.5\% under a constant nonlinear-dependence null, against 7.0\% and 6.0\% for pair permutation; under an independent null both are valid (6.0\% and 5.0\%). We use the p-value form (9) throughout; the rule that compares \(T_{\mathrm{obs}}\) with the empirical \((1-\alpha)\) quantile of the replica maxima drops the \(+1\) and is anticonservative at finite \(K\) (Supplementary Section B.3).

\textbf{Serial dependence, and calibration by block permutation.} The
exactness of P1 rests on exchangeability of the pairs, which serial
dependence violates: a stationary series with autocorrelated levels or
clustered volatility is not exchangeable, and the pair permutation then
compares the observed curve against under-dispersed replica curves. Partition the series into consecutive \emph{time} blocks of length \(b\) (runs of observations, not the variable blocks \(X\) and \(Y\)) and permute the block \emph{order}, jointly in \(X\) and \(Y\). Every observation is used exactly once, so this is a genuine permutation, and Proposition P5 (Supplementary Section A.11) gives the finite-sample exactness of P1 over the block-permutation group, under exchangeability of the blocks. A serially dependent series satisfies that only approximately, with an error governed by the dependence carried across a block boundary; exchangeability also requires identically distributed blocks, which a seasonal cycle in variability breaks at any block length comparable with the cycle, however weak the coupling across boundaries. Both failures can be tested: under block exchangeability the permutation distribution of any block-level summary is exact, so a portmanteau statistic on the block means, or on the log block variances, recomputed over permutations of the block order returns an exact p-value even when a window holds few blocks. In Supplementary Table B.5, with first-order autoregressive (AR(1)) margins, block permutation brings the level to within one standard error of nominal at both \(b=20\) and \(b=50\), against 24.5\% for pair permutation; longer blocks carry less dependence across a boundary at the price of a coarser permutation group, while the candidate grid is unaffected (Supplementary Sections A.11 and B.17). We therefore use pair permutation for the synthetic studies, whose null is exchangeable by construction, and block permutation for serially dependent series such as those of Sections 6.7 and 6.8.

Block permutation improves the level but leaves little power at moderate signal: with AR(1) margins and a correlation-free change appearing at the midpoint, it rejects in 7.0\% of replicates at \(a=0.7\), against its own null rate of 6.5--8.5\% in that run (\(K=99\)), and in 19--22\% at \(a=0.9\), a likely consequence of the effective sample size for the mean, which for an AR(1) with \(\phi=0.6\) is 150 at \(n=600\) (Supplementary Section B.17).

\textbf{An exchangeability diagnostic, and the choice of calibration.}
Exchangeability itself can be tested by the same finite-sample logic: the group argument behind P1 places no restriction on the functional (Supplementary Section A.5, Remark on testing exchangeability), so a statistic aimed at serial structure, calibrated by joint pair permutation, is an exact-level test \emph{of} pair exchangeability. Non-rejection does not establish exchangeability, and no test can (same Remark).

We use five statistics computed from the centred rank sequences \(z_t=u_t-\tfrac{1}{2}\) of each margin: Ljung--Box portmanteaux at \(L\) lags on \(z_t\) and on \(z_t^2\) of each margin (the level and scale mechanisms that break the pair calibration), and a two-sided portmanteau on the lagged cross-correlations, each calibrated by pair permutation at \(K=199\) and combined by Bonferroni at level 0.05. If no statistic rejects, we use pair permutation; otherwise block permutation with \(b=\lceil 5\hat\tau_{\mathrm{int}}\rceil\), where \(\hat\tau_{\mathrm{int}}\) is the largest integrated autocorrelation time of the four sequences \(z_t\) and \(z_t^2\), so that the dependence carried across one block boundary is at most of order \(e^{-5}\) under autoregressive decay. For a fixed block length the block test is exact under pair exchangeability as well, the block-permutation group being a subgroup of the full permutation group (Supplementary Section A.11, Remark). The protocol, however, chooses the scheme and the block length from the same data, which Propositions P1 and P5 do not cover, so its level is not proved. On the three null designs of Supplementary Table B.5 it rejects in 4.3--5.5\% of replicates, but in 9.0\% (standard error 2.0\%) with AR(1) margins and independent innovations, against 6.5\% for either fixed block length in the same \(K=199\) run (Supplementary Section B.17). The pair branch rests on a pre-test, and a violation the diagnostic cannot detect distorts the level by an amount it cannot bound.

The protocol selects block permutation in 100\% of AR(1) and 96.5\% of GARCH replicates and in 1\% of serially independent ones. The constant 5 is a floor that keeps the permutation group rich; it is not a guarantee of level, although for an AR(1) with \(\phi=0.6\) it gives \(b=20\), where the level is 5.0\% in Supplementary Table B.5 and 6.0\% in a separate run with \(K=199\) (Supplementary Section B.17). The diagnostic is consistent against every strictly stationary ergodic alternative whose serial structure is visible to one of its statistics at some lag \(k\le L\), by the finite-population permutation moments of Wald and Wolfowitz {[}33{]} (Proposition P7, Supplementary Section A.13).

\subsection*{4.4 Multiple breaks: Holevo partitioning with bias-corrected
costs}
\addcontentsline{toc}{subsection}{4.4 Multiple breaks: Holevo partitioning with
bias-corrected costs}

For multiple change points, apply optimal partitioning with cost
\(C(s)=E_s\,S(\hat\rho_{XY,s})\) on a candidate grid, \emph{after}
subtracting from each segment cost its mean under a small number of
permutations of the whole series (we use 3--5), drawn from the pair or block scheme that the protocol of Section 4.3 selects. For a fixed permutation scheme under its exchangeability null, this centres every split gain at zero in expectation, making a single penalty
\(\beta\) meaningful across segment lengths; \(\beta\) itself is chosen
as the smallest value at which at most a fraction \(\alpha\) of permuted
copies of the observed series produce any change point, with copies drawn from the same permutation scheme as the cost correction. Without the correction we found no stable \(\beta\): the length bias of the raw entropy drives the segmentation from severe over- to global under-segmentation. The partitioning is exact and unpruned: the centred gains can be negative, and we do not establish for them the cost inequality that PELT pruning {[}1{]} requires; its cost is quadratic in the number of grid points (Section 4.6).

\subsection*{4.5 Combining directional and omnibus
statistics}
\addcontentsline{toc}{subsection}{4.5 Combining directional and omnibus
statistics}

DOMI spreads its power over every dependence direction expressible in the \(D^2\)-dimensional feature moment, so on a purely monotone-correlation alternative the random-feature form is generally less powerful than Spearman (scenario S1). Because P1 holds for \emph{any} statistic that treats the observed series and its pair-permuted replicas symmetrically, the combined statistic \(T=\max(T_{\mathrm{DOMI}},T_{\mathrm{Spearman}})\) of the two studentised maxima, computed from the \emph{same} replicas, is finite-sample valid exactly as the individual tests are. At strong signal it attains power close to the better of the two in both regimes, and it loses 14 percentage points (0.23 against 0.37) when the nonlinear signal is weak (Supplementary Section B.18).

\subsection*{4.6 The procedure in summary, and its
cost}
\addcontentsline{toc}{subsection}{4.6 The procedure in summary, and its
cost}

Algorithms 1 and 2 state the single-break test and the two-stage procedure in execution order.

\begin{figure}[!t]
\footnotesize
\hrule\vspace{2pt}
\noindent\textbf{Algorithm 1} Single-break test for a dependence change\\[2pt]
\hrule\vspace{3pt}
\noindent\textbf{Input:} variable blocks $X,Y$ observed at $t=1,\dots,n$; feature dimension $D$; replicas $K$; level $\alpha$; time-block length $b$ ($b=1$ for pair permutation)\\
\textbf{Output:} permutation p-value $p$; break estimate $\hat\tau$\\
1: rank each component over the whole window, scale to $(0,1)$; draw $D$ Fourier frequencies per variable block, bandwidth by the median heuristic (ties and the bandwidth subsample: Supplementary Section A.5)\\
2: form unit-norm features $\varphi_{X,t},\varphi_{Y,t}$, product states $\psi_t=\varphi_{X,t}\otimes\varphi_{Y,t}$, and prefix sums of $\varphi_X\varphi_X^{\top}$, $\varphi_Y\varphi_Y^{\top}$, $\psi_t\psi_t^{\top}$\\
3: evaluate $Q_0(t)$ on the candidate grid by (8), every segment quantity a difference of prefix sums\\
4: \textbf{for} $k=1$ \textbf{to} $K$ \textbf{do} permute the pairs jointly in time blocks of length $b$; recompute $Q_k(t)$ \textbf{end for}\\
5: studentise: $T_k=\max_t\,(Q_k(t)-\bar Q(t))/\mathrm{sd}(Q(t))$, moments taken over all $K{+}1$ curves\\
6: $p\leftarrow(1+\#\{k\ge 1: T_k\ge T_{\mathrm{obs}}=T_0\})/(K+1)$ by (9); $\hat\tau\leftarrow\arg\max_t$ of the studentised observed curve\\
7: \textbf{return} $p$, $\hat\tau$\\[2pt]
\hrule\vspace{8pt}
\hrule\vspace{2pt}
\noindent\textbf{Algorithm 2} Segment and re-test\\[2pt]
\hrule\vspace{3pt}
\noindent\textbf{Input:} variable blocks $X,Y$; grid step; replicas $K$; levels $\alpha$ (stage one) and $q$ (stage two)\\
\textbf{Output:} classified change points\\
1: run the exchangeability diagnostic of Section 4.3; \textbf{if} it rejects \textbf{then} $b\leftarrow\lceil 5\hat\tau_{\mathrm{int}}\rceil$ \textbf{else} $b\leftarrow 1$; for seasonal records, permute within seasons (Section 6.7)\\
2: build the cost matrix $C(s)=E_sS(\hat\rho_{XY,s})$ on a coarse grid\\
3: subtract from each entry its mean over three to five permuted copies of the series, using the centring of Section 4.4\\
4: $\beta\leftarrow$ the smallest penalty at which at most a fraction $\alpha$ of permuted copies yield any break; run optimal partitioning\\
5: \textbf{for each} candidate break \textbf{do} re-test it with Algorithm 1 on a window centred on it, and run the marginal-scale test of Section 4.7 alongside \textbf{end for}\\
6: apply a multiplicity adjustment to the dependence p-values across candidates at level $q$; classify each candidate as a dependence change with no detected scale change, a change in both, a scale change only, or undetermined (the p-values are nominal for the selected windows)\\
7: \textbf{return} the classified candidates\\[2pt]
\hrule
\end{figure}

A segment DOMI needs
two \(D\times D\) eigendecompositions and one \(D^2\times D^2\)
eigendecomposition, so evaluating the whole difference curve costs
\(O(n_{\mathrm{grid}}D^6)\) after an \(O(nD^4)\) pass for the prefix
sums; the permutation calibration multiplies this by \(K{+}1\), and the
segmentation costs \(O(n_{\mathrm{grid}}^2D^6)\) for the cost matrix and
\(O(n_{\mathrm{grid}}^2)\) for the partitioning; at \(n=9600\) a test with \(K=99\) takes 341 s over the dense grid and 5.9 s over 161 splits (Supplementary Section B.9).

\textbf{A permutation-free calibration at the independence boundary.}
When the null is independence throughout and the pairs are i.i.d., as in a test for dependence emerging from an independent regime, Proposition P6-R gives an asymptotic calibration of the scan that needs no permutations: the scan converges to a weighted functional of independent Brownian motions, with weights computed from the whole-series feature moments (Supplementary Section A.12; proved with \(D\), the frequency draw and the bandwidth fixed, checked empirically at the deployed bandwidth). Simulating that functional with 2000 draws gives critical values inside
the exact Monte Carlo confidence interval from \(n=600\) to 38,400 and
levels of 0.051--0.055 for the unstudentised scan, and of 0.051--0.059
when the scan is studentised by the moments of the limit process. The
studentised version is the more powerful of the two limit-calibrated scans; on an S2-type emergence at \(a=0.7\) and \(n=600\) it rejects in 64\% of replicates, at an empirical level of 0.058, against 48\% for the permutation-calibrated scan. The saving over \(K=99\)
permutations is eight- to elevenfold (Supplementary Section B.16). The
calibration covers the independence boundary only; tests of a change between two dependent regimes, and the scan studentised by permutation moments, remain calibrated by permutation. This calibration targets independence throughout the window, so a rejection alone does not establish a change in the dependence.

\subsection*{4.7 Segmenting and re-testing: a two-stage
procedure}
\addcontentsline{toc}{subsection}{4.7 Segmenting and re-testing: a
two-stage procedure}

The cost \(C(s)=E_s S(\hat\rho_{XY,s})\) of Section 4.4 is a functional of the \emph{joint} state, so it responds to a change in either component, marginal changes included (Section 6.2). The DOMI-difference test holds its level under M1, but a segmentation built on it recovers the three breaks of scenario MB far less often (Section 6.3).

\textbf{Stage one} therefore segments with Holevo partitioning, for the power to find a multi-break structure; \textbf{stage two} re-tests each candidate break with the DOMI difference under the permutation calibration of Section 4.3 (block form for serially dependent series, permuted within seasons for seasonal records), for the specificity to flag a change in the coupling.

A candidate that fails stage two is \emph{undetermined}, not refuted:
the coupling may have changed too little to be detected. Likewise, ``no detected marginal change'' is a non-rejection of the marginal test, not a demonstration that the margins held. To separate the two possibilities we run alongside a marginal test of the same form and block structure, a block-permutation test of \(|\log\hat\sigma_L-\log\hat\sigma_R|\), weighted as in Section 4.2, on each variable separately. It holds its level under serial dependence (5.0\% of replicates on an AR(1) margin with \(\phi=0.6\) and no scale change) and rejects in every replicate against an eightfold scale change; it tests the scale of each margin, not the equality of the marginal distributions. The two tests classify each candidate: a significant
dependence change with no detected scale change; a change in both; a
scale change without a significant dependence change; and, when neither
test rejects, undetermined.

Stage two tests a window that stage one selects from the same data.
Propositions P1 and P5 treat the window as fixed, and the selection is
not invariant under the permutations that stage two applies, so the
guarantee does not carry over: the stage-two p-values are nominal and
carry no correction for the selection. Two designs remove the selection effect. Pre-specifying the windows does so at some cost in power, and suits applications where the candidate locations are not themselves of interest. Splitting the series into interleaved blocks, selecting on one set and testing on the other, does so approximately when blocks separated by a gap are
close to independent and each half still holds enough blocks for the
permutation distribution to reach the level. An exploratory resampling check of the whole two-stage procedure, which reuses the observed stage-one calibration, is not established as valid, and its null is the absence of any change, marginal changes included (Supplementary Section B.14). The calibration of stage two can be checked on the record by its rejection rate at dates drawn at random (Section 6.7).

\subsection*{4.8 Two computational forms}
\addcontentsline{toc}{subsection}{4.8 Two computational forms}

\begin{table}[!t]
\centering
\caption{Single-break comparison: localised power at a false-alarm rate of 0.05 (500 replicates; break at the centre, \(\tau=300\); calibration and tolerance of Section 5). Upper block: S1--S4 and the detection rate under the marginal-only control M1 (FA; nominal 0.05). Lower block: G1--G6. DOMI is the random-feature form at \(D=8\), Gram its Gram form, HSIC the Hilbert--Schmidt independence criterion, dCor the distance correlation, Spear.\ the Spearman correlation, CvM the copula test of [10]; one variant per statistic and block (Section 5), raw for CvM and Spearman. Bold: largest value in a row at two decimals (ties bolded); \(\tau\) in S3 and G2--G3 is Kendall's \(\tau\). Further variants and joint-distribution references: Supplementary Tables B.4 and B.6.}
\footnotesize
\setlength{\tabcolsep}{3pt}
\begin{tabular}{l cccccc}
\toprule
Setting & DOMI & Gram & HSIC & dCor & Spear. & CvM \\
\midrule
S1, $r{=}0.35$ & 0.60 & \textbf{0.82} & 0.57 & 0.60 & 0.65 & 0.72 \\
S2, $a{=}0.7$ & 0.35 & \textbf{0.58} & 0.06 & 0.01 & 0.00 & 0.03 \\
S2, $a{=}0.9$ & 0.87 & \textbf{0.98} & 0.66 & 0.07 & 0.00 & 0.05 \\
S3, $\tau{=}0.5$ & 0.01 & 0.01 & 0.02 & 0.01 & 0.00 & \textbf{0.34} \\
S4, $r{=}0.7$ & 0.09 & \textbf{0.12} & 0.01 & 0.01 & 0.00 & 0.02 \\
S4, $r{=}0.85$ & \textbf{0.55} & \textbf{0.55} & 0.39 & 0.24 & 0.00 & 0.01 \\
M1, $s{=}2$ (FA) & 0.05 & 0.04 & 0.07 & 0.04 & 0.03 & 0.05 \\
\midrule
G1, $\nu{=}8$ & \textbf{0.01} & \textbf{0.01} & 0.00 & 0.00 & 0.00 & \textbf{0.01} \\
G1, $\nu{=}4$ & 0.03 & \textbf{0.05} & 0.00 & 0.00 & 0.00 & 0.01 \\
G1, $\nu{=}2$ & 0.33 & \textbf{0.44} & 0.03 & 0.01 & 0.00 & 0.01 \\
G2, $\tau{=}0.2$ & 0.72 & \textbf{0.79} & 0.45 & 0.48 & 0.51 & 0.60 \\
G2, $\tau{=}0.35$ & 0.97 & \textbf{1.00} & 0.65 & 0.68 & 0.88 & 0.91 \\
G2, $\tau{=}0.5$ & \textbf{1.00} & \textbf{1.00} & 0.76 & 0.84 & 0.99 & 0.99 \\
G3, $\tau{=}0.2$ & 0.72 & \textbf{0.86} & 0.65 & 0.56 & 0.55 & 0.63 \\
G3, $\tau{=}0.35$ & 0.97 & \textbf{1.00} & 0.88 & 0.78 & 0.92 & 0.94 \\
G3, $\tau{=}0.5$ & 0.99 & \textbf{1.00} & 0.96 & 0.87 & 0.98 & 0.99 \\
G4, $a{=}0.5$ & 0.97 & \textbf{0.98} & 0.77 & 0.52 & 0.01 & 0.36 \\
G4, $a{=}0.7$ & \textbf{1.00} & \textbf{1.00} & 0.91 & 0.86 & 0.01 & 0.84 \\
G4, $a{=}0.9$ & \textbf{1.00} & \textbf{1.00} & \textbf{1.00} & 0.99 & 0.01 & 0.99 \\
G5, $a{=}0.5$ & 0.94 & \textbf{0.98} & 0.77 & 0.49 & 0.00 & 0.22 \\
G5, $a{=}0.7$ & \textbf{1.00} & \textbf{1.00} & 0.97 & 0.92 & 0.01 & 0.80 \\
G5, $a{=}0.9$ & \textbf{1.00} & \textbf{1.00} & 0.99 & 0.99 & 0.01 & 0.97 \\
G6, $b{=}0.3$ & 0.49 & \textbf{0.65} & 0.16 & 0.18 & 0.00 & 0.10 \\
G6, $b{=}0.5$ & 0.86 & \textbf{0.91} & 0.60 & 0.73 & 0.01 & 0.59 \\
G6, $b{=}0.8$ & 0.94 & \textbf{0.97} & 0.74 & 0.95 & 0.01 & 0.95 \\
\bottomrule
\end{tabular}
\end{table}

\begin{table}[!t]
\centering
\caption{Cost of the two computational forms: wall-clock seconds for one full difference curve over the candidate grid, single core, scenario S2 at $a=0.7$. Asterisk: median per-candidate time times the grid size; --: not run (Supplementary Section B.9).}
\footnotesize
\setlength{\tabcolsep}{2pt}
\begin{tabular}{l cccccc}
\toprule
Form & $n{=}600$ & 1200 & 2400 & 4800 & 9600 & 19200 \\
\midrule
random features, $D{=}8$ & 0.1 & 0.3 & 0.7 & 1.5 & 3.0 & 5.9 \\
random features, $D{=}16$ & 2.0 & 4.4 & 9.1 & 18.2 & 37.5 & 76.0 \\
Gram form & 10.8 & 233$^{*}$ & 3,809$^{*}$ & 78,510$^{*}$ & -- & -- \\
\bottomrule
\end{tabular}
\end{table}

The DOMI of a segment can be computed on either side of a spectral
duality. Stack the unit-norm features of a segment of length \(m\) as
the rows of \(\Phi_X\) and \(\Phi_Y\), and let
\(K_X=\Phi_X\Phi_X^{\top}\) and \(K_Y=\Phi_Y\Phi_Y^{\top}\) be the
\(m\times m\) Gram matrices, which have unit diagonal. Then
\(\rho_X=\Phi_X^{\top}\Phi_X/m\) and \(K_X/m\) have the same non-zero
eigenvalues, and so do \(\rho_{XY}\) and \((K_X\circ K_Y)/m\), where
\(\circ\) is the entrywise product, because \(K_X\circ K_Y\) is the Gram
matrix of the product states \(\psi_t\). The entropies of Section 4.1 can therefore be computed exactly from these Gram matrices. Replacing the random-feature kernel in these Gram matrices by
the Gaussian kernel it approximates, at the same bandwidth, gives a
different statistic, the \textbf{Gram form} of DOMI, which is the
matrix-based mutual information of {[}15{]} at \(\alpha=1\) on rank
inputs. On any fixed segment the random-feature form converges to the
Gram form as \(D\to\infty\), almost surely over the frequency draws: the
normalised random-feature kernel converges to the Gaussian kernel at
each of the finitely many pairs of the segment, and the entropy is continuous (Supplementary Section A.8).
Unqualified, DOMI refers to the random-feature form.

Unlike the random-feature form, whose cost per split does not depend on the segment length (Section 4.6), the Gram form requires eigendecompositions of \(m\times m\) matrices per segment, at cost \(O(m^3)\), and Section 6 evaluates it on every row of every segment; subsampling rows to cap that cost is not a neutral approximation, since a fresh subsample for each candidate split perturbs the difference curve and degrades localisation (Supplementary Section B.10). At \(n=4800\) one difference curve takes 1.5 s in the random-feature form at \(D=8\) and an estimated 22 hours in the Gram form (Table 2). The order-2 Gram variant is the exception: two-dimensional prefix sums give every segment in constant time, but they occupy three \(n\times n\) arrays, about 110 GiB in double precision at the \(n\approx 70{,}000\) of Section 6.7, whereas the random-feature form stores \(O(nD^4)\) (Supplementary Section B.9).

Propositions P1 and P5 require only that the observed series and its
replicas be treated symmetrically, so they hold for the Gram form
without change, as does the rank invariance of Lemma 4(ii). The limit
laws P2 and P6 are proved at fixed \(D\) and therefore describe the
random-feature form, which Holevo partitioning also uses because prefix sums make its \(O(n_{\mathrm{grid}}^2)\) segments practical. The Gram form is available for the single-break test and for stage two of Section 4.7, where the permutation calibration needs no limit law.

\section*{5. Experimental design}
\addcontentsline{toc}{section}{5. Experimental design}

Unless noted otherwise, each series has \(n=600\) and a single break at \(\tau=300\). In S1, S3 and S4 the margins are standard normal in every regime; in S2 the change also alters the shape of the \(Y\) margin, keeping its mean and variance. The scenarios (code labels D1--D4 for S1--S4) are: \textbf{S1} Gaussian
correlation \(0\to r\); \textbf{S2} independence $\to$ uncorrelated
dependence \(Y=\sqrt{1-a^2}\,\varepsilon + a|X|\varepsilon'\);
\textbf{S3} copula-shape change Gaussian $\to$ Clayton at equal Kendall's
\(\tau\); \textbf{S4} sign-mixed dependence at mixture level \(r\)
(overall correlation zero) $\to$ independence; \textbf{M1} specificity
control: marginal variance change only, \(X\perp\!\!\!\perp Y\)
throughout; \textbf{MB} (multi-break) \(n=1800\) with independent /
\(|X|\)-dependent (as in S2) / independent / correlated segments at
\(\tau=450,900,1350\). Dependence-specific baselines on identical rank
inputs: the difference of the Hilbert--Schmidt independence criterion
(HSIC), computed from the same RFF with the cross-covariance Frobenius
norm as the mean-embedding counterpart of DOMI; the distance correlation
difference {[}24{]}; the Spearman difference; and the Cramér--von Mises
statistic of the sequential empirical-copula test {[}10{]}, with ranks
recomputed within each segment as that test prescribes. The HSIC entry ablates the functional and is not the standard full-Gram method, which Supplementary Section B.7 adds with a bandwidth sweep of both statistics.
Joint-distribution references, not specific to dependence: on raw data, MMD {[}23{]} at three bandwidths and a squared-kernel variant and a bivariate Gaussian likelihood ratio; on the rank features, the joint-state Holevo statistic of Section 4.2. Neural detectors such as KL-CPD {[}9{]} target generic distributional change without marginal invariance and are outside the main comparison (MC-TIRE {[}38{]}: Supplementary Section B.20). Power protocol: thresholds set to a
false-alarm rate of 0.05 separately for each method, and separately for
each level, from 500 null replicates of the no-change process that
matches that level --- the alternative's pre-change regime held
throughout, which for S3 and S4 carries dependence and therefore differs
from level to level --- of which the first half supply the pointwise studentising moments and the second half the threshold quantile. For each baseline we report, per table block, whichever of its two variants, the raw weighted global difference or the per-\(t\) studentised one, has the larger mean power over the block. The variant is chosen on the evaluation replicates; for DOMI and its Gram form it is the studentised one that Algorithm 1 fixes in advance, so the choice can only favour the baselines. A rejection counts towards localised power only if \(|\hat\tau-\tau|\le 30\); the fully data-driven
procedure is evaluated separately (Sections 6.5, 6.7, and 6.8).

The Gram form of Section 4.8 is run on the same replicates under the same calibration, with the
Gaussian kernel at the bandwidth the random features approximate, at
\(\alpha=1\) and, as the order-2 matrix-based Rényi variant of
{[}14,15{]}, at \(\alpha=2\). In the non-Gaussian suite G1--G6, \(X\)
and \(Y\) are independent standard normal before \(\tau\) and, after
it, follow a \(t\) copula with zero correlation parameter (G1, degrees of
freedom \(\nu=8,4,2\)); a Gumbel (G2) or Clayton (G3) copula at Kendall's
\(\tau=0.2,0.35,0.5\); a U-shaped dependence
\(Y^*=\sqrt{1-a^2}\,\varepsilon+a(X^2-1)/\sqrt2\) (G4); a periodic
dependence \(Y^*=\sqrt{1-a^2}\,\varepsilon+a\,c\,(\cos\pi X-\mathbb{E}\cos\pi X)\)
with \(c\) setting unit variance (G5; \(a=0.5,0.7,0.9\) in G4 and G5);
or a volatility coupling \(Y^*=\varepsilon\exp(bX-b^2)\) (G6,
\(b=0.3,0.5,0.8\)), with \(\varepsilon\) standard normal and independent
of \(X\). G1, G4, G5 and G6 have zero correlation. In G4--G6, \(Y^*\) is
mapped to a standard normal margin through its distribution function,
estimated once from a fixed Monte Carlo sample of 400,000 draws, so in
all six scenarios both margins are standard normal throughout and only
the dependence changes. The pre-change regime is independence, so the null does not vary with the level.

\section*{6. Results}
\addcontentsline{toc}{section}{6. Results}

\subsection*{6.1 Single break}
\addcontentsline{toc}{subsection}{6.1 Single break}

\begin{figure*}[!t]
\centering
\includegraphics[width=0.80\textwidth,height=0.42\textheight,keepaspectratio]{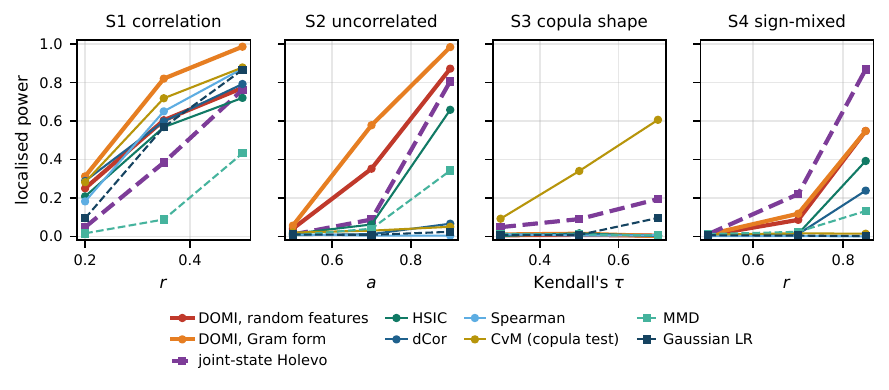}
\caption{Localised power on S1--S4 against the change magnitude (500 replicates; calibration of Section 5), DOMI in both forms (random features at \(D=8\)). The dashed joint-state Holevo, MMD and Gaussian likelihood-ratio curves are joint-distribution references, not dependence-specific: under M1 at \(s=2\) each rejects in every replicate (Supplementary Table B.6). BMCTC is not shown (Supplementary Section B.13).}
\end{figure*}

On the appearance of uncorrelated dependence (S2), DOMI exceeds the best classical dependence statistic by 29 and 21 percentage points at \(a=0.7\) and 0.9 (Table 1). On the disappearance of dependence (S4) at \(r=0.85\) the gain is 15.6 percentage points on the unrounded rates. At \(r=0.7\) localised power is low for every statistic of Table 1 and no separation is claimed; the matrix-information baseline of {[}37{]}, calibrated alike, is more powerful on S4 at both levels (0.23--0.57 here and 0.80--0.94 at \(r=0.85\); Supplementary Section B.13). Single-break scans of the MC-TIRE curves, calibrated alike, attain a localised power of at most 0.022 on the eleven dependence settings of Supplementary Table B.15. On linear correlation (S1) DOMI trails the best classical dependence statistic by 4 to 11 points at each level, distance correlation at \(r=0.2\) and the copula test at \(r=0.35\) and 0.5 (Figure 2). Localisation follows the ordering of power: on S2, and on S4 at its stronger level, DOMI localises more often than the classical statistics, which can detect the change without locating it (HSIC detects 0.90 of the S2 changes at \(a=0.9\) and 0.70 of the S4 changes at \(r=0.85\), Supplementary Table B.10; the median error of DOMI is 24 samples at \(r=0.85\), against 85 at \(r=0.7\), which exceeds the tolerance of 30), while on S1 Spearman localises slightly more accurately (median error 17 versus 22.5 samples at \(r=0.35\)).
The Gram form of DOMI (Section 4.8) attains 0.31 and 0.99 on S1 at \(r=0.2\) and 0.5, against 0.28 and 0.88 for the copula test and 0.32 and 0.91 for the correlation CUSUM of Supplementary Section B.4; its detection rate under M1 is 0.04.

On the non-Gaussian suite (lower block of Table 1), the Gram form attains the largest estimated localised power, alone or tied, in all 16 settings in which some statistic exceeds 0.10, and the random-feature
form attains the largest printed value in five of them. The
random-feature form trails the Gram form by at most 16.6 percentage
points (G6 at \(b=0.3\)) and is at least as powerful as every classical
statistic in 15 of the 16 settings; in the sixteenth, G6 at \(b=0.8\),
the copula test attains 0.952 and distance correlation 0.946 against
0.944. On the correlation-free scenarios G1, G4, G5 and G6 Spearman attains at most 0.012 at every level, and the copula test attains at
most 0.36 at the weakest level of each. At the weakest level of each
scenario at which some statistic exceeds 0.10, the Gram form exceeds the
best classical statistic by 19 to 47 percentage points.

Without the localisation requirement the forms of DOMI remain ahead, or tied at 1.00, where the dependence carries no correlation (Supplementary Table B.10), whereas on S1 and on the Gumbel and Clayton copulas HSIC and distance correlation also detect in at least 93\% of replicates, so the differences Table 1 shows there are differences in localisation. With the break at a quarter or three
quarters of the series, both forms of DOMI still detect in at least
88\% of replicates, but when the longer segment is the dependent one the
studentised maximiser moves towards the centre and fewer than 1\% of estimates fall within the tolerance; HSIC and distance correlation, scanned alike, share this pattern (Supplementary Table B.12). The failure concerns location, not detection, and in the applications location is the task of stage one (Section 4.7). Table 1 is calibrated by Monte Carlo draws from the known null. Under the
pair-permutation calibration the procedure itself uses
(\(K=99\), 200 replicates),
every statistic rejects in 3.0--8.0\% of replicates on three no-change designs, and both forms of DOMI retain higher localised power than the baselines on the three correlation-free alternatives, the Gram form highest. The random-feature form attains rejection and localised rates of 0.435 and 0.245 on S2 at \(a=0.7\) against at most 0.115 and 0.035 for the baselines,
1.000 and 0.690 on S4 at \(r=0.85\) against 0.980 and 0.565 for HSIC,
and 0.995 and 0.675 on G6 at \(b=0.5\) against at most 0.915 (distance correlation) and 0.580 (the copula test). Spearman and the copula test lead on S1, and the copula test on S3 (Supplementary Table B.11).

\subsection*{6.2 Specificity under marginal scale change (M1)}
\addcontentsline{toc}{subsection}{6.2 Specificity under marginal scale change (M1)}

A statistic is \emph{dependence-specific} here if a change confined to
the margins leaves its detection rate near the nominal level; under M1 the DOMI-difference statistic of Section 4.2 meets it, whereas the joint-state Holevo statistic of the same section, a functional of the joint operator \(\rho_{XY}\), does not. With only a marginal variance change, the
DOMI-difference statistic attains detection rates of 0.032--0.058 across
all change magnitudes (nominal 0.05); the joint-state Holevo statistic
reaches 0.51 at \(s=1.3\) and 1.00 at \(s\ge 1.6\). The segmentation cost of Section 4.4 is the same joint-state functional. Running Holevo partitioning on this
design, with the penalty calibrated to a 4\% spurious-detection rate on
the no-change null, returns a break in 45--100\% of replicates. Specificity thus belongs to the DOMI-difference statistic, not to the segmentation cost (Section 4.7). A control that changes the \(Y\) margin exactly as S2 does while keeping \(X\) and \(Y\) independent gives DOMI detection rates of 0.056--0.074 (Supplementary Section B.15), so the high power on S2 is not explained by the marginal-shape change alone; the joint-state Holevo statistic reaches 0.29 there at \(a=0.9\).

The copula test of {[}10{]}, which ranks within each segment, attains
detection rates of 0.04--0.05 across the change magnitudes; computed
from global pseudo-observations, the same statistic reaches 0.07--0.09
at \(s=1.3\) and 0.81--0.87 at \(s=2\), because pooled ranks make the segment copulas marginal-sensitive. Spearman and HSIC also stay near nominal (Table 1), so specificity is shared by all the rank-based dependence statistics of Table 1; what separates DOMI is its power on S2 and S4, where the copula test attains at most 0.05.

M1 keeps the two blocks independent and so cannot show the partial absorption of Section 4.1 under dependence.
Superimposing marginal shifts on an unchanged dependence of \(r=0.7\)
raises the false-alarm rate of the DOMI contrast to 0.085 (an eightfold
scale change) and 0.068 (a \(\tanh\) squash) against a nominal 0.05,
compared with 0.068 and 0.052 in the independent case. The specificity
is therefore approximate, and it is a measured property: neither P1
nor the rank invariance of Lemma 4(ii) implies it, since a marginal change part-way through the window violates exchangeability. On weather margins every statistic compared rejects under a winter-to-summer shift, and a December-to-February shift raises the false-alarm rate of DOMI to 0.110 but not that of the copula test of {[}10{]} (0.035; Supplementary Section B.19).

\subsection*{6.3 Multiple breaks}
\addcontentsline{toc}{subsection}{6.3 Multiple breaks}

\begin{table}[!t]
\centering
\caption{Three-break benchmark MB (200 replicates). Null: achieved null detection rate after calibration to a nominal 5\%. At $a=0.9$: probability of exactly three breaks, of all three within 90 time points of the true breaks, and mean adjusted Rand index (ARI); last column: exactly three at $a=0.8$. --: not computed. More in Supplementary Table B.7.}
\footnotesize
\setlength{\tabcolsep}{3pt}
\begin{tabular}{l ccccc}
\toprule
Method & Null & $P(\hat K{=}3)$ & All three & ARI & $a{=}0.8$ \\
\midrule
Holevo partitioning & 0.000 & 0.475 & 0.450 & 0.62 & 0.010 \\
Rank-Gaussian PELT & 0.020 & 0.010 & 0.000 & 0.32 & 0.010 \\
Kernel binary seg. & 0.050 & 0.115 & 0.050 & 0.48 & 0.015 \\
KCP, raw values & 0.050 & 0.175 & 0.115 & 0.44 & 0.025 \\
KCP, ranks & 0.050 & 0.030 & 0.015 & 0.33 & 0.010 \\
e.divisive, raw values & 0.055 & 0.015 & 0.000 & 0.07 & 0.005 \\
e.divisive, ranks & 0.060 & 0.000 & 0.000 & 0.12 & 0.000 \\
BMCTC, $\alpha{=}1.01$ & 0.050 & 0.190 & 0.115 & 0.27 & 0.055 \\
DOMI binary seg. & 0.040 & 0.040 & -- & -- & -- \\
\bottomrule
\end{tabular}
\label{tab:mb}
\end{table}

Table~\ref{tab:mb} compares the segmentation methods on the three-break benchmark MB (an example segmentation is in Supplementary Figure B.4). With penalties or significance levels calibrated to a nominal 5\% null detection rate for every method, Holevo partitioning at \(a=0.9\) returns exactly three breaks in 47.5\% of replicates and places all three within 90 time points of the true breaks in 45\%; no baseline exceeds 19\% and 11.5\%, respectively. At \(a=0.8\) no method of Table~\ref{tab:mb} exceeds 5.5\%. For Holevo partitioning this probability rises with segment length: at the standard level it reaches 99.5\% at \(n=3600\) with the break fractions held fixed (Supplementary Section B.3). A dependence-specific alternative to the segmentation cost exists: recursive binary segmentation driven by the DOMI difference, with each split tested by permutation at the 5\% level. It is as specific as the statistic it is built on --- false-alarm rates of 0.04--0.05 on the marginal-change design of Section 6.2, where Holevo partitioning reaches 1.00 --- but recovers the exact number of breaks in 4\% of replicates against 47.5\%, finding on average one of the three. Neither construction provides both specificity and multi-break power (Section 4.7).

\subsection*{6.4 Copula shape at fixed Kendall's \(\tau\) (S3)}
\addcontentsline{toc}{subsection}{6.4 Copula shape at fixed Kendall's
\(\tau\) (S3)}

Of the dependence-specific statistics only the copula test detects
this change, at 0.09, 0.34 and 0.61 for \(\tau=0.3\), 0.5 and 0.7; DOMI
attains 0.008, HSIC 0.016 and Spearman 0.004 at \(\tau=0.5\). For DOMI the cause is an unfavourable signal-to-noise ratio, not a vanishing signal. At \(\tau=0.5\) the population DOMI moves
from 0.1936 to 0.2064, a contrast of 0.0128; in scenario S2 at
\(a=0.7\), where power is 0.35, the contrast is 0.0148. The contrasts are of similar size, but holding \(\tau\) fixed forces \emph{both} regimes to carry substantial dependence, and the
sampling standard deviation of the difference of two segment plug-ins at
\(m=300\) is 0.0347 here against 0.0078 in S2, a factor of 4.4. The
standardised contrast is therefore 0.37 against 1.90, an ordering
consistent with the observed powers. The kernel statistics share the difficulty at the deployed bandwidth, the largest HSIC detection rate over the \(\tau\) sweep being 0.148 at \(\tau=0.5\), unaccompanied by localisation; the copula test, which compares the segment copulas directly, does not.
Supplementary Section B.7 shows that a quarter of the median bandwidth
lifts DOMI to 0.19 here, against 0.008 at the deployed setting and 0.03 for the best tuned kernel competitor, consistent with the variance argument.

\subsection*{6.5 Data-driven calibration}
\addcontentsline{toc}{subsection}{6.5 Data-driven calibration}

Each configuration here uses 100 null and 100 alternative replicates (standard error up to 0.05, about 0.02 at the false-alarm rates reported). With pair
permutation at \(K=99\), empirical false-alarm rates for the DOMI
statistic are 0.03--0.06 across all configurations (Supplementary Table B.8), and
0.02--0.09 across DOMI, HSIC and the copula test. Power differs from the Monte Carlo-calibrated values in both directions: on S2 it is lower (0.20/0.67 against 0.35/0.87), on S4 at \(r=0.7\) far higher (0.53 against 0.09). A likely cause is that the permutation replicas, built from the observed series, track the mixture of the two regimes, whereas the Monte Carlo threshold must cover the more variable pre-change regime held throughout; where the regimes differ sharply, as in S4, the permutation calibration, which the applications use, is the more powerful one.

\subsection*{6.6 Variability over feature and calibration draws}
\addcontentsline{toc}{subsection}{6.6 Variability over feature and calibration draws}

Every Monte Carlo-calibrated power above is computed from one draw of the random features and one draw of the null replicates that set the threshold. Supplementary Section B.5 shows that both draws move the powers, and the margins between statistics, well beyond the binomial error: on S4 at \(r=0.85\) the margin of DOMI over HSIC is 4.6, 15.6 and 40.4 percentage points under three calibration draws. The comparative claims of this paper are therefore quoted at the reported calibration draw.

\subsection*{6.7 Application: coupling regimes in surface
weather}
\addcontentsline{toc}{subsection}{6.7 Application: coupling regimes in
surface weather}

We analyse hourly Automated Synoptic Observing System (ASOS) observations of the Korea Meteorological Administration at twelve major stations with essentially complete hourly records, chosen for geographic spread before any analysis, over 2018--2025, about 70,100 hours at each station. The observations are reduced to anomalies against a climatology indexed by hour of day and day of year and smoothed over \(\pm 7\) days within each hour, which removes the diurnal and annual cycles from the level and lets the diurnal amplitude vary with season. Serial dependence differs sharply across variables: on the \(\pm 1\) year windows of stage two the effective sample sizes \(n(1-\hat\rho_1)/(1+\hat\rho_1)\) are 90--241 for temperature (median 142), 257--510 for humidity and 833--2895 for wind (full-record values in Supplementary Section B.8). Pressure is excluded: its lag-one autocorrelation of about 0.997
leaves effective sample sizes of only 104--136 over the full record.

We analyse three variable pairs at each station. Stage one segments with Holevo partitioning on
a two-week candidate grid, the penalty calibrated so that at most 5\% of copies permuted in twelve-week blocks produce any break; this returns 27 candidates
across the 36 station--pair analyses. Stage two re-tests each candidate
with the DOMI-difference statistic on a \(\pm 1\) year window by block
permutation of six-week blocks at \(K=9999\), and runs the
marginal-scale test of Section 4.7 on each variable with the same
blocks. The diagnostic of Section 4.3 rejects pair exchangeability in
all 36 analyses, every permutation p-value at its minimum of \(1/200\),
with recommended block lengths of 4.0--22.2 weeks (median 6.6). The
twelve-week blocks of stage one meet the recommendation in 33 of the 36
analyses, the six-week blocks of stage two in eight.

Block exchangeability is violated in these records mainly through the variances rather than the means. On the twelve-week blocks of the whole record
the exact block-permutation test rejects for the log block variance in
32 of the 36 station--variable series, against 9 for the block mean: the anomaly
transform removes the seasonal cycle from the level of each series but
not from its amplitude. For stage two, at 720 dates drawn at random, 20 per station--pair analysis and at least a
year from either end of the record, the unrestricted block permutation
returns a dependence p-value at or below 0.05 at 9.7\% of the dates, and the smaller of the two marginal p-values is at or below 0.05 at 23.2\% of them.
Permuting blocks only within the four calendar seasons brings the dependence test to 5.4\% at 0.05, and the marginal test to 11.4\%, against 9.75\% for the smaller of two independent p-values at level 0.05. Stage two therefore permutes within
seasons, with every other setting unchanged. These frequencies are a diagnostic of calibration sensitivity, not verified null rejection rates (Supplementary Section B.8).

Under the season-restricted stage two, five candidates reach
\(p\le 0.05\). One passes the Benjamini--Hochberg adjustment across the 27, at \(q\le 0.05\), and none passes the Benjamini--Yekutieli correction at \(q\le 0.05\), which allows arbitrary dependence among the candidates but assumes valid p-values, whereas these are nominal. It is Suwon humidity--wind, 2019-08-26 (Figure 3; dependence \(p=0.0005\); Benjamini--Hochberg \(q=0.0135\), Benjamini--Yekutieli \(q=0.0525\); marginal \(p=0.016\) and 0.042). The next candidate, Jeonju humidity--wind, 2019-10-21, has dependence \(p=0.0179\) (\(q=0.20\)). Tied values are ranked in time order; with ties broken at random, as the equivariance argument of Supplementary Section A.5 requires, the p-values of the five candidates at \(p\le 0.05\) and the candidate counts are unchanged (Supplementary Section B.8). Under the classification rule of Section 4.7 the Suwon candidate is a change in both dependence and scale; the marginal test is still somewhat liberal, so this label may overstate the marginal side, and in simulation a concurrent marginal shift inflates the dependence test (Supplementary Section B.19). Four candidates show a scale change without a significant dependence change, which a single-stage reading would have reported as dependence changes, and 22 are undetermined.

Neither of the two analyses aimed at the selection establishes a dependence change at the candidate dates. Selecting on
alternate six-week blocks and testing on the others leaves about two
blocks per season in each half-window, so the season-restricted test has
6 to 48 distinct permutations; 13 of the 33 test windows cannot reach
\(p\le 0.05\) at all, and no candidate is significant in either
direction. An exploratory resampling check of the whole procedure, which reuses the observed stage-one calibration, has as its null the absence of any change: on
a design with a marginal change only it rejects in 57.5\% of replicates
(Supplementary Section B.14). On these records the version of that check built on the stage-two statistic gives an exploratory \(p=0.105\), and stage one returns candidates in 23 of the 36 analyses against at most 8 in any permuted record; neither result isolates the dependence. The Suwon date is a candidate for domain analysis; attributing a physical mechanism to it is outside the scope of
this paper.

\begin{figure}[!t]
\centering
\includegraphics[width=\columnwidth]{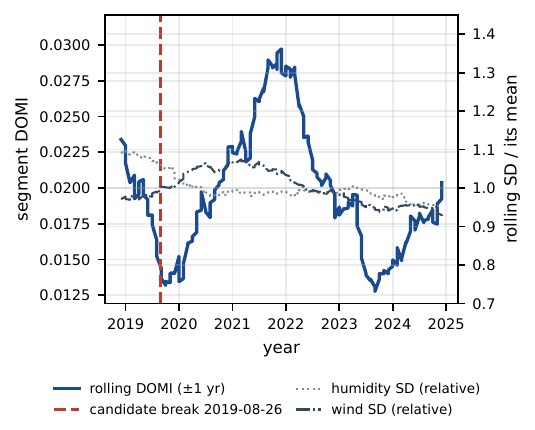}
\caption{The strongest candidate of the hourly weather analysis, Suwon humidity--wind at 2019-08-26: a rolling DOMI on a $\pm 1$ year window, with the marginal standard deviations of humidity and wind, each relative to its mean over the record, as dotted and dash-dotted curves. The rolling curve is a smoothed view whereas the test compares two segments, so a break need not coincide with an extremum of the curve.}
\end{figure}

\subsection*{6.8 Application: financial dependence
regimes}
\addcontentsline{toc}{subsection}{6.8 Application: financial dependence
regimes}

\begin{table}[!t]
\centering
\caption{Pre-specified US financial panel (Section 6.8; daily, 2011-08-23 to 2026-08-20). Stage one: whole-record single-break test (DOMI for gold, the combined statistic otherwise) under block permutation at block length $\hat b$ (days), Benjamini--Hochberg $q$ over the three pairs. Stage two: dependence re-test on the largest centred window of whole blocks (Blocks), Benjamini--Hochberg $q$ over the three pairs; Draws: feature draws with $p\le0.05$.}
\footnotesize
\setlength{\tabcolsep}{2pt}
\begin{tabular}{l ccc cccc}
\toprule
 & \multicolumn{4}{c}{Stage one} & \multicolumn{3}{c}{Stage two} \\
\cmidrule(lr){2-5}\cmidrule(lr){6-8}
Pair & $\hat b$ & $p$ & $q$ & Break & Blocks & $q$ & Draws \\
\midrule
Stocks--gold & 134 & 0.008 & 0.012 & 2013-03 & 4 & 0.098 & 0/20 \\
Stocks--Treasury & 173 & 0.020 & 0.020 & 2020-12 & 16 & 0.012 & 20/20 \\
Stocks--dollar & 133 & 0.005 & 0.012 & 2013-04 & 6 & 0.098 & 5/20 \\
\bottomrule
\end{tabular}
\label{tab:fin}
\end{table}

We fixed an analysis protocol, recorded with a hash and a time stamp before two of the series were obtained (Supplementary Section B.22): daily S\&P 500 returns against gold (GLD; primary statistic DOMI), against 10-year Treasury yield changes and against the dollar index (both with the combined statistic of Section 4.5), one whole-record single-break test per pair under block permutation at the diagnostic's block length (\(K=9999\)), and Benjamini--Hochberg at \(q=0.10\) over the three pairs; the Treasury pair had been examined exploratorily before. All three primary tests reject (Table~\ref{tab:fin}). For stocks and gold DOMI rejects while the whole-record Spearman change-point scan does not (\(p=0.23\)), and the DOMI p-value stays within 0.004--0.015 over 20 further permutation draws. In stage two, the stock--Treasury break is a dependence change without a detected scale change on 16 blocks, with \(p\le 0.05\) in all 20 feature draws; the gold and dollar breaks pass as dependence changes at the pre-set threshold on four and six blocks and would not pass at the \(q=0.05\) used for the weather candidates.

These results also illustrate why the calibration diagnostic matters: volatility clustering leads the diagnostic of Section 4.3 to reject pair exchangeability for every daily and weekly series here, and on weekly S\&P 500 and yield changes pair calibration would have returned four breaks where block calibration at \(\hat b=44\) weeks returns one (2020-02-27; Supplementary Figure B.5). Stage two classifies this break, and the monthly breaks of the stock--bond and KOSPI--USD/KRW pairs, as scale changes or undetermined (Supplementary Section B.8).

\section*{7. Discussion}
\addcontentsline{toc}{section}{7. Discussion}

Four properties hold together in the density-operator framework and, on the evidence of Section 6, in none of the tested baselines: (i) substantial
power on nonlinear, correlation-free dependence changes, where
correlation-based and rank-correlation methods are structurally insensitive; (ii) a
false-alarm rate near nominal under the M1 scale changes for the testing statistic (as for the copula test of {[}10{]}; not under the tested seasonal weather-margin shifts, Supplementary Section B.19), a property the two-stage procedure of Section 4.7 transfers in part to the segmentation, by re-testing and not by construction (false detections on a marginal-only design fall from 0.975 to 0.13; Supplementary Table B.7); (iii) an additive segmentation cost with a direct
information-theoretic interpretation, whose segmentation recovers the
multi-break benchmark more often than the kernel and matrix-information
baselines; and (iv) calibration from the observed series alone, exact in finite samples for windows and schemes fixed in advance under exchangeability. No method in the comparison, ours included, is uniformly best (Section 6.1); when the nature of the change is unknown, Section 4.5 combines DOMI with a correlation-targeted statistic.

The Gram form is at least as powerful as the random-feature form in every setting of Table 1, at the cost shown in Table 2; raising \(D\) from 8 to 16 recovers part of the difference on S2 at \(a=0.7\), all of it at \(a=0.9\) and none on S4 (Supplementary Table B.3), since \(D\) frequencies per variable represent the Gaussian kernel only approximately.

Two open problems are methodological: a selection-adjusted test for the
two-stage procedure that is specific to dependence --- the exploratory whole-procedure check of Supplementary Section B.14 reuses the observed stage-one calibration and responds to marginal change --- and a localiser that stays accurate when
the break is far from the centre. The theoretical ones are a limit law for the Gram form,
which requires the feature dimension to grow with the segment length;
localisation rates; the
mixing-case extension of the process-level Proposition P2$'$ of Supplementary Section
A.6, which needs an invariance principle for bounded strongly mixing
sequences {[}35{]}; a limit theory for the permutation-studentised scan and consistency of Holevo partitioning in the number of segments (numerical evidence for both in Supplementary Section B.3); and a detection-delay
lower bound via the quantum Fisher information, for which Section 3.4 is
the first step. A self-normalising studentisation in the spirit of {[}31,32{]} could add asymptotic validity under mixing to the exactness of P1 under exchangeability, trading the block-length choice for a bandwidth-free normalisation. At a fixed feature dimension, power and specificity both degrade as the variable-block dimension grows (Supplementary Section B.6), which suggests raising \(D\) with the block dimension, at a cost of \(O(D^6)\). The S3 limitation, one of estimation variance at strong dependence (Section 6.4), points to variance reduction, not a larger feature dimension.

\section*{8. Conclusion}
\addcontentsline{toc}{section}{8. Conclusion}

DOMIC treats a segment of a pair of variable blocks as a density
operator and monitors the density-operator mutual information of rank
features for change. The measure belongs to the matrix-based family of
information measures; this paper develops the inference and computation that make it a change-point detector, with finite-sample error control for windows and permutation schemes fixed in advance under the corresponding exchangeability: an exact permutation
calibration and its block form for serially dependent series, an
exchangeability diagnostic that chooses between them, a boundary limit law for the random-feature statistic with \(D\), the frequency draw and the bandwidth fixed, a bias correction, and a
segment-separable cost that drives penalised optimal partitioning. Exact partial traces reduce the random-feature form to prefix sums of small moment matrices; the Gram form computes the measure with the exact kernel under the same calibration.

On the correlation-free alternatives S2 and S4, DOMI gains at least 15 percentage points of localised power over rank-based and kernel statistics computed from the same inputs wherever it or one of them exceeds 0.10, at the reported calibration draw (4.6--40.4 points against HSIC on S4 across three draws, Section 6.6), and its Gram form matches or exceeds every classical statistic in all 16 non-Gaussian settings in which any statistic exceeds 0.10. Correlation-targeted statistics are more powerful than the random-feature form on monotone correlation, and the copula test of {[}10{]} detects copula-shape changes at fixed Kendall's \(\tau\) that DOMI does not detect at the deployed bandwidth. Under season-restricted block calibration, one of 27 candidates passes the multiplicity-corrected re-test on hourly weather records; its p-value is not adjusted for its selection, and neither the split-sample analysis nor the exploratory whole-procedure check establishes it. Under a hash-recorded protocol for US financial series, block-calibrated primary scans reject in all three pre-specified pairs, with the strongest stage-two evidence for stocks and Treasury yields. Extending the test beyond a pair of variable blocks, to joint dependence among three or more, is left to future work.

\section*{Code and data availability}
\addcontentsline{toc}{section}{Code and data availability}

All experiments use fixed random seeds. The implementation, the scripts
that produced every reported number and figure, and the outputs those
scripts wrote are publicly available at
\url{https://github.com/yunam-seo/domic-changepoint}. The synthetic experiments and the numbers cited in the proofs run from the release alone; the real-data analyses of Section 6 and the Supplementary Material also need the openly accessible third-party observations below, which are not redistributed. Hourly surface
observations for the twelve stations over 2018--2025 were obtained from the ASOS service of the Korea Meteorological Administration API Hub (apihub.kma.go.kr; free registration). Daily
closing levels over 2011--2026 are from Yahoo Finance: the S\&P 500
index (\textasciicircum{}GSPC), the CBOE 10-year Treasury yield index (\textasciicircum{}TNX), the KOSPI composite index (\textasciicircum{}KS11), the USD/KRW rate (KRW=X), the SPDR Gold Shares ETF (GLD) and the US Dollar Index (\mbox{DX-Y.NYB}).

\section*{Acknowledgment}
\addcontentsline{toc}{section}{Acknowledgment}

This research was supported by the Global -- Learning \& Academic
research institution for Master's$\cdot$PhD students, and Postdocs (G-LAMP)
Program of the National Research Foundation of Korea (NRF) grant funded
by the Ministry of Education (No.~RS-2026-25610465).

%% file: appendixA.tex
\section{Complete proofs}
\setcounter{equation}{0}\renewcommand{\theequation}{S.\arabic{equation}}
\setcounter{subsection}{-1}

Every result the main text invokes is proved here under the assumptions
stated with it. Where those assumptions differ from the deployed statistic,
the section says so: P2, P3 and Lemma 4(iii)--(iv) assume oracle margins,
and P4(iii) is proved for the paired feature map. Two statements are
\textbf{not} proved and are labelled where they occur: the mixing-case
extension of the process-level Proposition P2$'$ in \ref{sec:A6} (its i.i.d.
form is proved there), and the \emph{strict} positivity of \(c\) at
independence in \ref{sec:A10}(iv) (non-negativity is proved there via
\ref{sec:A12}, and computed counterexamples show that the sign is not
universal away from independence). Part (iv) of \ref{sec:A8} is proved
conditionally on a uniform spectral-decay hypothesis (H) that is not
established.

Where a statement rests on a computed number rather than on its proof alone, the number carries a tag of the form [NV-k], and the script that produces it is in the code release. These calculations illustrate finite-sample behaviour or document evaluated configurations; the mathematical claims rest on the proofs under their stated assumptions. Tags [NV-5] to [NV-7] mark simulation studies of properties the proofs do not establish (Section~\ref{sec:B3}). The permutation propositions P1, P5 and P7 carry no tag; their finite-sample behaviour in the simulated settings is illustrated in Section~\ref{sec:B3}, in Section 4.3 of the main text and in Section~\ref{sec:B17}.

Notation:
matrices are real symmetric unless stated otherwise; \(\rho,\sigma\)
denote density operators (positive semi-definite, unit trace);
\(\lambda^{\downarrow}(A)\) is the eigenvalue vector of \(A\) in
decreasing order; \(H(p)=-\sum_i p_i\log p_i\) (natural logarithm,
\(0\log 0:=0\)); \(S(\rho)=-\mathrm{tr}\,\rho\log\rho\).

\subsection{Preliminaries used throughout}\label{sec:A0}

Preliminaries (\ref{pre:2}), (\ref{pre:3}), (\ref{pre:5}) and (\ref{pre:6}) are standard and are stated without proof: see [27, Ch.~III] for the Ky Fan maximum principle and Ky Fan's majorisation for the eigenvalues of a sum, [27, Ch.~V] for the derivative of a trace function, and [26, Ch.~11] for Klein's inequality with its equality case. Preliminary (\ref{pre:4}), whose equality case Lemma 2 uses, and Preliminary (\ref{pre:7}), on which \ref{sec:A9}, \ref{sec:A10} and \ref{sec:A12} rest, are proved.

\begin{prelim}{1}[Spectral calculus]\label{pre:1}
For symmetric \(A\) with spectral
decomposition \(A=\sum_i\lambda_i u_iu_i^{\top}\) and \(f\) defined on
the spectrum, \(f(A):=\sum_i f(\lambda_i)\,u_iu_i^{\top}\). Then
\(\mathrm{tr}\,f(A)=\sum_i f(\lambda_i)\); in particular \(S(\rho)=H(\lambda(\rho))\).
\end{prelim}

\begin{prelim}{2}[Ky Fan maximum principle]\label{pre:2}
For symmetric
\(A\in\mathbb{R}^{d\times d}\) and \(1\le k\le d\),
\begin{equation}\sum_{i=1}^{k}\lambda_i^{\downarrow}(A)=\max\bigl\{\mathrm{tr}(U^{\top}AU): U\in\mathbb{R}^{d\times k},\ U^{\top}U=I_k\bigr\}.\end{equation}
\end{prelim}

\begin{prelim}{3}[Lidskii/Ky Fan majorisation]\label{pre:3}
For symmetric \(A,B\):
\(\lambda(A+B)\prec\lambda^{\downarrow}(A)+\lambda^{\downarrow}(B)\),
i.e.~for every \(k\),
\begin{equation}\sum_{i\le k}\lambda_i^{\downarrow}(A+B)\ \le\ \sum_{i\le k}\lambda_i^{\downarrow}(A)+\sum_{i\le k}\lambda_i^{\downarrow}(B),\end{equation}
with equality of the full sums at \(k=d\) (both sides equal
\(\mathrm{tr}\,A+\mathrm{tr}\,B\)).
\end{prelim}

\begin{prelim}{4}[Schur concavity of \(H\)]\label{pre:4}
If \(x\prec y\) (both probability vectors) then \(H(x)\ge H(y)\), with equality only when \(x\) is a permutation of \(y\).
\end{prelim}
\begin{proof}
By the
Hardy--Littlewood--Pólya theorem, \(x\prec y\) iff \(x=Dy\) for some
doubly stochastic \(D\). \(H\) is concave and symmetric; by Birkhoff's
theorem, \(D=\sum_{\tau}w_{\tau}P_{\tau}\) (a convex combination of
permutation matrices), so
\begin{equation}H(x)=H\Bigl(\sum_{\tau}w_{\tau}P_{\tau}y\Bigr)\ \ge\ \sum_{\tau}w_{\tau}H(P_{\tau}y)=H(y).\end{equation} Since \(H\) is strictly concave, equality forces \(P_\tau y\) to coincide for every \(\tau\) with \(w_\tau>0\), so \(H(x)=H(y)\) only when \(x\) is a permutation of \(y\).
\end{proof}

\begin{prelim}{5}[Klein's inequality]\label{pre:5}
\(S(\rho\Vert\sigma):=\mathrm{tr}\,\rho(\log\rho-\log\sigma)\ \ge\ 0\)
when \(\mathrm{supp}\,\rho\subseteq\mathrm{supp}\,\sigma\), with
equality iff \(\rho=\sigma\).
\end{prelim}

\begin{prelim}{6}[Derivative of a trace function]\label{pre:6}
Let \(f\) be analytic
on an open subset of \(\mathbb{C}\) containing the spectrum of the
symmetric matrix \(A\). Then \(g(t)=\mathrm{tr}\,f(A+tH)\) is
differentiable with \begin{equation}g'(0)=\mathrm{tr}\bigl(f'(A)\,H\bigr).\end{equation}
\end{prelim}

\begin{prelim}{7}[Second derivative of matrix entropy]\label{pre:7}
With
\(\rho\succ 0\) and \(\delta\) symmetric, in the eigenbasis of \(\rho\)
(eigenvalues \(\lambda_i\)):
\begin{equation}\frac{d^2}{dt^2}S(\rho+t\delta)\Big|_{t=0}=-\sum_{i,j}|\delta_{ij}|^2\,c(\lambda_i,\lambda_j),\qquad c(a,b)=\frac{\log a-\log b}{a-b},\quad c(a,a)=\frac1a.\end{equation}
Here \(c(a,b)\) is the reciprocal of the \textbf{logarithmic mean}
\(L(a,b)=(a-b)/(\log a-\log b)\).
\end{prelim}
\begin{proof}
\(\frac{d}{dt}S(\rho+t\delta)=-\mathrm{tr}\bigl[(\log(\rho+t\delta)+I)\,\delta\bigr]\)
by (\ref{pre:6}). Differentiating again requires
\(\frac{d}{dt}\log(\rho+t\delta)\); use the integral representation
\begin{equation}\log X=\int_0^{\infty}\Bigl[\frac{1}{u+1}I-(uI+X)^{-1}\Bigr]\,du,\end{equation}
whence
\(\frac{d}{dt}\log(\rho+t\delta)\big|_{0}=\int_0^{\infty}(uI+\rho)^{-1}\delta\,(uI+\rho)^{-1}du\).
Therefore
\begin{equation}\begin{aligned}\frac{d^2}{dt^2}S&=-\mathrm{tr}\Bigl[\delta\int_0^{\infty}(uI+\rho)^{-1}\delta\,(uI+\rho)^{-1}du\Bigr]\\
&=-\sum_{i,j}|\delta_{ij}|^2\int_0^{\infty}\frac{du}{(u+\lambda_i)(u+\lambda_j)}=-\sum_{i,j}|\delta_{ij}|^2\,c(\lambda_i,\lambda_j),\end{aligned}\end{equation}
computing the scalar integral
\(\int_0^{\infty}\frac{du}{(u+a)(u+b)}=\frac{\log a-\log b}{a-b}\) for
\(a\ne b\), and \(=1/a\) for \(a=b\).
\end{proof}

\subsection{Lemma 1 (commuting reduction)}\label{sec:A1}

\begin{statement}
Let \(\rho_1,\rho_2\) be density operators with
\(\rho_1\rho_2=\rho_2\rho_1\). Then there is an orthonormal basis
\(\{u_k\}\) with \(\rho_i=\sum_k\lambda_{ik}u_ku_k^{\top}\), and

\begin{enumerate}
\renewcommand{\labelenumi}{(\roman{enumi})}
\item
  \(\mathrm{QJSD}(\rho_1,\rho_2):=S\bigl(\tfrac{\rho_1+\rho_2}{2}\bigr)-\tfrac12S(\rho_1)-\tfrac12S(\rho_2)=\mathrm{JSD}(\lambda_1,\lambda_2)\);
\item
  \(F(\rho_1,\rho_2):=\mathrm{tr}\sqrt{\sqrt{\rho_1}\,\rho_2\sqrt{\rho_1}}=\sum_k\sqrt{\lambda_{1k}\lambda_{2k}}\),
  hence the squared Bures distance \(D_B^2=2(1-F)\) equals twice the
  squared Hellinger distance between \(\lambda_1\) and \(\lambda_2\),
  under the convention
  \(H^2(p,q)=\tfrac12\sum_k(\sqrt{p_k}-\sqrt{q_k})^2\);
\item
  \(S(\rho_1\Vert\rho_2)=\mathrm{KL}(\lambda_1\Vert\lambda_2)\) (when
  \(\mathrm{supp}\,\lambda_1\subseteq\mathrm{supp}\,\lambda_2\)). Here
  \(\lambda_i=(\lambda_{ik})_k\) is the coordinate vector of \(\rho_i\)
  in the common basis \(\{u_k\}\), indexed across \(i\) by the same
  \(k\); it is not the independently sorted spectrum
  \(\lambda^{\downarrow}(\rho_i)\) of the notation, and the two agree
  only when the eigenvalues are similarly ordered along \(\{u_k\}\) ---
  the extra condition Lemma 2 states for its own reduction.
\end{enumerate}
\end{statement}

\begin{proof}
Commuting symmetric matrices are simultaneously
diagonalisable (standard: eigenspaces of \(\rho_1\) are invariant under
\(\rho_2\)). In the common basis every quantity is computed by (\ref{pre:1})
coordinate-wise: \(\tfrac{\rho_1+\rho_2}{2}\) has eigenvalues
\(\tfrac{\lambda_{1k}+\lambda_{2k}}{2}\), so (i) is the definition of
classical JSD. For (ii),
\(\sqrt{\rho_1}\,\rho_2\sqrt{\rho_1}=\sum_k\lambda_{1k}\lambda_{2k}u_ku_k^{\top}\),
whose square root has eigenvalues \(\sqrt{\lambda_{1k}\lambda_{2k}}\).
For (iii), \(\log\rho_i=\sum_k\log\lambda_{ik}\,u_ku_k^{\top}\) and the
trace reduces to the classical sum.
\end{proof}

\subsection{Lemma 2 (spectral-alignment decomposition)}\label{sec:A2}

\begin{statement}
For any density operators \(\rho_1,\rho_2\):
\begin{equation}\mathrm{QJSD}(\rho_1,\rho_2)=\mathrm{JSD}\bigl(\lambda^{\downarrow}(\rho_1),\lambda^{\downarrow}(\rho_2)\bigr)+\Delta_{\mathrm{nc}},\qquad \Delta_{\mathrm{nc}}\ge 0.\end{equation}
Moreover \(\Delta_{\mathrm{nc}}=0\) if and only if \(\rho_1,\rho_2\)
commute and their eigenvalues are similarly ordered along a common
basis; the proof below gives the sufficient direction and the Remark the necessary one.
\end{statement}
\begin{proof}
Since
\(S(\rho_i)=H(\lambda^{\downarrow}(\rho_i))\) by (\ref{pre:1}),
\begin{equation}\Delta_{\mathrm{nc}}=\mathrm{QJSD}-\mathrm{JSD}=S\bigl(\tfrac{\rho_1+\rho_2}{2}\bigr)-H\bigl(\tfrac{\lambda^{\downarrow}_1+\lambda^{\downarrow}_2}{2}\bigr).\end{equation}
By (\ref{pre:3}) with \(A=\rho_1/2\), \(B=\rho_2/2\):
\(\lambda\bigl(\tfrac{\rho_1+\rho_2}{2}\bigr)\prec\tfrac{\lambda^{\downarrow}_1+\lambda^{\downarrow}_2}{2}\).
By (\ref{pre:4}) and (\ref{pre:1}),
\begin{equation}S\bigl(\tfrac{\rho_1+\rho_2}{2}\bigr)=H\Bigl(\lambda\bigl(\tfrac{\rho_1+\rho_2}{2}\bigr)\Bigr)\ \ge\ H\bigl(\tfrac{\lambda^{\downarrow}_1+\lambda^{\downarrow}_2}{2}\bigr).\end{equation}
Hence \(\Delta_{\mathrm{nc}}\ge 0\). If the operators commute with
similarly ordered eigenvalues,
\(\lambda\bigl(\tfrac{\rho_1+\rho_2}{2}\bigr)=\tfrac{\lambda^{\downarrow}_1+\lambda^{\downarrow}_2}{2}\)
exactly and \(\Delta_{\mathrm{nc}}=0\).
\end{proof}
\begin{remark}[the necessary direction]
Since \(H\) is \emph{strictly} Schur-concave,
\(\Delta_{\mathrm{nc}}=0\) forces
\(\lambda^{\downarrow}\bigl(\tfrac{\rho_1+\rho_2}{2}\bigr)=\tfrac{\lambda^{\downarrow}_1+\lambda^{\downarrow}_2}{2}\),
the equality case of (\ref{pre:3}); by (\ref{pre:2}) that requires one
\(k\)-dimensional subspace to attain the maximum against \(\rho_1\) and
against \(\rho_2\) simultaneously, for every \(k\), hence a common
eigenbasis ordering both spectra decreasingly. Commutation on its own is
not sufficient: commuting operators with oppositely ordered spectra give
\(\Delta_{\mathrm{nc}}>0\). Nothing below uses the converse.
\end{remark}

\subsection{Energy-weighted cost and Holevo gain}\label{sec:A3}

Throughout, every part is assumed to have strictly positive energy, so
that all density operators below are defined and the length proportions
are non-degenerate. Let the features be \(z_t\in\mathbb{R}^p\), a
segment \(s\) with \(m=|s|\), \(M_s=m^{-1}\sum_{t\in s}z_tz_t^{\top}\),
\(E_s=m\,\mathrm{tr}\,M_s\), \(\rho_s=M_s/\mathrm{tr}\,M_s\), and cost
\(C(s)=E_s\,S(\rho_s)\).

\textbf{(a) Mixture identity.} If \(s=L\sqcup R\) (disjoint,
\(m=m_L+m_R\)), then \(M_s=(m_LM_L+m_RM_R)/m\), so
\(\mathrm{tr}\,M_s=(E_L+E_R)/m\) and
\begin{equation}\rho_s=\frac{M_s}{\mathrm{tr}\,M_s}=\frac{m_LM_L+m_RM_R}{E_L+E_R}=q_L\rho_L+q_R\rho_R,\qquad q_i=\frac{E_i}{E_L+E_R},\end{equation}
because \(m_iM_i=E_i\rho_i\) (indeed \(M_i=(E_i/m_i)\rho_i\)).\qed

\textbf{(b) Split gain = Holevo information.}
\begin{equation}C(s)-C(L)-C(R)=E_s\bigl[S(q_L\rho_L+q_R\rho_R)-q_LS(\rho_L)-q_RS(\rho_R)\bigr]=:E_s\,\chi.\end{equation}
\begin{proof}
Direct substitution using \(E_s=E_L+E_R\) and (a); the
bracket is the Holevo information of the ensemble \(\{(q_i,\rho_i)\}\). By (c) the gain of this raw cost is non-negative; the bias-corrected cost of Section 4.4 subtracts a permutation mean from each segment cost, so its gains can be negative, and Section 4.4 partitions without pruning.
\end{proof}

\textbf{(c) \(\chi\ge 0\) with equality iff \(\rho_L=\rho_R\).} The
identity
\begin{equation}\chi=\sum_i q_i\,S(\rho_i\Vert\bar\rho),\qquad \bar\rho=\sum_i q_i\rho_i,\end{equation}
holds whenever \(\mathrm{supp}\,\rho_i\subseteq\mathrm{supp}\,\bar\rho\)
(true since \(\bar\rho\succeq q_i\rho_i\)):
\begin{equation}\sum_i q_iS(\rho_i\Vert\bar\rho)=\sum_i q_i\,\mathrm{tr}\,\rho_i\log\rho_i-\mathrm{tr}\Bigl(\sum_i q_i\rho_i\Bigr)\log\bar\rho=-\sum_i q_iS(\rho_i)+S(\bar\rho)=\chi.\end{equation}
Klein's inequality (\ref{pre:5}) gives \(\chi\ge 0\), and \(\chi=0\) forces
\(\rho_i=\bar\rho\) for every \(i\) with \(q_i>0\).\qed

\subsection{Lemma 3 (exact sample-level partial trace)}\label{sec:A4}

\begin{statement}
Let \(\varphi_{X,t},\varphi_{Y,t}\in\mathbb{R}^{D}\)
with \(\lVert\varphi_{X,t}\rVert=\lVert\varphi_{Y,t}\rVert=1\) for all
\(t\). (The identity \(\mathrm{Tr}_Y\rho_{XY}=\rho_X\) needs only
\(\lVert\varphi_{Y,t}\rVert\) to be constant in \(t\), trace
normalisation absorbing the constant; the unit trace of \(\rho_{XY}\),
the symmetric identity \(\mathrm{Tr}_X\rho_{XY}=\rho_Y\), and the
entropy bounds of \ref{sec:A10} use both norms.) Set
\(\psi_t=\varphi_{X,t}\otimes\varphi_{Y,t}\) and
\(\rho_{XY}=m^{-1}\sum_t\psi_t\psi_t^{\top}\). Then
\begin{equation}\mathrm{Tr}_Y\,\rho_{XY}=m^{-1}\sum_t\varphi_{X,t}\varphi_{X,t}^{\top}=:\rho_X.\end{equation}
\end{statement}
\begin{proof}
The partial trace is the linear map determined by
\(\mathrm{Tr}_Y(A\otimes B)=(\mathrm{tr}\,B)\,A\) on elementary tensors.
For each \(t\),
\(\psi_t\psi_t^{\top}=(\varphi_{X,t}\varphi_{X,t}^{\top})\otimes(\varphi_{Y,t}\varphi_{Y,t}^{\top})\)
--- verified entrywise: \((a\otimes b)(a\otimes b)^{\top}\) has entries
\(a_ia_kb_jb_l\) at position \(((i,j),(k,l))\), which is exactly
\((aa^{\top})\otimes(bb^{\top})\). Hence
\begin{equation}\mathrm{Tr}_Y(\psi_t\psi_t^{\top})=\mathrm{tr}(\varphi_Y\varphi_Y^{\top})\,\varphi_X\varphi_X^{\top}=\lVert\varphi_{Y,t}\rVert^2\varphi_{X,t}\varphi_{X,t}^{\top}=\varphi_{X,t}\varphi_{X,t}^{\top}.\end{equation}
Average over \(t\). (Unit trace of \(\rho_{XY}\): \(\mathrm{tr}\,\psi\psi^{\top}=\lVert\psi\rVert^2=\lVert\varphi_X\rVert^2\lVert\varphi_Y\rVert^2=1\).)
\end{proof}

\begin{corollary}[independence \(\Rightarrow\) \(I=0\) at population level]
If \(X\perp\!\!\!\perp Y\) then
\begin{equation}\mathbb{E}[\psi\psi^{\top}]=\mathbb{E}\bigl[(\varphi_X\varphi_X^{\top})\otimes(\varphi_Y\varphi_Y^{\top})\bigr]=\mathbb{E}[\varphi_X\varphi_X^{\top}]\otimes\mathbb{E}[\varphi_Y\varphi_Y^{\top}]=M_X\otimes M_Y,\end{equation}
using independence in the middle equality. The spectrum of
\(A\otimes B\) is \(\{a_ib_j\}\), so
\begin{equation}S(A\otimes B)=-\sum_{ij}a_ib_j(\log a_i+\log b_j)=S(A)+S(B)\end{equation} for unit
traces. Hence \(I=S(M_X)+S(M_Y)-S(M_X\otimes M_Y)=0\).\qed
\end{corollary}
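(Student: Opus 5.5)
The plan is to work directly with the population second moments of the unit-norm features. Set \(M_X=\mathbb{E}[\varphi_X\varphi_X^{\top}]\), \(M_Y=\mathbb{E}[\varphi_Y\varphi_Y^{\top}]\) and \(\rho_{XY}=\mathbb{E}[\psi\psi^{\top}]\). Each has unit trace, since \(\mathrm{tr}\,\varphi\varphi^{\top}=\lVert\varphi\rVert^2=1\) and \(\lVert\psi\rVert^2=\lVert\varphi_X\rVert^2\lVert\varphi_Y\rVert^2=1\), so all three are density operators and no trace normalisation is needed.

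First I identify the marginals that enter \(I\). Take expectations in the pointwise identity of Lemma 3, \(\mathrm{Tr}_Y(\psi\psi^{\top})=\lVert\varphi_Y\rVert^2\varphi_X\varphi_X^{\top}\). Since the partial trace is linear, this gives \(\mathrm{Tr}_Y\rho_{XY}=M_X\), and symmetrically \(\mathrm{Tr}_X\rho_{XY}=M_Y\). So the population DOMI is \(I=S(M_X)+S(M_Y)-S(\rho_{XY})\), with no mismatch between the reduced states and the marginal moments. This is the one place where the unit-norm hypothesis is genuinely used.

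Second I factorise the joint moment. The proof of Lemma 3 shows \(\psi\psi^{\top}=(\varphi_X\varphi_X^{\top})\otimes(\varphi_Y\varphi_Y^{\top})\). Each entry of this matrix is a product \(f(X)g(Y)\) of bounded measurable functions, namely \(\varphi_{X,i}\varphi_{X,k}\) and \(\varphi_{Y,j}\varphi_{Y,l}\). Independence therefore gives \(\mathbb{E}[f(X)g(Y)]=\mathbb{E}f(X)\,\mathbb{E}g(Y)\) entrywise, hence \(\rho_{XY}=M_X\otimes M_Y\). Boundedness, from unit norm, guarantees that every expectation exists.

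Third I show entropy is additive on tensor products. If \(M_X=\sum_i a_iu_iu_i^{\top}\) and \(M_Y=\sum_j b_jv_jv_j^{\top}\), then \(\{u_i\otimes v_j\}\) is an orthonormal eigenbasis of \(M_X\otimes M_Y\) with eigenvalues \(a_ib_j\). By (Pre-1),
\begin{equation*}
S(M_X\otimes M_Y)=-\sum_{ij}a_ib_j\log(a_ib_j)=-\sum_{ij}a_ib_j(\log a_i+\log b_j)=S(M_X)+S(M_Y),
\end{equation*}
where the last step uses \(\sum_ia_i=\sum_jb_j=1\). Substituting into \(I\) gives \(I=0\). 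Equivalently, \(I=S(\rho_{XY}\Vert M_X\otimes M_Y)\), which vanishes because its two arguments coincide.

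The step needing most care is the eigenvalue computation when \(M_X\) or \(M_Y\) is singular. I handle it with the convention \(0\log0=0\): a zero eigenvalue \(a_i\) makes every product \(a_ib_j\) zero, and the expansion \(\log(a_ib_j)=\log a_i+\log b_j\) is applied only to strictly positive products. Both sides then agree term by term. The rest of the argument is routine.
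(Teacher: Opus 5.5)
Your proof is correct and follows the same route as the paper: you factorise \(\mathbb{E}[\psi\psi^{\top}]\) entrywise by independence into \(M_X\otimes M_Y\), then use additivity of the von Neumann entropy over the product spectrum \(\{a_ib_j\}\) for unit-trace factors. Your extra checks only make explicit what the paper leaves implicit: that \(\mathrm{Tr}_Y\mathbb{E}[\psi\psi^{\top}]=M_X\), so the reduced states coincide with the marginal moments, and the \(0\log 0\) convention for singular moments. One small correction: unit norm is not used only at that first check, since it also supplies the unit traces the additivity step needs and the boundedness that makes the expectations exist, as your own first and third paragraphs observe.
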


\subsection{Proposition P1 (finite-sample exactness of the pair-permutation p-value)}\label{sec:A5}

\begin{setting}
Data \(W=(w_1,\dots,w_n)\), \(w_t=(x_t,y_t)\). Null
hypothesis \(H_0\): the law of \(W\) is exchangeable,
i.e.~\(W\circ\sigma:=(w_{\sigma(1)},\dots,w_{\sigma(n)})\) has the same
distribution as \(W\) for every fixed permutation \(\sigma\). Draw
\(\sigma_1,\dots,\sigma_K\) i.i.d. uniform on the symmetric group
\(S_n\), independent of \(W\), and set \(W^{(0)}=W\),
\(W^{(k)}=W\circ\sigma_k\). Let the statistics be produced by a map
\begin{equation}T_k=g\bigl(W^{(k)};\ \{W^{(0)},\dots,W^{(K)}\}\bigr),\end{equation} where \(g\)
may depend on the whole collection only through its \emph{unordered
multiset} (our studentised maximum: per-\(t\) mean and standard
deviation computed from all \(K{+}1\) curves). Define
\begin{equation}p=\frac{1+\#\{k\ge 1: T_k\ge T_0\}}{K+1}.\end{equation}
\end{setting}
\begin{claim}
Under
\(H_0\), \(P(p\le\alpha)\le\alpha\) for every \(\alpha\in(0,1)\);
moreover \(P(p\le\alpha)=\lfloor\alpha(K{+}1)\rfloor/(K{+}1)\) if the
\(T_k\) are a.s. distinct.
\end{claim}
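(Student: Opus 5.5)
The plan is the classical exchangeability argument: first show that the vector \((T_0,T_1,\dots,T_K)\) is exchangeable under \(H_0\), then turn that into a bound on the rank of \(T_0\). Throughout, \(W\circ\sigma\) is read as re-indexing, \((W\circ\sigma)_t=w_{\sigma(t)}\), so that \((W\circ\sigma)\circ\tau=W\circ(\sigma\tau)\), where \(\sigma\tau\) denotes the composition \(\sigma\circ\tau\) (apply \(\tau\), then \(\sigma\)).

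\textbf{Step 1: the augmented sample is exchangeable.} Introduce an auxiliary \(\pi_0\), uniform on \(S_n\) and independent of \((W,\sigma_1,\dots,\sigma_K)\). Under \(H_0\) the law of \(W\circ\pi_0\) given \((\pi_0,\sigma_1,\dots,\sigma_K)\) is the law of \(W\). Hence
\begin{equation}
\bigl(W,\,W\circ\sigma_1,\dots,W\circ\sigma_K\bigr)\overset{d}{=}\bigl(W\circ\pi_0,\,W\circ(\pi_0\sigma_1),\dots,W\circ(\pi_0\sigma_K)\bigr).
\end{equation}
The map \((\pi_0,\sigma_1,\dots,\sigma_K)\mapsto(\pi_0,\pi_0\sigma_1,\dots,\pi_0\sigma_K)\) is a bijection of \(S_n^{K+1}\). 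It therefore carries the uniform law to the uniform law, so \(\rho_0:=\pi_0\) and \(\rho_k:=\pi_0\sigma_k\) are i.i.d. uniform on \(S_n\) and independent of \(W\). It follows that \((W\circ\rho_0,\dots,W\circ\rho_K)\), and so \((W^{(0)},\dots,W^{(K)})\), is an exchangeable sequence.

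\textbf{Step 2: the statistics inherit exchangeability.} Each \(T_k\) is one fixed function \(g\) applied to the \(k\)th entry together with the unordered multiset of all \(K{+}1\) entries. A permutation \(\theta\) of the indices \(\{0,\dots,K\}\) leaves that multiset unchanged. So \((T_{\theta(0)},\dots,T_{\theta(K)})\) is the same measurable map applied to the permuted tuple, and its law equals that of \((T_0,\dots,T_K)\) by Step 1. This is the only point where the multiset restriction on \(g\) enters. It is what lets the studentised maximum, whose per-\(t\) moments use the observed curve itself, qualify.

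\textbf{Step 3: rank counting.} Put \(m=\lfloor\alpha(K+1)\rfloor\) and \(N_j=\#\{k\ne j: T_k\ge T_j\}\). Then \(p\le\alpha\) if and only if \(N_0\le m-1\). Let \(A=\{j: N_j\le m-1\}\). I claim \(|A|\le m\). Take \(j^*\in A\) with the smallest value \(T_{j^*}\) among members of \(A\). Every \(i\in A\setminus\{j^*\}\) has \(T_i\ge T_{j^*}\), so it is counted in \(N_{j^*}\le m-1\), which gives \(|A|\le m\).

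By Step 2, \(P(N_j\le m-1)\) does not depend on \(j\). Therefore
\begin{equation}
P(p\le\alpha)=\frac{\mathbb{E}|A|}{K+1}\le\frac{m}{K+1}\le\alpha .
\end{equation}
If the \(T_k\) are a.s. distinct, \(A\) is exactly the set of the \(m\) largest values, so \(|A|=m\) and equality holds.

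\textbf{Main obstacle.} I expect the delicate step to be Step 1, namely justifying that the observed series can be treated as just another uniformly permuted copy. The re-randomisation by \(\pi_0\) and the bijection on \(S_n^{K+1}\) handle this. The tie-robust counting in Step 3 is elementary once exchangeability is in hand.
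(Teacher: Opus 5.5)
Your proof is correct and has the same structure as the paper's. In both, exchangeability of $(T_0,\dots,T_K)$ follows from null invariance, left-invariance of the uniform law on $S_n$ and the multiset restriction on $g$, and a tie-robust rank count follows (your $N_j$ is the paper's $R_j-1$, and your minimal-element argument proves its bound $\sum_j\mathbf 1\{R_j\le r\}\le r$). The one difference is in Step 1: you get full exchangeability at once by writing the observed record as one of $K+1$ i.i.d.\ uniformly permuted copies via the auxiliary $\pi_0$, whereas the paper swaps coordinates $0$ and $j$ through $\sigma_j^{-1}$ and uses that the transpositions $(0\,j)$ generate $S_{K+1}$.
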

\begin{proof}
\textbf{Step 1 (re-indexing).} Fix \(j\in\{1,\dots,K\}\) and swap the roles of coordinates \(0\) and \(j\): replace \(W\) by \(W\circ\sigma_j\), \(\sigma_j\) by \(\sigma_j^{-1}\), and each other \(\sigma_k\) by \(\sigma_j^{-1}\circ\sigma_k\). With the convention \((W\circ\sigma)\circ\pi=W\circ(\sigma\circ\pi)\) this leaves every record \(W^{(k)}\), \(k\ne 0,j\), unchanged and interchanges \(W^{(0)}\) and \(W^{(j)}\). It also leaves the joint law of the record and the permutation draws unchanged: conditionally on \(\sigma_j=s\), \(W\circ s\) has the law of \(W\) under \(H_0\), and left translation by \(s^{-1}\) preserves the uniform law on \(S_n\), so the new draws are again i.i.d. uniform and independent of the new record. Since \(g\) sees the collection only as an unordered multiset, the law of \((T_0,\dots,T_K)\) is invariant under the transposition \((0\,j)\); these transpositions generate \(S_{K+1}\), so \((T_0,\dots,T_K)\) is exchangeable.

\textbf{Step 2 (rank argument, with ties).} Put \(R_j=\#\{k: T_k\ge T_j\}\), so \(p=R_0/(K+1)\); ties are counted in full. For any fixed real vector and any integer \(r\ge 0\), \(\sum_j\mathbf 1\{R_j\le r\}\le r\), because \(R_j\le r\) holds exactly for those \(j\) in the union of the top tie-groups whose cumulative size is at most \(r\). By Step 1 every term has the same expectation, so \(P(R_0\le r)\le r/(K+1)\), and \(r=\lfloor\alpha(K{+}1)\rfloor\) gives \(P(p\le\alpha)\le\alpha\). If the \(T_k\) are a.s. distinct, the first inequality is an equality.
\end{proof}
\begin{remark}
The proof covers the symmetric studentisation used throughout, in which the studentising moments are computed from all \(K{+}1\) curves; the leave-one-out variant is not covered and is not used.
\end{remark}

\begin{remark}[testing exchangeability itself]
The proof above places
no restriction on the statistic: the map \(g\) is arbitrary, so the
permutation p-value of \emph{any} functional of the sample is exactly
valid under exchangeability. Taking the functional to be a measure of
serial structure --- an autocorrelation portmanteau of the centred
ranks, say --- therefore yields a finite-sample exact test \emph{of}
exchangeability; the diagnostic protocol of Section 4.3 applies this
with five such statistics and controls its family-wise level by the
union bound. The test is a refutation instrument only, and the
certification direction is impossible rather than merely unproven. Let
\(\varphi\) be any test with \(E[\varphi]\le\alpha\) under every
exchangeable law, let \(x\) be a sample path with at least two distinct
coordinates, and let \(O_x=\{x\circ\sigma:\sigma\in S_n\}\) be its
permutation orbit. The uniform distribution on \(O_x\) is exchangeable,
so the level constraint gives
\(|O_x|^{-1}\sum_{z\in O_x}\varphi(z)\le\alpha\). Each point mass
\(\delta_z\) with \(z\in O_x\) is a non-exchangeable law, and the power
of \(\varphi\) against it is \(\varphi(z)\); the constraint says exactly
that the average power over these alternatives is at most \(\alpha\) on
every orbit --- a non-randomised test rejects on at most an \(\alpha\)
fraction of each orbit, so the test has zero power against at least a
\(1-\alpha\) fraction of the point-mass alternatives. Hence no procedure
can certify exchangeability from a single realisation, and non-rejection carries no positive content.
\end{remark}

\begin{remark}[randomness drawn independently of the data]
The statistic may use randomness \(V\) drawn independently of the data: the index subsample from which the median-heuristic bandwidth is taken, or a uniform jitter that breaks tied values at random. Apply the proof above, or that of \ref{sec:A11}, to the augmented record \((w_t,V_t)_t\), permuted jointly. Under the null it is exchangeable, because \(V\) is independent of the data and invariant in law under permutation, so the p-value of any statistic of the augmented record is exact; the guarantee averages over \(V\), as it does over the permutation draws. Computing the features once and permuting them is that test when the features of a permuted record are the permuted features. This holds for the bandwidth subsample, whose rank values move with the permutation, and for random tie-breaking; it fails for tied values ranked in time order, where the observed rank vector increases with time within each tie and a permuted copy in general does not. The weather stage two of Section 6.7 computes the features once, ranks ties in time order, and is therefore not covered as implemented; Section~\ref{sec:B8} repeats it with ties broken at random.
\end{remark}

\subsection{Proposition P2 (asymptotic normality of the segment DOMI plug-in)}\label{sec:A6}

\begin{assumptions}
(A0) \emph{Oracle probability integral transform.}
The features are formed from the true marginal transforms,
\(\varphi_{X,t}=\varphi(F_X(x_t))\) and likewise for \(Y\), rather than
from empirical ranks. (A1) Within the segment,
\(\psi_t=\varphi_{X,t}\otimes\varphi_{Y,t}\) are i.i.d. (or strictly
stationary, ergodic and \(\alpha\)-mixing at a rate sufficient for the
multivariate central limit theorem,
\(\sum_k\alpha(k)^{\delta/(2+\delta)}<\infty\) for some \(\delta>0\),
the accompanying moment condition being automatic from the unit norm of
(A2$'$); the CLT then holds with the long-run variance). (A2) The
population moments \(M_X=\mathbb{E}[\varphi_X\varphi_X^{\top}]\),
\(M_Y\), \(M_{XY}=\mathbb{E}[\psi\psi^{\top}]\) are positive definite.
(A2$'$) The features have unit norm,
\(\lVert\varphi_{X,t}\rVert=\lVert\varphi_{Y,t}\rVert=1\); this gives
boundedness, \(\lVert\psi\psi^{\top}\rVert\le 1\); the unit norm is used
again in Step 2, for the cancellation of the identity terms, and in the boundary remark below.
\end{assumptions}

\begin{statement}
With
\(\hat I=S(\hat M_X)+S(\hat M_Y)-S(\hat M_{XY})\) (unit-norm features
give \(\mathrm{tr}\,\hat M=1\) exactly, so no trace normalisation is
needed) and \(I\) its population value,
\begin{equation}\sqrt{m}\,(\hat I-I)\ \overset{d}{\longrightarrow}\ N(0,\sigma^2),\qquad \sigma^2=\sum_{k\in\mathbb{Z}}\mathrm{Cov}\bigl(h_c(\psi_0),h_c(\psi_k)\bigr),\end{equation}
which reduces to \(\mathrm{Var}[h_c(\psi_1)]\) in the i.i.d. case, where
\(h_c=h-I\) is the centred influence function and
\begin{equation}h(\psi)=-\langle\log M_X,\varphi_X\varphi_X^{\top}\rangle-\langle\log M_Y,\varphi_Y\varphi_Y^{\top}\rangle+\langle\log M_{XY},\psi\psi^{\top}\rangle,\end{equation}
with \(\langle A,B\rangle=\mathrm{tr}\,AB\). Since
\(\mathbb{E}[\psi\psi^{\top}]=M_{XY}\) and \(\langle\log M,M\rangle=-S(M)\),
\(\mathbb{E}\,h(\psi)=S(M_X)+S(M_Y)-S(M_{XY})=I\), so \(h\) itself is not
centred unless \(I=0\), and every sum below is a sum of \(h_c\). The limit
is a non-degenerate normal law provided \(\sigma^2>0\); if \(\sigma^2=0\)
the display holds with the point mass at zero as its limit.
\end{statement}
\begin{proof}
\textbf{Step 1 (CLT for moments).} We give the i.i.d.
case; under the mixing alternative in (A1) the same step holds by a CLT
for stationary \(\alpha\)-mixing sequences with summable coefficients,
with the long-run covariance in place of \(\Sigma_M\).
\(\hat M-M=m^{-1}\sum_t(\psi_t\psi_t^{\top}-M)\) is a mean of i.i.d.
bounded random matrices; the multivariate CLT gives
\(\sqrt{m}\,\mathrm{vec}(\hat M-M)\to N(0,\Sigma_M)\) jointly for the
three moment matrices (they are deterministic linear images of the same
per-sample tensor, since
\(\varphi_X\varphi_X^{\top}=\mathrm{Tr}_Y(\psi\psi^{\top})\) by \ref{sec:A4}, so
joint normality is automatic).

\textbf{Step 2 (differentiability).} On
the open set \(\{M\succ 0\}\), \(M\mapsto S(M)=-\mathrm{tr}\,M\log M\)
is Fréchet differentiable with derivative
\(H\mapsto-\mathrm{tr}\bigl((\log M+I)H\bigr)\) by (\ref{pre:6})
(\(f(x)=-x\log x\) is analytic on \((0,\infty)\), which contains the
spectrum by (A2); eventually \(\hat M\succ 0\) a.s.). In the directional
derivative of \(\hat I\) the identity terms cancel, because
\(\mathrm{tr}\bigl(I\,(\varphi\varphi^{\top}-M)\bigr)=0\); using the
exact linearity \(\hat M_X=\mathrm{Tr}_Y(\hat M_{XY})\) of \ref{sec:A4}, the
three directional terms collapse to
\(DI[\hat M-M]=m^{-1}\sum_t h_c(\psi_t)\): each observation contributes
\(h(\psi_t)\), and the three population terms
\(\langle\log M_X,M_X\rangle+\langle\log M_Y,M_Y\rangle-\langle\log M_{XY},M_{XY}\rangle=-I\)
supply the centring.

\textbf{Step 3 (delta method).}
\(\sqrt{m}(\hat I-I)=m^{-1/2}\sum_t h_c(\psi_t)+\sqrt{m}\,R\), the first
term converging to \(N(0,\sigma^2)\) by the central limit theorem. The
remainder satisfies \(|R|\le C\lVert\hat M-M\rVert^2\) on the event
\(\lVert\hat M-M\rVert\le\varepsilon\) (the second derivative is bounded
on a compact neighbourhood of the population moments inside the
positive-definite cone, by (\ref{pre:7}) with all eigenvalues bounded below),
and \(\sqrt{m}\lVert\hat M-M\rVert^2=O_p(m^{-1/2})\to 0\); Slutsky's
lemma then completes the proof.
\end{proof}

\emph{Boundary remark: at the independence null the limit of P2 is the point mass at zero.} If \(M_{XY}=M_X\otimes M_Y\), which by Klein's inequality (\ref{pre:5}) is equivalent to \(I=0\), then \(\log M_{XY}=\log M_X\otimes I+I\otimes\log M_Y\), and for unit-norm features
\begin{equation}\begin{aligned}\langle\log M_{XY},\psi\psi^{\top}\rangle&=\langle\log M_X,\varphi_X\varphi_X^{\top}\rangle\lVert\varphi_Y\rVert^2+\lVert\varphi_X\rVert^2\langle\log M_Y,\varphi_Y\varphi_Y^{\top}\rangle\\
&=\langle\log M_X,\varphi_X\varphi_X^{\top}\rangle+\langle\log M_Y,\varphi_Y\varphi_Y^{\top}\rangle,\end{aligned}\end{equation}
so \(h\equiv 0\): the first-order term vanishes and \(\sigma^2=0\) at every law with \(I=0\), including the dependent moment-matched laws of \ref{sec:A8}(ii). At independence the correct normalisation is \(m\) and the limit is the weighted-\(\chi^2\) law of P6 (Section~\ref{sec:A12}); at other laws with \(I=0\) the \(m\) scale is again the relevant one, but the limit is not covered by P6, whose setting assumes independent components. Section~\ref{sec:B3} shows the two scalings numerically.

\emph{On (A0).} With global empirical ranks the summands are, for i.i.d. pairs, exchangeable but not independent, and the estimated margins add a Hájek-projection term to the limit variance, so \(\sigma^2\) above is not claimed to be the variance of the rank-based statistic. The deployed calibration rests on P1, which uses only exchangeability of the pairs.

\begin{proposition}{P2$'$}[process-level limit away from the boundary; i.i.d. segments, oracle margins]
Under (A0), (A2), (A2$'$), with
\(\{\psi_t\}\) i.i.d. and \(0<\varepsilon<\tfrac12\), and with
\(\sigma^2=\mathrm{Var}[h_c(\psi_1)]\ge0\)
(for \(\sigma^2=0\) the limit below is the zero process):
writing \(\hat I_L(\lambda)\), \(\hat I_R(\lambda)\) for the two segment
plug-ins at split fraction \(\lambda\), jointly in
\(\ell^{\infty}[\varepsilon,1-\varepsilon]\),
\begin{equation}\sqrt n\,\big(\hat I_L(\lambda)-I,\ \hat I_R(\lambda)-I\big)\ \Rightarrow\ \Big(\frac{\sigma\,B(\lambda)}{\lambda},\ \frac{\sigma\,(B(1)-B(\lambda))}{1-\lambda}\Big),\end{equation}
with \(B\) a standard Brownian motion. Consequently
\(\sqrt n\,(\hat I_L-\hat I_R)\) converges to the continuous Gaussian
process
\(\sigma\big[B(\lambda)/\lambda-(B(1)-B(\lambda))/(1-\lambda)\big]\),
and for any continuous weight \(w\),
\(\sup_{\lambda\in[\varepsilon,1-\varepsilon]}\sqrt n\,w(\lambda)\,|\hat I_L(\lambda)-\hat I_R(\lambda)|=O_p(1)\).
\end{proposition}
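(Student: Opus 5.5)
The plan is to lift the pointwise delta-method argument of P2 to the process level through a uniform linearisation plus Donsker's theorem for the partial-sum process of the centred influence function. Write \(S_n(\lambda)=n^{-1/2}\sum_{t\le\lfloor n\lambda\rfloor}h_c(\psi_t)\). Since \(h_c\) is bounded by (A2\('\)) and (A2), Donsker gives \(S_n\Rightarrow\sigma B\) in \(\ell^\infty[0,1]\). For \(\sigma^2=0\), \(h_c(\psi_t)=0\) a.s., so \(S_n\equiv0\) and the same argument yields the zero process.

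\textbf{Step 1 (linearisation on each segment).} For the left segment of length \(m_L=\lfloor n\lambda\rfloor\), Step 2 of P2 gives
\[
\hat I_L(\lambda)-I=m_L^{-1}\sum_{t\le m_L}h_c(\psi_t)+R_L(\lambda).
\]
Here \(|R_L(\lambda)|\le C\lVert\hat M_L(\lambda)-M\rVert^2\) on the event that \(\hat M_L(\lambda)\) lies in a fixed compact neighbourhood of the population moments inside the positive-definite cone. This uses the bounded second derivative from Pre-7, and the identity-term cancellation from the exact partial trace of Lemma 3. The right segment is handled in the same way, with the sum over \(t>m_L\).

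\textbf{Step 2 (uniform control of the remainder).} This is the main obstacle. We need \(\sup_{\lambda\in[\varepsilon,1-\varepsilon]}\sqrt n\,|R_L(\lambda)|=o_p(1)\), and likewise for \(R_R\). Apply the same Donsker argument (or a maximal inequality such as Kolmogorov's applied entrywise) to the matrix-valued partial sums \(\sum_{t\le k}(\psi_t\psi_t^\top-M_{XY})\). This gives \(\sup_{k\ge\varepsilon n}\lVert\hat M_{[0,k)}-M\rVert=O_p(n^{-1/2})\), because the partial sums are \(O_p(\sqrt n)\) uniformly and the segment lengths are at least \(\varepsilon n\). The bound transfers to the marginal moments, since they are fixed linear images under the partial trace. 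Two consequences follow. First, with probability tending to one all segment moments lie in the compact neighbourhood, so every \(\hat M\succ0\) and the bound of Step 1 applies simultaneously for all \(\lambda\). Second, \(\sqrt n\sup|R|\le C\sqrt n\,O_p(n^{-1})\to0\).

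\textbf{Step 3 (continuous mapping).} The leading terms equal \((n/m_L)\,n^{-1/2}S_n(\lambda)\) and \(\bigl(n/(n-m_L)\bigr)\bigl(S_n(1)-S_n(\lambda)\bigr)\). Moreover \(n/m_L\to1/\lambda\) uniformly on \([\varepsilon,1-\varepsilon]\), with floor effects of order \(O(1/n)\). The map
\[
x\mapsto\bigl(x(\lambda)/\lambda,\ (x(1)-x(\lambda))/(1-\lambda)\bigr)
\]
is continuous from \(\ell^\infty[0,1]\) to \(\ell^\infty[\varepsilon,1-\varepsilon]^2\). Slutsky's lemma in \(\ell^\infty\), combined with the continuous mapping theorem, therefore gives the joint limit. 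Both coordinates are driven by one Brownian motion, which makes the convergence joint, and the limit is continuous.

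\textbf{Step 4 (consequences).} Taking the difference of the two coordinates is again a continuous map, which gives the stated Gaussian limit for \(\sqrt n(\hat I_L-\hat I_R)\). For a continuous \(w\), the weight is bounded on the compact interval. The supremum of \(|w|\) times a process converging weakly in \(\ell^\infty\) to a tight limit is therefore \(O_p(1)\), by the continuous mapping theorem applied to the sup norm.
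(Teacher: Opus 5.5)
Your proposal is correct and follows the paper's proof essentially step for step. Both arguments use Donsker's theorem for the partial sums of \(h_c\) and the per-segment algebraic collapse of P2's Step 2. Both then bound the segment moments uniformly by \(O_p(n^{-1/2})\), which gives \(\sup_\lambda\sqrt n\,|R(\lambda)|=O_p(n^{-1/2})\), and finish with the continuous mapping theorem.

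One small slip in Step 3: the left leading term of \(\sqrt n(\hat I_L-I)\) is \((n/m_L)S_n(\lambda)\), without the extra factor \(n^{-1/2}\). That matches the normalisation you already use for the right segment.
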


\begin{proof}
Donsker's invariance principle for the i.i.d. bounded
centred scalars \(h_c(\psi_t)=h(\psi_t)-I\) gives
\(n^{-1/2}\sum_{t\le\lambda n}h_c(\psi_t)\allowbreak\Rightarrow\sigma B(\lambda)\)
with a continuous limit, jointly with the matrix partial-sum process as
in the Corollary of \ref{sec:A12}, and
\(\sup_{\lambda\ge\varepsilon}\|\hat M_{[0,\lambda n)}-M\|=O_p(n^{-1/2})\).
Step 2 of P2 --- the collapse of the three directional terms to the
segment average of the centred influence function \(h_c\) --- is an
algebraic identity that holds
per segment, and Step 3's remainder bound holds uniformly on the event
that every segment moment lies in a fixed positive-definite
neighbourhood of \(M\) (probability tending to one):
\(\sup_\lambda\sqrt n\,|R(\lambda)|\le C\sqrt n\,\sup_\lambda\|\hat M_L(\lambda)-M\|^2=O_p(n^{-1/2})\).
Hence
\(\sqrt n(\hat I_L(\lambda)-I)=\lambda_n^{-1}\,n^{-1/2}\sum_{t\le\lambda n}h_c(\psi_t)+o_p(1)\)
uniformly, where \(\lambda_n=\lfloor\lambda n\rfloor/n\); the right
segment is handled symmetrically via the complementary partial sum, and
the continuous mapping theorem gives the display and the supremum bound.
\end{proof}

\begin{remark}[mixing, ranks, scope]
Under the mixing alternative of (A1) an invariance principle for bounded strongly mixing sequences [35] would replace Donsker's theorem; that extension, and a version for empirical ranks, are not proved here, and \ref{sec:A7} uses only the i.i.d. case.
\end{remark}

\subsection{Proposition P3 (detection consistency; localisation under an identifiability condition)}\label{sec:A7}

\begin{assumptions}
A single change at \(\tau=\lfloor n\theta\rfloor\)
with \(\theta\in(\varepsilon,1-\varepsilon)\) interior to the search
range: \(\{\psi_t\}_{t\le\tau}\) i.i.d. with moment \(M^{(1)}\),
\(\{\psi_t\}_{t>\tau}\) i.i.d. with moment \(M^{(2)}\), both positive
definite, and the population DOMI values differ, \(I(M^{(1)})\ne I(M^{(2)})\),
where for a joint moment \(M\) we write
\(I(M)=S(\mathrm{Tr}_YM)+S(\mathrm{Tr}_XM)-S(M)\). We write \(H_0\) for
the sub-hypothesis \(M^{(1)}=M^{(2)}\). As in \ref{sec:A6} this presumes the
oracle transform (A0); with \emph{global} ranks the pre- and post-change
feature laws are coupled, because the rank transform is computed across
the change point. In S1, S3, S4 and G1--G6 the margins are held fixed across regimes, so the global rank transform converges to a single fixed map and the assumption is asymptotically recovered; in S2, MB and M1 a margin changes and it is not. We state it as an assumption.
\end{assumptions}

\textbf{(a) Uniform convergence of the criterion.} For
\(s\in[\varepsilon,1-\varepsilon]\), the left-segment moment satisfies
\begin{equation}\hat M_{[0,\lfloor ns\rfloor)}\ \longrightarrow\ M_L(s):=\frac{\min(s,\theta)\,M^{(1)}+(s-\theta)_+\,M^{(2)}}{s}\quad\text{a.s., uniformly in } s\end{equation}
(SLLN plus monotone bracketing of partial sums over a finite
\(\varepsilon\)-grid and Lipschitz dependence on \(s\)), and
symmetrically
\begin{equation}\hat M_{[\lfloor ns\rfloor,n)}\ \longrightarrow\ M_R(s):=\frac{(\theta-s)_+M^{(1)}+\bigl(1-\max(s,\theta)\bigr)M^{(2)}}{1-s}\quad\text{a.s., uniformly in }s.\end{equation}
All limits stay in a compact positive-definite set, on which
\(S(\cdot)\) is uniformly continuous, so
\begin{equation}Q_n(s):=\bigl|\hat I_L(\lfloor ns\rfloor)-\hat I_R(\lfloor ns\rfloor)\bigr|\ \longrightarrow\ q(s):=\bigl|I(M_L(s))-I(M_R(s))\bigr|\quad\text{a.s., uniformly.}\end{equation}
\textbf{(b) Detection.} \(q(\theta)=|I(M^{(1)})-I(M^{(2)})|>0\), while
under \(H_0\) the same argument gives \(Q_n\to 0\) uniformly; so any
threshold sequence \(c_n\to 0\) with \(\sqrt{n}\,c_n\to\infty\)
separates the two, and power tends to one. The second half of that
statement --- that under \(H_0\) the \emph{supremum}
\(\sup_{s\in[\varepsilon,1-\varepsilon]}\sqrt{n}\,Q_n(s)\) is
\(O_p(1)\), rather than merely \(\sqrt{n}Q_n(s)=O_p(1)\) for each fixed
\(s\) --- is a tightness statement. Under \(H_0\) both regimes have the same moment \(M\). Kolmogorov's maximal inequality, applied within each regime entrywise to the centred moment matrices, bounds the left- and right-segment moment deviations by \(O_p(n^{-1/2})\) uniformly over the trimmed search range; a segment that contains the break is the sum of one partial sum from each regime, each bounded within its own regime. A Taylor expansion around the positive-definite moment \(M\), with a remainder that is \(O_p(n^{-1})\) uniformly, then gives \(\sup_s\sqrt n\,Q_n(s)=O_p(1)\), without requiring the two regimes to share their law. When the series is i.i.d. throughout, P2$'$ also gives the limit process, and the remarks on \(\sigma^2(M)\) below refer to that case.
When \(\sigma^2(M)>0\) the supremum of the limiting continuous Gaussian
process is finite almost surely; when \(\sigma^2(M)=0\) the limit is the
zero process and \(\sup_s\sqrt n\,Q_n(s)=o_p(1)\). The degenerate case is
not confined to independence: by the boundary remark of \ref{sec:A6} it covers
every \(M\) with \(I(M)=0\), including the dependent moment-matched laws of
\ref{sec:A8}(ii). Whether \(\sigma^2(M)=0\) can occur with \(I(M)>0\) we leave open;
the argument does not need it. At the independence boundary the
Corollary of \ref{sec:A12} sharpens the bound to \(\sup_s n\,Q_n(s)=O_p(1)\). Both
halves of (b) therefore hold uniformly in \(s\). We do not prove that the
P1-calibrated permutation threshold satisfies this condition; its level
is exact by P1, and its behaviour under alternatives is documented
empirically (Section 6.5).

\textbf{(c) Localisation.} Under the
additional identifiability condition
\begin{equation}\text{(A3):}\quad s\mapsto w(s)\,q(s),\ \ w(s)=\sqrt{s(1-s)},\ \text{has a unique maximiser at } s=\theta,\end{equation}
the argmax theorem (uniform convergence of continuous criteria to a
continuous limit with a unique maximiser on a compact set) yields
\(\hat\tau/n\to\theta\) a.s. for the estimator
\(\hat\tau=\arg\max_t w(t/n)Q_n(t)\).\qed

The estimator deployed in the
experiments maximises the \emph{permutation-studentised} curve instead;
since studentisation divides by a data-driven positive function,
consistency does not transfer automatically, and we do not claim it for
the studentised estimator. \emph{On (A3).} This is a population
condition checkable per model, not a consequence of the other
assumptions; we verify it numerically for the S2 family (population
moments from \(2\times 10^5\) samples; unique maximum at \(\theta\)
within grid resolution) [NV-1]. We claim no rate; the empirical
relative error lies between 3.1\% and 5.3\% over
\(n\in\{300,600,1200,2400\}\) under data-driven studentisation, without
a monotone trend (Section~\ref{sec:B3}, Figure~\ref{fig:B2} right), and
sharper localisation theory is left open.

\subsection{Proposition P4 (identifiability and the \(D\to\infty\) connection)}\label{sec:A8}

\begin{enumerate}
\renewcommand{\labelenumi}{(\roman{enumi})}
\setlength{\itemsep}{0pt}\setlength{\parskip}{0pt}
\item
  Independence \(\Rightarrow\) \(I=0\): Corollary in \ref{sec:A4}.
\item
  The
  converse fails at finite \(D\), and a moment-matched dependent law can
  be written down. \(I\) depends on the joint law only through
  \(M_{XY}\) (a \(D^2\times D^2\) moment), so any two joint laws with
  equal \(M_{XY}\) are indistinguishable. Write
  \(K_X(u)=\varphi_X(u)\varphi_X(u)^{\top}\) and let
  \(\mathcal{S}_X\subset L^2([0,1])\) be the span of its entries, so
  \(\dim\mathcal{S}_X\le D^2\); unit norm gives
  \(\mathrm{tr}\,K_X(u)=1\), so the constant function lies in
  \(\mathcal{S}_X\). Since \(L^2([0,1])\) is infinite-dimensional and
  \(\mathcal{S}_X\) is not, the orthogonal complement contains a
  non-zero bounded continuous \(p_X\), normalised to
  \(\lVert p_X\rVert_\infty=1\); and \(\int_0^1p_X=0\), because the
  constants lie in \(\mathcal{S}_X\). Take \(p_Y\) likewise from the
  \(Y\) block. For any \(0<\varepsilon\le 1\) set
  \(c(u,v)=1+\varepsilon\,p_X(u)p_Y(v)\). Then \(c\ge 0\) and
  \(c\not\equiv 1\), both margins are uniform because
  \(\int p_X=\int p_Y=0\), and
  \begin{equation}\bigl[M_{XY}[c]-M_X\otimes M_Y\bigr]_{(ab),(cd)}=\varepsilon\Bigl(\int p_X K_{X,ac}\Bigr)\Bigl(\int p_Y K_{Y,bd}\Bigr)=0,\end{equation}
  so \(M_{XY}[c]=M_X\otimes M_Y\) exactly and \(I=0\) by additivity of
  the von Neumann entropy under tensor products. Hence finite-\(D\) DOMI
  is a \emph{pseudo}-measure of dependence. The construction uses nothing about the
  feature map beyond unit norm and finitely many components, so it
  applies to the deployed map as it stands. This is a statement about
  the population functional only. In the copula-shape scenario of
  Section 6.4 the two regimes have population DOMI \(0.194\) and
  \(0.206\) at \(D=8\), so the change is visible to the finite-\(D\) moment and is lost to estimation variance rather than to non-identifiability.
\item
  \emph{Almost-sure detection at
  \(D=4\) per block, under randomly drawn frequencies.} Let \(c\) be any
  bounded copula density with uniform margins, \(c\not\equiv 1\), and
  build the features in the \emph{paired} form
  \(\varphi(u)=\sqrt{2/D}\,[\cos(\omega_1u),\sin(\omega_1u),\dots]\),
  whose norm is constant without renormalisation. Draw
  \(\omega_1^X,\omega_2^X\) i.i.d. from a distribution absolutely
  continuous with respect to Lebesgue measure, and
  \(\omega_1^Y,\omega_2^Y\) independently likewise. Then \(I>0\) almost
  surely, already at \(D=4\) per block.
\end{enumerate}

\begin{proof}
Put \(\delta=c-1\), extended by zero outside \([0,1]^2\):
bounded, compactly supported and not identically zero. By the
Paley--Wiener theorem its Fourier transform \(\hat\delta\) extends to an
entire function on \(\mathbb{C}^2\), so its restriction to
\(\mathbb{R}^2\) is real-analytic, and it is not identically zero
because the Fourier transform is injective on \(L^1\). The zero set of a
non-trivial real-analytic function on \(\mathbb{R}^2\) has Lebesgue
measure zero. Write \(\xi=\omega_1^X\pm\omega_2^X\) and
\(\eta=\omega_1^Y\pm\omega_2^Y\); each is a convolution of Lebesgue
densities and the two blocks are drawn independently, so \((\xi,\eta)\)
has an absolutely continuous law on \(\mathbb{R}^2\) --- this is where
absolute continuity is needed rather than mere non-atomicity, since a
singular continuous law could put all its mass on a Lebesgue-null set.
Hence \(\hat\delta(\xi,\eta)\neq0\) almost surely. Because \(c\) has
uniform margins, \(M_X\otimes M_Y=\iint K_X(u)\otimes K_Y(v)\,du\,dv\),
so entrywise
\begin{equation}\bigl[M_{XY}[c]-M_X\otimes M_Y\bigr]_{(ab),(cd)}=\iint\delta(u,v)\,K_X(u)_{ac}K_Y(v)_{bd}\,du\,dv,\end{equation}
and in the paired form the product-to-sum identities express \(\cos(\xi u)\) and \(\sin(\xi u)\) as fixed linear combinations of entries of \(K_X\), and likewise \(\cos(\eta v)\) and \(\sin(\eta v)\) of entries of \(K_Y\). The four integrals of
\(\delta\) against \(\{\cos,\sin\}(\xi u)\{\cos,\sin\}(\eta v)\) are
therefore fixed linear combinations of the entries above. If every entry
vanished, all four would vanish and \(\hat\delta(\xi,\eta)=0\),
contradicting the previous step. Hence \(M_{XY}[c]\neq M_X\otimes M_Y\)
almost surely and, by Klein's inequality (\ref{pre:5}), \(I>0\) almost surely.
\end{proof}

\begin{remark}[the relation to (ii)]
The counterexample of (ii) is built against a fixed frequency set, whereas (iii) draws the frequencies independently of a fixed dependence; the two do not conflict.
\end{remark}

\begin{enumerate}
\renewcommand{\labelenumi}{(\roman{enumi})}
\setcounter{enumi}{3}
\setlength{\itemsep}{0pt}\setlength{\parskip}{0pt}
\item
  \emph{Existence of the \(D\to\infty\) limit, conditional on a uniform
  spectral-decay hypothesis.} Let \(\nu\) be the frequency law,
  \(k(t)=\mathbb{E}_\nu[\cos(\omega t)]\), and \(T_k\) the integral
  operator of the Mercer kernel \(k(u-u')\) on \(L^2([0,1])\), of trace
  one since \(k(0)=1\); for \(r=D/2\) frequencies let
  \(K_r(u,u')=r^{-1}\sum_{i\le r}\cos(\omega_i(u-u'))\). \textbf{Step 1
  (exact eigenvalue identity).} Let \(A\) map a coefficient vector to
  the corresponding trigonometric polynomial. Then
  \(A^{\ast}A=r\,M_X(r)\) while \(AA^{\ast}\) is the integral operator
  of \(rK_r\), and the non-zero spectra of \(A^{\ast}A\) and
  \(AA^{\ast}\) coincide with multiplicity, so \(S(M_X(r))=S(T_{K_r})\);
  the same argument applied to the tensor features gives the joint case,
  with the copula density entering through the reference measure
  \(c(u,v)\,du\,dv\) rather than through the kernel. \textbf{Step 2
  (almost-sure eigenvalue convergence).} The uniform law of large
  numbers for random features gives \(K_r\to k\) uniformly almost
  surely, hence convergence in operator norm; for compact self-adjoint
  operators \(|\lambda_j(A)-\lambda_j(B)|\le\lVert A-B\rVert\), so each
  eigenvalue converges, and since every operator here is positive with
  trace exactly one, a discrete Scheffé argument upgrades this to
  \(\sum_j|\lambda_j(T_{K_r})-\lambda_j(T_k)|\to0\) almost surely.
  \textbf{Step 3 (entropy convergence).} \(\ell^1\) convergence of the
  spectra does not by itself give entropy convergence: on an infinite
  alphabet the entropy is only lower semicontinuous, and mass can
  migrate into an ever longer, thinner tail. We therefore assume
  \textbf{(H)}: there are \(C<\infty\) and \(\rho\in(0,1)\), independent
  of \(r\), with \(\lambda_j(T_{K_r})\le C\rho^{j}\) for all \(r,j\),
  and likewise for the \(Y\)-block and joint operators. Under (H) the
  tail sum of \(h(x)=-x\log x\), which is increasing on \((0,e^{-1}]\),
  is uniformly small beyond a fixed index while the head converges by
  Step 2; hence \(S(T_{K_r})\to S(T_k)\) almost surely.
  \textbf{Conclusion.} Under (H),
  \(\lim_{D\to\infty}I_D(c)=S(T_{k_X})+S(T_{k_Y})-S(T_{XY,\infty})\)
  almost surely, where \(T_{XY,\infty}\) acts on
  \(L^2([0,1]^2,c\,du\,dv)\) with kernel \(k_X(u-u')k_Y(v-v')\). Under
  independence the reference measure factorises,
  \(T_{XY,\infty}=T_{k_X}\otimes T_{k_Y}\), and additivity of the
  entropy under tensor products returns \(\lim_D I_D=0\), in agreement
  with the finite-\(D\) identity of the Corollary in \ref{sec:A4}.\qed
\end{enumerate}

\emph{Status of (H).} It is an additional sufficient assumption, not established here for the deployed Gaussian construction.
Proving (H), or a weaker uniform-integrability condition sufficient for
entropy convergence, is left open, as is a quantitative lower bound
relating the finite-\(D\) functional to the HSIC norm.

\begin{remark}[which feature map (iii) and (iv) cover]
Parts (iii) and (iv) are proved for the paired map, whose norm is constant. The deployed map divides pointwise by a norm that is not constant --- for the deployed draws at \(D=8\) and the median-heuristic bandwidth of rank inputs it ranges from about \(0.91\) to \(1.25\) on a 20{,}000-point grid over \([0,1]\), a grid evaluation rather than a certified bound [NV-3] --- so those proofs do not transfer, whereas (ii) and Lemma 3 hold for both. For the deployed map the conclusion of (iii) is checked numerically: on the Frank copula at \(D=4\) the largest entry of \(M_{XY}-M_X\otimes M_Y\) is \(1.5\times10^{-2}\), against a value of order \(10^{-16}\) under independence [NV-2].
\end{remark}

\begin{remark}[sample-level convergence to the Gram form]
Fix a
segment of \(m\) pseudo-observations and the bandwidth \(\gamma\), and
let the frequency--phase pairs \((w_i,b_i)\) be an i.i.d. sequence with
\(w_i\sim N(0,2\gamma)\) and \(b_i\sim U(0,2\pi)\). For the deployed map,
\(f(u)^{\top}f(u')=D^{-1}\sum_{i\le D}2\cos(w_iu+b_i)\cos(w_iu'+b_i)\to\mathbb E\cos(w(u-u'))=e^{-\gamma(u-u')^2}\)
almost surely, at each of the finitely many pairs of the segment, by the
strong law of large numbers; in particular \(\lVert f(u)\rVert^2\to1\).
So the normalised Gram matrices \(K_X/m\), \(K_Y/m\) and
\((K_X\circ K_Y)/m\) converge entrywise to their Gaussian-kernel
counterparts, and continuity of the von Neumann entropy on \(m\times m\)
density matrices gives convergence of the segment DOMI to its Gram form.
This concerns a fixed sample and needs neither (H) nor the paired form;
it says nothing about the population functional, which is the subject of
(iv).
\end{remark}

\begin{remark}[block dimension]
Parts (iii) and (iv) are proved for scalar blocks, the case of every application in Section 6. The argument of (ii) needs only that the entries of \(K_X\) span a finite-dimensional subspace containing the constants and holds for any block dimension, so the vector-valued blocks of Section~\ref{sec:B6} lie outside (iii) and (iv) only.
\end{remark}

\emph{Status of \ref{sec:A8}.} Parts (i)--(iii) are proved, (iii) for the paired feature map, and part (iv) conditionally on (H). For the deployed map, what the main text requires of (iii), that no feature dimension in the deployed range be blind to a dependent alternative, is measured in Section~\ref{sec:B3} [NV-5].

\subsection{Local geometry (second-order expansion used in Section 3.4)}\label{sec:A9}

Write \(\delta=\varepsilon\Delta\) with \(\lVert\Delta\rVert=1\) and
\(\varepsilon\downarrow 0\), all asymptotics below being in
\(\varepsilon\), and assume \(\rho\succ 0\) with
\(\varepsilon<\lambda_{\min}(\rho)\) so that \(\rho+t\delta\succ 0\) for
every \(t\in[0,1]\). Write \(\phi(t)=S(\rho+t\delta)\) for symmetric
\(\delta\) with \(\mathrm{tr}\,\delta=0\), so that
\(\mathrm{QJSD}(\rho,\rho+\delta)=\phi(\tfrac12)-\tfrac12\phi(0)-\tfrac12\phi(1)\).
Taylor expansion with second-order remainder,
\(\phi(t)=\phi(0)+t\phi'(0)+\tfrac12t^2\phi''(0)+o(t^2\lVert\delta\rVert^2)\),
gives
\begin{equation}\mathrm{QJSD}=\Bigl[\phi(0)+\tfrac12\phi'(0)+\tfrac18\phi''(0)\Bigr]-\tfrac12\phi(0)-\tfrac12\Bigl[\phi(0)+\phi'(0)+\tfrac12\phi''(0)\Bigr]+o(\lVert\delta\rVert^2)=-\tfrac18\phi''(0)+o(\lVert\delta\rVert^2),\end{equation}
and by (\ref{pre:7}),
\begin{equation}\mathrm{QJSD}(\rho,\rho+\delta)=\frac18\sum_{i,j}\frac{|\delta_{ij}|^2}{L(\lambda_i,\lambda_j)}+o(\lVert\delta\rVert^2),\end{equation}
where \(L\) is the logarithmic mean. Since \(L(a,b)\le\frac{a+b}{2}\)
(the arithmetic--logarithmic mean inequality),
\begin{equation}\mathrm{QJSD}\ \ge\ \frac14\sum_{i,j}\frac{|\delta_{ij}|^2}{\lambda_i+\lambda_j}+o(\lVert\delta\rVert^2),\end{equation}
that is, the QJSD form is bounded below by one half of the squared Bures
distance, whose kernel is \(1/(2(\lambda_i+\lambda_j))\) in standard
references. The inequality is a statement about the quadratic forms,
i.e.~it holds modulo the \(o(\lVert\delta\rVert^2)\) remainder. The overall constant is not the object
of the comparison: a divergence and a squared distance are not
commensurable. Section 3.4 uses only the \emph{shape} of the kernel. The
three functionals weight the direction \((i,j)\) by, respectively, the
reciprocal logarithmic mean (QJSD), the reciprocal arithmetic mean
(Bures/quantum Fisher), and a constant (Frobenius). Since
\(L(a,b)\big/\tfrac{a+b}{2}\to 0\) as \(b/a\to 0\), the entropic
weighting amplifies highly asymmetric eigenvalue pairs --- precisely the
low-eigenvalue directions --- more sharply than the arithmetic-mean
form, and both amplify them relative to the flat Frobenius form. The gain is modest: the kernel ratio \(\bigl(1/L(a,b)\bigr)\big/\bigl(2/(a+b)\bigr)\) grows only like \(\tfrac12\log(a/b)\) (Figure~\ref{fig:B3}).

\emph{Hypotheses and scope.} (\ref{pre:7}) requires \(\rho\succ 0\), and the expansion is second-order and therefore local. In the deployed setting \(\rho_{XY}\) is \(D^2\times D^2\) with rank at most the segment length and a rapidly decaying spectrum: at \(D=8\) several of the \(64\) eigenvalues lie below \(10^{-12}\), and on segments shorter than \(64\) some vanish exactly, so that \(\rho\succ0\) fails literally, though the ordering of the three kernels is unaffected; the kernel \(1/L(\lambda_i,\lambda_j)\) then reaches magnitudes of order \(10^{12}\) and beyond, and the radius within which the expansion is accurate collapses accordingly. The empirical comparison is Section~\ref{sec:B2}.

\subsection{Lemma 4 (elementary properties of the segment DOMI)}\label{sec:A10}

\begin{statement}
Let \(\rho_X,\rho_Y,\rho_{XY}\) be the segment
operators of Section 4.1, built from unit-norm features, and write
\(\hat I=S(\rho_X)+S(\rho_Y)-S(\rho_{XY})\) for the segment quantity,
\(I\) for its population counterpart. Then

\begin{enumerate}
\renewcommand{\labelenumi}{(\roman{enumi})}
\item
  \emph{(Range)}
  \(0\le \hat I\le\min\bigl(S(\rho_X),S(\rho_Y)\bigr)\le\log D\), and
  the same bounds hold for \(I\);
\item
  \emph{(Invariance)} \(\hat I\) is unchanged when either block is
  transformed by a strictly increasing map of its components, and when
  the feature map of either block is composed with a fixed orthogonal
  matrix; for the Gram form, also when a scalar block without ties is replaced by its negative, since its ranks are then reflected, \(u\mapsto 1-u\), and both the Gaussian kernel on ranks and the median-heuristic bandwidth depend on \(|u-u'|\) only (for the random-feature form this holds in distribution over the frequency draw);
\item
  \emph{(Consistency)} under (A0)--(A1) of \ref{sec:A6}, \(\hat I\to I\) almost
  surely;
\item
  \emph{(Bias)} under (A0), (A2) and the i.i.d. case of (A1), with \(D\)
  fixed, \(\mathbb{E}[\hat I]-I=c/m+o(1/m)\) with
  \begin{equation}c=\beta(M_{XY})-\beta(M_X)-\beta(M_Y),\qquad \beta(M)=\tfrac12\sum_{i,j}\frac{\mathrm{Var}\bigl[(U^{\top}vv^{\top}U)_{ij}\bigr]}{L(\lambda_i,\lambda_j)}\ \ge\ 0,\end{equation}
  where \(M=\mathbb{E}[vv^{\top}]=U\,\mathrm{diag}(\lambda)\,U^{\top}\)
  is the relevant population moment, \(v\) the corresponding per-sample
  feature vector, and \(L\) the logarithmic mean of (\ref{pre:7}).
\end{enumerate}
\end{statement}

\begin{proof}
\textbf{(i) Lower bound.} By \ref{sec:A4},
\(I=S(\rho_{XY}\Vert\rho_X\otimes\rho_Y)\), and
\(\mathrm{supp}\,\rho_{XY}\subseteq\mathrm{supp}(\rho_X\otimes\rho_Y)\)
because every \(\psi_t=\varphi_{X,t}\otimes\varphi_{Y,t}\) lies in the
tensor product of the marginal supports. Klein's inequality (\ref{pre:5})
gives \(I\ge 0\), with equality iff \(\rho_{XY}=\rho_X\otimes\rho_Y\).

\textbf{Upper bound.} \(\rho_{XY}=m^{-1}\sum_t\psi_t\psi_t^{\top}\) is a
convex combination of \emph{product} pure states, i.e.~a separable
state. Adjoin a classical register \(R\) with orthonormal basis
\(\{|t\rangle\}\) and set
\begin{equation}\sigma_{XYR}=\frac1m\sum_t\bigl(\varphi_{X,t}\varphi_{X,t}^{\top}\bigr)\otimes\bigl(\varphi_{Y,t}\varphi_{Y,t}^{\top}\bigr)\otimes|t\rangle\langle t|,\end{equation}
so that \(\mathrm{Tr}_R\,\sigma_{XYR}=\rho_{XY}\). Because \(\sigma\) is
block diagonal in \(R\) and each block is a \emph{pure} product state,
\(S(\sigma_{XYR})=\log m\) and \(S(\sigma_{YR})=\log m\), whence the
conditional entropy
\(S(X|YR)_{\sigma}=S(\sigma_{XYR})-S(\sigma_{YR})=0\). Strong
subadditivity, in the form that discarding part of the conditioning
system cannot decrease conditional entropy [26], gives
\begin{equation}S(X|Y)_{\rho}\ \ge\ S(X|YR)_{\sigma}=0,\qquad\text{i.e.}\qquad S(\rho_{XY})\ \ge\ S(\rho_Y),\end{equation}
and symmetrically \(S(\rho_{XY})\ge S(\rho_X)\). (The step
\(\sigma_{YR}=\mathrm{Tr}_X\sigma_{XYR}=\frac1m\sum_t\lVert\varphi_{X,t}\rVert^2(\varphi_{Y,t}\varphi_{Y,t}^{\top})\otimes|t\rangle\langle t|\)
uses \(\lVert\varphi_{X,t}\rVert=1\); without it the block weights would
not be uniform and \(S(\sigma_{YR})<\log m\), which would invalidate the
argument. Note also that the register \(R\) makes the blocks mutually
orthogonal, so no orthogonality among the \(\varphi_{Y,t}\) themselves
is required.) Therefore
\(I\le S(\rho_X)+S(\rho_Y)-\max\bigl(S(\rho_X),S(\rho_Y)\bigr)=\min\bigl(S(\rho_X),S(\rho_Y)\bigr)\).
Finally \(\rho_X\) is a \(D\times D\) density operator, so
\(S(\rho_X)\le\log D\) by (\ref{pre:1}) and the maximum of \(H\) on the
simplex. The same argument applies directly to the population state
\(M_{XY}=\mathbb{E}[\psi\psi^{\top}]\), itself a mixture of product pure
states over the population law in place of the empirical one, so the
bounds hold for the population \(I\) without invoking (iii). \emph{Convention.} Matrices here are real symmetric; the
real symmetric case embeds isometrically into the complex Hermitian one
with all entropies unchanged, so the standard statement of strong
subadditivity applies.
\end{proof}

\begin{remark}
The finiteness of the bound is a structural difference
from classical mutual information, which diverges as the dependence
approaches a deterministic relation. The operative ceiling is not
\(\log D\) but \(\min(S(\rho_X),S(\rho_Y))\), and for the rank-plus-RFF
map deployed here the marginal entropy lies between about \(0.8\) and
\(1.4\) nats for \(D\) from 4 to 64 and does not grow with \(D\) (for
one feature draw \(S(\rho_X)\) is \(1.22\), \(1.13\), \(1.00\), \(1.10\)
and \(1.27\) at \(D=4,8,16,32,64\)), so the binding
constraint is roughly \(1.1\) nats rather than \(\log D=2.08\) at
\(D=8\), and raising \(D\) does not relax it. The bound bears on Section
6.4: as dependence strengthens, \(\hat I\) saturates rather than grows,
so a fixed \emph{difference} between two strongly dependent regimes is
compressed while the estimator's variance is not.
\end{remark}

\textbf{(ii)} Componentwise ranks are invariant under strictly
increasing transformations of each component (a strictly increasing map preserves the entire order relation, including equalities, and moves no observation in time, so ranks with ties broken in time order, as implemented, are preserved; with random tie-breaking the statement holds conditionally on the tie-break), so \(U_X,U_Y\) and hence all features and all three
operators are unchanged. For the second claim, replacing \(\varphi_X\)
by \(Q\varphi_X\) with \(Q^{\top}Q=I\) sends
\(\rho_X\mapsto Q\rho_XQ^{\top}\) and
\(\rho_{XY}\mapsto(Q\otimes I)\rho_{XY}(Q\otimes I)^{\top}\); both are
orthogonal conjugations, which preserve spectra, and by (\ref{pre:1}) the von
Neumann entropy depends on the operator only through its spectrum. Hence
every term of \(I\) is unchanged.\qed

\textbf{(iii)} By the law of large numbers applied to the bounded i.i.d.
(or ergodic) summands of \ref{sec:A6} Step 1, \(\hat M\to M\) a.s. for each of
the three moments. The function \(S\) is continuous on the whole compact
set of positive semi-definite matrices of unit trace --- not merely on
the positive-definite cone --- because \(x\mapsto-x\log x\) extends
continuously to \(x=0\); positive definiteness is therefore not needed
here, which matters because \(\hat M_{XY}\) is singular whenever
\(m<D^2\). Composition of a.s. convergence with a continuous map gives
\(\hat I\to I\) a.s.\qed

\textbf{(iv)} Fix one moment \(M\succ 0\) and write \(\Delta=\hat M-M\).
Taking expectations of a Taylor expansion requires more than the
in-probability bound of \ref{sec:A6} Step 3, so we argue on an event. Let
\(A=\{\lambda_{\min}(\hat M)\ge\tfrac12\lambda_{\min}(M)\}\). On
\(A^{c}\) we use only \(0\le S(\hat M)\le\log\dim\) --- valid because
\(\mathrm{tr}\,\hat M=1\) exactly by \ref{sec:A4} --- together with the matrix
Bernstein inequality for bounded i.i.d. summands, which gives
\(P(A^{c})\le 2\dim\cdot e^{-c_0m}\) for some
\(c_0=c_0(\lambda_{\min}(M))>0\); hence
\(\mathbb{E}\bigl[|S(\hat M)-S(M)|\mathbf 1_{A^{c}}\bigr]=o(m^{-k})\)
for every \(k\). On \(A\) the spectrum of \(\hat M\) remains in a
compact subset of \((0,\infty)\), so the third derivative of \(S\) is
bounded there and a third-order Taylor expansion has remainder
\(O(\lVert\Delta\rVert^{3})\) with
\(\mathbb{E}\lVert\Delta\rVert^{3}\le(\mathbb{E}\lVert\Delta\rVert^{4})^{3/4}=O(m^{-3/2})=o(1/m)\),
where \(\mathbb{E}\lVert\Delta\rVert^{4}=O(m^{-2})\) because each entry
\(\Delta_{ij}\) is an average of \(m\) bounded i.i.d. mean-zero scalars,
so \(\mathbb{E}\Delta_{ij}^{4}=O(m^{-2})\), and the number of entries is
fixed. The first- and second-order terms below carry the indicator
\(\mathbf 1_A\); removing it changes them by \(O(P(A^{c}))\), which is
exponentially small and absorbed in the remainder. Combining the two
events,
\begin{equation}\mathbb{E}\bigl[S(\hat M)\bigr]=S(M)+\mathbb{E}\bigl[DS[\Delta]\bigr]+\tfrac12\,\mathbb{E}\bigl[D^2S[\Delta,\Delta]\bigr]+o(1/m).\end{equation}
The first-order term vanishes because \(\mathbb{E}\Delta=0\) and \(DS\)
is linear. By (\ref{pre:7}), in the eigenbasis of \(M\),
\(D^2S[\Delta,\Delta]=-\sum_{i,j}|\Delta_{ij}|^2/L(\lambda_i,\lambda_j)\),
and
\(\mathbb{E}|\Delta_{ij}|^2=\mathrm{Var}\bigl[(U^{\top}vv^{\top}U)_{ij}\bigr]/m\)
for an average of \(m\) i.i.d. terms. Hence
\(\mathbb{E}[S(\hat M)]-S(M)=-\beta(M)/m+o(1/m)\) with \(\beta\ge 0\):
every plug-in entropy is biased \emph{downward}, by an amount governed
by the inverse logarithmic mean of eigenvalue pairs. Applying this to
the three moments, with the signs with which their entropies enter
\(\hat I\), gives (iv).\qed

\emph{Sign of \(c\).} At the boundary \(c\ge 0\) is proved. Under (A0) with i.i.d. pairs, \(\mathbb E\,C=0\) exactly and the expected quadratic term of \(m\hat I\) equals \(\sum_j w_j\) for every \(m\). The remainder is split on the event \(\{\lVert C\rVert\le\lambda_{\min}(\bar\rho)/2\}\) of Step 3 of \ref{sec:A12}: on it the Taylor remainder satisfies \(\mathbb E\,m|R_m|\le c'\,m\,\mathbb E\lVert C\rVert^3=O(m^{-1/2})\); off it the entropies are bounded by \(\log D^2\), and the event fails with exponentially small probability by the matrix Bernstein inequality. Hence \(\mathbb E[m\hat I]\to\sum_j w_j=\mathbb E\,\Phi(G)\ge 0\) by \ref{sec:A12}; strict positivity holds exactly when the boundary limit is non-degenerate. The closed form is positive at every independence configuration evaluated, \(c=3.07\) at \(D=8\), against a measured \(m\,\mathbb E[\hat I]\) of \(3.12\) at \(m=200\) and at \(m=800\) [NV-4]; the agreement is empirical, since part (iv) is asymptotic in \(m\) for fixed \(D\). Away from the boundary the sign is not universal: along a Gaussian-correlation path on rank features at \(D=8\) the closed form gives \(+2.21\) at \(r=0.7\), \(+0.69\) at \(r=0.95\), \(-0.02\) at \(r=0.99\) and \(-0.38\) for \(Y=X\), so the independence-based correction of Section 4.2 over-corrects under strong dependence. \emph{On (A0).} Parts (iii) and (iv) are stated for the oracle features of (A0); under global empirical ranks neither the almost-sure limit of (iii) nor the constant \(c\) of (iv) is established for the statistic as computed.

\subsection{Proposition P5 (finite-sample exactness under block permutation)}\label{sec:A11}

\begin{setting}
Partition \(\{1,\dots,n\}\) into
\(B=\lfloor n/b\rfloor\) consecutive blocks of length \(b\) (a trailing
partial block, if any, is held in place). Let \(\mathcal{G}_b\cong S_B\)
denote the subgroup of permutations of \(\{1,\dots,n\}\) induced by
permuting the \emph{order} of these blocks, acting simultaneously on
\(X\) and \(Y\). Every observation appears exactly once in a
\(\mathcal{G}_b\)-image, so this is a genuine permutation of the sample,
not a resample. Draw \(\sigma_1,\dots,\sigma_K\) i.i.d. uniform on
\(\mathcal{G}_b\), independent of \(W\), and define \(T_k\) and \(p\)
exactly as in \ref{sec:A5}.
\end{setting}

\begin{assumption}[block exchangeability]
Under the no-change
hypothesis, \(W\circ\sigma\overset{d}{=}W\) for every
\(\sigma\in\mathcal{G}_b\). Writing \(Z_k\) for the sub-series of pairs
in block \(k\) and \(Z_{\mathrm{rem}}\) for the trailing indices that
are held in place, this is
\begin{equation}(Z_{\pi(1)},\dots,Z_{\pi(B)},Z_{\mathrm{rem}})\ \overset{d}{=}\ (Z_1,\dots,Z_B,Z_{\mathrm{rem}})\qquad\text{for every }\pi\in S_B.\end{equation}
\end{assumption}

The trailing block must be carried in the display: exchangeability of
\((Z_1,\dots,Z_B)\) \emph{alone} is not enough, because the statistic
also sees \(Z_{\mathrm{rem}}\), and a null in which \(Z_{\mathrm{rem}}\)
is dependent on one of the permuted blocks invalidates the argument. The
issue does not arise when \(b\) divides \(n\), or when the trailing
observations are dropped; in the experiments of Table~\ref{tab:B5} and Section~\ref{sec:B17}, \(n=600\)
with \(b\in\{20,50\}\) leaves no trailing block.

\begin{claim}
\(P(p\le\alpha)\le\alpha\) for every \(\alpha\in(0,1)\),
with equality up to the usual discreteness if the \(T_k\) are a.s.
distinct.
\end{claim}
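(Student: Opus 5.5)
The plan is to reduce P5 to the argument of \ref{sec:A5}, with the symmetric group \(S_n\) replaced by the finite group \(\mathcal{G}_b\). The proof of P1 used only three facts: (a) the permutations form a group acting on the record by \(W\mapsto W\circ\sigma\), with \((W\circ\sigma)\circ\pi=W\circ(\sigma\circ\pi)\); (b) the null law of \(W\) is invariant under that action; and (c) the uniform law on the group is invariant under left translation. I will verify each for \(\mathcal{G}_b\), and then repeat Steps 1 and 2 of \ref{sec:A5} verbatim.

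First, \(\mathcal{G}_b\) is a subgroup of \(S_n\) isomorphic to \(S_B\). A block permutation \(\pi\in S_B\) induces the index map sending position \((k-1)b+i\) to \((\pi(k)-1)b+i\) for \(i\le b\), and fixes the trailing indices. Composition and inversion of these maps correspond to those in \(S_B\), so the induced maps are closed under both. Hence fact (a) holds, and the uniform law on \(\mathcal{G}_b\) is preserved by left translation by any fixed element, which is fact (c).

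Second, the block-exchangeability assumption, displayed with \(Z_{\mathrm{rem}}\) included, is exactly the statement \(W\circ\sigma\overset{d}{=}W\) for every \(\sigma\in\mathcal{G}_b\), which is fact (b). I will spell out that the statistic is a function of the full record, trailing part included. For this reason the display must carry \(Z_{\mathrm{rem}}\): it is the joint law of the entire \(W\) that must be invariant.

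Third, the re-indexing step. Fix \(j\). Replace \(W\) by \(W\circ\sigma_j\), replace \(\sigma_j\) by \(\sigma_j^{-1}\), and replace each other \(\sigma_k\) by \(\sigma_j^{-1}\circ\sigma_k\); all of these stay in \(\mathcal{G}_b\) by closure. Conditionally on \(\sigma_j=s\), the new record has the law of \(W\) by (b), and the new draws are i.i.d. uniform on \(\mathcal{G}_b\) and independent of the new record by (c). Records \(0\) and \(j\) are swapped, and the others are unchanged. Because \(g\) sees the collection only as a multiset, \((T_0,\dots,T_K)\) is invariant under transpositions \((0\,j)\), and is therefore exchangeable. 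The tie-counting rank argument of Step 2 then gives \(P(R_0\le r)\le r/(K+1)\), hence \(P(p\le\alpha)\le\alpha\), with equality at \(r=\lfloor\alpha(K+1)\rfloor\) when the \(T_k\) are almost surely distinct.

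The only step needing care is the conditioning in the re-indexing step. One must check that, given \(\sigma_j=s\), the pair (new record, new draws) has the original joint law. This uses independence of the draws from \(W\), together with the fact that \(W\circ s\overset{d}{=}W\) holds for each fixed \(s\in\mathcal{G}_b\) rather than for arbitrary \(s\in S_n\). I expect this to be routine once the subgroup property is established, so the main obstacle is really the bookkeeping of the trailing block.

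Finally, a remark: any \(\mathcal{G}_b\)-invariant law, in particular a pair-exchangeable one, satisfies the assumption. This yields the subgroup remark cited in Section 4.3.
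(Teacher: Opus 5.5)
Your proposal is correct and follows the paper's proof. Both reduce P5 to the argument of \ref{sec:A5} by checking three things: that \(\mathcal{G}_b\) is a group, that block exchangeability (with \(Z_{\mathrm{rem}}\) carried) is exactly the required invariance \(W\circ\sigma\overset{d}{=}W\) for \(\sigma\in\mathcal{G}_b\), and that the draws are i.i.d.\ uniform on \(\mathcal{G}_b\) and independent of \(W\); Steps 1 and 2 then go through verbatim.

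One small tidy-up, shared with the paper's own wording: given \(\sigma_j=s\), the new draw \(\sigma_j^{-1}=s^{-1}\) is fixed. So state the conditional equality of laws only for the new record and the other \(K-1\) new draws. The uniformity of \(\sigma_j^{-1}\), and its independence from the rest, then follow unconditionally.
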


\begin{proof}
The proof of \ref{sec:A5} uses only two things: that the null law
of the data is invariant under the group from which the replicas are
drawn, and that the replica permutations are i.i.d. uniform on that
group and independent of the data. Both hold here with \(\mathcal{G}_b\)
in place of \(S_n\): the assumption above is precisely the required
invariance, and \(\mathcal{G}_b\) is a group, so Step 1's re-indexing
argument --- including the side of composition --- goes through verbatim
with \(\sigma_j^{-1}\circ\sigma_k\in\mathcal{G}_b\). Step 2 is a
statement about exchangeable real vectors and is untouched.
\end{proof}

\emph{Plausibility of the assumption.} It holds when the blocks are i.i.d., and more generally when they are exchangeable jointly with any remainder held fixed. Short-range or finite-range (\(\kappa\)-dependent) serial dependence does not by itself make adjacent blocks exchangeable, and we give no bound on the departure. A larger \(b\) may reduce that departure but coarsens the permutation group, which has \(B!\) elements for \(B\) blocks; the candidate grid does not depend on \(b\). Table~\ref{tab:B5} and Section~\ref{sec:B17} quantify the trade-off. The group \(\mathcal{G}_b\) is a strict subgroup of \(S_n\), so P5 is weaker than P1, but it does not assume exchangeability of individual time points.

\begin{remark}[exactness under the larger group]
The block
permutations of \(\{1,\dots,n\}\) form a subgroup of the full
permutation group, and full exchangeability of the pairs implies
invariance under every subgroup. P5's calibration is therefore exact
under pair exchangeability as well: choosing the block calibration when
the pair calibration would have sufficed costs power and a coarser null, not level, for a block length fixed in advance. The statement does not cover a scheme or block length chosen from the same data, as the protocol of Section 4.3 does.
\end{remark}

\subsection{Proposition P6 (degenerate limit law at the independence boundary; oracle margins)}\label{sec:A12}

\begin{setting}
Pairs \((X_t,Y_t)\), \(t=1,\dots,m\), i.i.d. with
\(X_t\perp Y_t\). Fix \(D\) and the unit-norm feature maps of Section
4.1, applied to the true marginal transforms (oracle margins), so that
\(\varphi_{X,t},\varphi_{Y,t}\) are i.i.d. unit vectors,
\(\psi_t=\varphi_{X,t}\otimes\varphi_{Y,t}\),
\(\hat\rho_{XY}=m^{-1}\sum_t\psi_t\psi_t^{\top}\) (unit trace exactly),
and \(\hat\rho_X=\mathrm{Tr}_Y\hat\rho_{XY}\),
\(\hat\rho_Y=\mathrm{Tr}_X\hat\rho_{XY}\) (Lemma 3). Population states:
\(\rho_X=E[\varphi_X\varphi_X^{\top}]\), \(\rho_Y\), and by independence
\(\rho_{XY}=\rho_X\otimes\rho_Y=:\bar\rho\), so the population DOMI is
\(S(\rho_X)+S(\rho_Y)-S(\bar\rho)=0\). Assume \(\rho_X\succ0\) and \(\rho_Y\succ0\).
\end{setting}

\begin{claim}
\(m\,\hat I_m\ \Rightarrow\ \tfrac12\big[q_{\bar\rho}(G)-q_{\rho_X}(\mathrm{Tr}_Y G)-q_{\rho_Y}(\mathrm{Tr}_X G)\big]\ =\ \sum_j w_j Z_j^2\)
with \(Z_j\) i.i.d. standard normal and \(w_j\ge0\), where \(G\) is the
matrix Gaussian limit of \(\sqrt m(\hat\rho_{XY}-\bar\rho)\) and
\(q_\rho(E)=\sum_{i,j}k(\lambda_i,\lambda_j)\tilde E_{ij}^2\) in the
eigenbasis of \(\rho\), \(k\) being the reciprocal of the logarithmic
mean,
\(k(\lambda_i,\lambda_j)=(\log\lambda_i-\log\lambda_j)/(\lambda_i-\lambda_j)\),
\(k(\lambda,\lambda)=1/\lambda\) --- the kernel of Section 3.4.
\end{claim}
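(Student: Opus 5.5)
I will prove the claim by a second-order delta method in which the first-order term vanishes on every sample path. Write \(C_m=\hat\rho_{XY}-\bar\rho\), so that \(\hat\rho_X-\rho_X=\mathrm{Tr}_Y C_m\) and \(\hat\rho_Y-\rho_Y=\mathrm{Tr}_X C_m\) exactly, by Lemma 3 and linearity of the partial trace. Expand each of the three entropies in \(\hat I_m=S(\hat\rho_X)+S(\hat\rho_Y)-S(\hat\rho_{XY})\) around its population state to second order.

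\textbf{Step 1 (first-order cancellation).} By (\ref{pre:6}), the linear term is
\begin{equation}-\mathrm{tr}\big[(\log\rho_X+I)\mathrm{Tr}_YC_m\big]-\mathrm{tr}\big[(\log\rho_Y+I)\mathrm{Tr}_XC_m\big]+\mathrm{tr}\big[(\log\bar\rho+I)C_m\big].\end{equation}
The identity parts vanish because every state has unit trace, so each fluctuation is traceless. Next, \(\log\bar\rho=\log\rho_X\otimes I+I\otimes\log\rho_Y\), and \(\mathrm{tr}[(A\otimes I)C]=\mathrm{tr}[A\,\mathrm{Tr}_YC]\). Together these make the remaining log terms cancel identically, for every sample. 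This is the sample-path identity asserted in the main text.

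\textbf{Step 2 (quadratic term).} By (\ref{pre:7}), the second-order term of \(S(\rho+\Delta)\) is \(-\tfrac12 q_\rho(\Delta)\). Hence
\begin{equation}m\hat I_m=\tfrac12\big[q_{\bar\rho}(\sqrt m C_m)-q_{\rho_X}(\mathrm{Tr}_Y\sqrt mC_m)-q_{\rho_Y}(\mathrm{Tr}_X\sqrt mC_m)\big]+mR_m.\end{equation}
The summands \(\psi_t\psi_t^\top\) are bounded and i.i.d., so the multivariate CLT gives \(\sqrt mC_m\Rightarrow G\), a centred Gaussian symmetric matrix. The map \(E\mapsto\Phi(E):=\tfrac12[q_{\bar\rho}(E)-q_{\rho_X}(\mathrm{Tr}_YE)-q_{\rho_Y}(\mathrm{Tr}_XE)]\) is continuous, so the continuous mapping theorem gives convergence of the quadratic part to \(\Phi(G)\).

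\textbf{Step 3 (remainder).} Here I argue as in \ref{sec:A6} Step 3 and \ref{sec:A10}(iv). The states \(\rho_X\) and \(\rho_Y\) are positive definite, so \(\bar\rho\) is too, with \(\lambda_{\min}(\bar\rho)=\lambda_{\min}(\rho_X)\lambda_{\min}(\rho_Y)\). On the event \(\{\lVert C_m\rVert\le\lambda_{\min}(\bar\rho)/2\}\), every spectrum stays in a compact subset of \((0,\infty)\). There the third derivative of \(S\) is bounded, so \(|R_m|\le c'\lVert C_m\rVert^3\) and \(m|R_m|=O_p(m^{-1/2})\). The event has probability tending to one, and Slutsky's lemma finishes the convergence.

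\textbf{Step 4 (weighted \(\chi^2\) form and \(w_j\ge0\)).} This is the main obstacle, because \(\Phi\) is a difference of positive quadratic forms and its nonnegativity is not automatic. Write \(G\) as \(\Sigma^{1/2}Z\) with \(Z\) standard normal on the support of its covariance \(\Sigma\). Then \(\Phi(G)=Z^\top AZ\) for a symmetric \(A\), and diagonalising \(A\) gives \(\sum_jw_jZ_j^2\).

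For \(w_j\ge0\) I need \(\Phi(E)\ge0\) on the support of \(G\); in fact I will show it for all symmetric traceless \(E\). The plan is to read \(\Phi\) as a Hessian. For small \(t\), set
\begin{equation}f(t)=S(\mathrm{Tr}_Y(\bar\rho+tE))+S(\mathrm{Tr}_X(\bar\rho+tE))-S(\bar\rho+tE),\end{equation}
which equals \(S(\bar\rho+tE\,\Vert\,\mathrm{Tr}_Y(\cdot)\otimes\mathrm{Tr}_X(\cdot))\) and is therefore \(\ge0\) by Klein (\ref{pre:5}). Its first derivative at \(t=0\) vanishes by Step 1, and \(f(0)=0\). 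So \(f''(0)=2\Phi(E)\ge0\): a nonnegative function with a zero minimum has nonnegative second derivative there.

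One point needs checking. The identity \(I=S(\rho_{XY}\Vert\rho_X\otimes\rho_Y)\) uses only that the marginals are the partial traces, not that the state is separable, so it holds for any state near \(\bar\rho\), where positive definiteness persists. Hence \(A\succeq0\) and \(w_j\ge0\).

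Finally, the identity \(\sum_jw_j=\mathbb E\,\Phi(G)\) records the link to the bias constant of Lemma 4(iv).
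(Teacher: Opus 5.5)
Your proof is correct. Steps 1--3 coincide with the paper's proof: the sample-path cancellation of the linear term via Lemma 3 and $\log\bar\rho=\log\rho_X\otimes I+I\otimes\log\rho_Y$, the CLT for $\sqrt m\,C_m$, and the third-order Taylor bound on the event $\{\lVert C_m\rVert\le\lambda_{\min}(\bar\rho)/2\}$.

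You depart from the paper in Step 4, the sign of the weights. The paper argues probabilistically. It notes that $\hat I\ge 0$ on every sample, by subadditivity together with Lemma 3, so $m\hat I_m\ge 0$. The weak limit $\sum_j w_jZ_j^2$ is then almost surely non-negative, and a Gaussian quadratic form that is almost surely non-negative can have no negative weight.

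You argue deterministically. You identify $2\Phi(E)$ as $f''(0)$ for $f(t)=I(\bar\rho+tE)$, which is a quantum mutual information and hence a relative entropy. This gives $f\ge 0$ with $f(0)=0$, so $\Phi(E)\ge 0$ for every symmetric traceless $E$. Your observation that the relative-entropy identity needs no separability is what makes this work, since $\bar\rho+tE$ need not be separable.

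Both routes rest on the same fact, Klein's inequality in the form of subadditivity, but they deliver different things. Yours gives positive semidefiniteness of $\Phi$ on the whole traceless subspace, independent of the covariance of $G$ and of the limit theorem. That is exactly the ``positive semi-definite'' property the paper's process-level Corollary attributes to $\Phi$, whereas the paper's Step 4 only certifies non-negativity on the support of $G$. The paper's argument is shorter because it reuses the non-negativity of the statistic and the convergence just proved. It also carries over verbatim to the rank version P6-R, though yours does too, since it concerns only the form $\Phi$.
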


\begin{proof}
Write \(C=\hat\rho_{XY}-\bar\rho\),
\(A=\hat\rho_X-\rho_X\), \(B=\hat\rho_Y-\rho_Y\); all three are
symmetric and traceless, and \(\mathrm{Tr}_Y C=A\),
\(\mathrm{Tr}_X C=B\) hold as sample identities by Lemma 3 and linearity
of the partial trace.

\textbf{Step 1 (the first-order term vanishes identically).} For
traceless \(E\), \(\mathrm dS(\rho)[E]=-\mathrm{tr}(E\log\rho)\). Since
\(\log\bar\rho=\log\rho_X\otimes I+I\otimes\log\rho_Y\) and
\(\mathrm{tr}\big(C(\log\rho_X\otimes I)\big)=\mathrm{tr}\big((\mathrm{Tr}_Y C)\log\rho_X\big)=\mathrm{tr}(A\log\rho_X)\),
and symmetrically in \(Y\), the first-order contribution to
\(\hat I=S(\hat\rho_X)+S(\hat\rho_Y)-S(\hat\rho_{XY})\) is
\(-\mathrm{tr}(A\log\rho_X)-\mathrm{tr}(B\log\rho_Y)+\mathrm{tr}(C\log\bar\rho)=0\):
an identity of the sample path, not an expectation. This is the
structural reason for the degeneracy noted in P2.

\textbf{Step 2 (central limit theorem).} The summands
\(\psi_t\psi_t^{\top}\) are i.i.d. and bounded (unit trace, entries in
\([-1,1]\)) with mean \(\bar\rho\), so \(\sqrt m\,C\Rightarrow G\), a
centred matrix Gaussian; by linearity of the partial trace,
\(\sqrt m\,A\Rightarrow\mathrm{Tr}_Y G\) and
\(\sqrt m\,B\Rightarrow\mathrm{Tr}_X G\) jointly with \(C\).

\textbf{Step 3 (second-order expansion).} On the event
\(\{\|C\|\le\lambda_{\min}(\bar\rho)/2\}\), whose probability tends to
one, the entropy is three times continuously differentiable along the
segment between population and empirical states, with
\(\mathrm d^2S(\rho)[E,E]=-q_\rho(E)\) by (\ref{pre:7}), whose
resolvent-integral proof assumes nothing about multiplicity --- which
matters here, since \(\bar\rho=\rho_X\otimes\rho_Y\) carries the product
spectrum \(\{\lambda_a\mu_b\}\) and its entries can coincide; Taylor's
theorem and the Step-1 cancellation give
\(\hat I=\tfrac12\big[q_{\bar\rho}(C)-q_{\rho_X}(A)-q_{\rho_Y}(B)\big]+R_m\)
with \(|R_m|\le c\,\|C\|^3=O_p(m^{-3/2})\), \(c\) depending on
\(\lambda_{\min}\). Multiplying by \(m\) and applying the continuous
mapping theorem to the continuous quadratic form yields the claim.

\textbf{Step 4 (form and sign of the weights).} A quadratic form of a
Gaussian vector is distributed as \(\sum_j w_j Z_j^2\) with \(w_j\) the
eigenvalues of the form composed with the covariance of \(G\).
Non-negativity: \(\hat I\ge0\) for every sample, by subadditivity of the
von Neumann entropy together with the exactness of the sample partial
trace (Lemma 3), so the limit is almost surely non-negative, and a
Gaussian quadratic form that is a.s. non-negative has no negative weight.
\end{proof}

\begin{remark}[non-degeneracy of the limit]
The weights are not all
zero exactly when the bias constant \(c=\lim_m m\,\mathbb E[\hat I_m]\)
of Lemma 4(iv) is strictly positive. The fourth-moment bound
\(\mathbb E\lVert\hat M-M\rVert^{4}=O(m^{-2})\) established in the proof
of that lemma makes \(m\hat I_m\) uniformly integrable, so
\(\sum_j w_j=\lim_m\mathbb E[m\hat I_m]=c\), and the weights are
non-negative. Strict positivity of \(c\) at independence is not proved
(Lemma 4(iv), Section~\ref{sec:A10}); it holds in closed form at every configuration computed here. Were \(c\) zero, the limit
above would be the point mass at zero and \(m\) would be the wrong normalisation.
\end{remark}

\begin{remark}[spectrum in practice, and what P6 does not cover]
At the default bandwidth the trailing eigenvalues of \(\rho_X\) decay to numerical zero, and the constant \(c\) of Step 3 scales with an inverse power of \(\lambda_{\min}\), so the full-spectrum Taylor bound is not quantitative at the sample sizes considered; P6 does not justify spectral truncation, and Section~\ref{sec:B16} reports the finite-sample calibration. The process level and empirical ranks are supplied by the Corollary and by P6-R below.
\end{remark}

\begin{corollary}[process-level boundary limit, and the scan statistic]
Under the assumptions of P6, fix
\(0<\varepsilon<\tfrac12\); for a split fraction
\(\lambda\in[\varepsilon,1-\varepsilon]\) let \(\hat I_L(\lambda)\) and
\(\hat I_R(\lambda)\) be the DOMI plug-ins of the two segments, and let
\(\Phi(M):=\tfrac12\big[q_{\bar\rho}(M)-q_{\rho_X}(\mathrm{Tr}_Y M)-q_{\rho_Y}(\mathrm{Tr}_X M)\big]\)
be the (positive semi-definite) limit form of P6. Then, jointly in
\(\ell^\infty[\varepsilon,1-\varepsilon]\),
\begin{equation}\big(n\,\hat I_L(\lambda),\ n\,\hat I_R(\lambda)\big)\ \Rightarrow\ \Big(\lambda^{-2}\,\Phi\big(G(\lambda)\big),\ (1-\lambda)^{-2}\,\Phi\big(G(1)-G(\lambda)\big)\Big),\end{equation}
where \(G\) is the matrix Brownian motion limit of the centred
partial-sum process of \(\psi_t\psi_t^{\top}\). Consequently every
continuous functional of this pair converges --- in particular the
supremum, over a fixed finite candidate grid or over
\([\varepsilon,1-\varepsilon]\) with continuous weights, of the
unstudentised difference statistic; the finite grid actually computed in
Section 4.6 is the special case of evaluation at finitely many \(\lambda\).
\end{corollary}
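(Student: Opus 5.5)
The plan is to repeat the proof of P6 segment by segment on a single probability space, with the pointwise CLT replaced by a functional CLT for the matrix partial sums. Write \(S_n(\lambda)=n^{-1/2}\sum_{t\le\lfloor\lambda n\rfloor}(\psi_t\psi_t^{\top}-\bar\rho)\).

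First I establish the matrix invariance principle. The summands are i.i.d., bounded and centred at \(\bar\rho\), and they live in a fixed finite-dimensional space of dimension at most \(D^4\). The multivariate Donsker theorem therefore gives \(S_n\Rightarrow G\) in \(\ell^\infty[0,1]\), where \(G\) is a matrix Brownian motion whose increment covariance is that of \(\psi\psi^{\top}\). By continuous mapping, \(S_n(1)-S_n(\lambda)\Rightarrow G(1)-G(\lambda)\) jointly with \(S_n(\lambda)\), and the partial traces \(\mathrm{Tr}_Y\) and \(\mathrm{Tr}_X\) follow as linear images. The left-segment deviation is \(C_L(\lambda)=\hat\rho_{XY,L}-\bar\rho=(\lambda_n\sqrt n)^{-1}S_n(\lambda)\), with \(\lambda_n=\lfloor\lambda n\rfloor/n\). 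The right-segment deviation is analogous, with \(1-\lambda_n\) in place of \(\lambda_n\). Because \(\lambda\ge\varepsilon\), the tightness of \(S_n\) yields \(\sup_\lambda\|C_L(\lambda)\|+\sup_\lambda\|C_R(\lambda)\|=O_p(n^{-1/2})\).

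Second, I apply Steps 1 and 3 of P6 uniformly in \(\lambda\). The first-order cancellation of Step 1 is a sample-path identity that uses only Lemma 3 and the product form of \(\log\bar\rho\). It therefore holds simultaneously for every segment, and its left and right marginal deviations are exactly \(\mathrm{Tr}_Y C_L\) and \(\mathrm{Tr}_X C_L\), and likewise for \(C_R\). I then work on the event \(\{\sup_\lambda\max(\|C_L\|,\|C_R\|)\le\lambda_{\min}(\bar\rho)/2\}\), whose probability tends to one by the uniform bound above. On that event Step 3 gives \(\hat I_L(\lambda)=\Phi(C_L(\lambda))+R_L(\lambda)\) with \(\sup_\lambda|R_L|\le c\sup_\lambda\|C_L\|^3=O_p(n^{-3/2})\); here I use that the marginal deviations are bounded by \(\|C\|\) because partial traces are bounded operators. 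Since \(\Phi\) is a quadratic form, \(n\,\Phi(C_L(\lambda))=\lambda_n^{-2}\Phi(S_n(\lambda))\) exactly. The right segment is treated in the same way with \(S_n(1)-S_n(\lambda)\) and \((1-\lambda_n)^{-2}\).

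Third, I pass to the limit. The map \(F\mapsto(\lambda^{-2}\Phi(F(\lambda)),(1-\lambda)^{-2}\Phi(F(1)-F(\lambda)))\) is continuous from \(\ell^\infty[0,1]\) to \(\ell^\infty[\varepsilon,1-\varepsilon]^2\), because \(\Phi\) is continuous and locally Lipschitz on bounded sets and the weights are bounded on the trimmed range. The difference between \(\lambda_n^{-2}\) and \(\lambda^{-2}\) is \(O(1/n)\) uniformly, and \(\sup\|S_n\|=O_p(1)\), so replacing one by the other costs \(o_p(1)\). By continuous mapping and Slutsky's lemma the displayed joint convergence follows. The scan consequences then come from a further continuous mapping, namely a weighted supremum of \(|n\hat I_L-n\hat I_R|\) over a finite grid or over \([\varepsilon,1-\varepsilon]\).

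The main obstacle is the uniformity of the third-order remainder over \(\lambda\). The constant \(c\) depends on \(\lambda_{\min}(\bar\rho)\), so the argument needs the good event to hold simultaneously for all segments, which is why the trimming \(\varepsilon\) is indispensable. I would settle this first through Kolmogorov's maximal inequality applied entrywise to the partial sums, as in P3(b), so that the single supremum bound controls every segment at once.
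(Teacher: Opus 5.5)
Your proposal is correct and follows the paper's proof essentially step for step: the multivariate Donsker theorem for the matrix partial sums, the per-segment pathwise first-order cancellation, the uniform third-order Taylor bound on the event that every trimmed segment deviation stays below \(\lambda_{\min}(\bar\rho)/2\), the exact rescaling \(n\Phi(C_L(\lambda))=\lambda_n^{-2}\Phi(S_n(\lambda))\) with \(\lambda_n^{-2}-\lambda^{-2}=O(n^{-1})\), and continuous mapping. The Kolmogorov maximal-inequality step you single out as the main obstacle is redundant: the tightness of \(S_n\) that you already obtain from Donsker gives the uniform \(O_p(n^{-1/2})\) bound on \(C_L\) and \(C_R\), and this is exactly how the paper uses it.
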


\begin{proof}
The summands are i.i.d., bounded and centred after
subtracting \(\bar\rho\), so the multivariate Donsker theorem gives
\(W_n(\lambda):=n^{-1/2}\sum_{t\le\lambda n}\big(\mathrm{vec}(\psi_t\psi_t^{\top})-\mathrm{vec}\,\bar\rho\big)\Rightarrow G\)
in the Skorokhod space, with a continuous Gaussian limit and
\(\sup_\lambda\|W_n(\lambda)\|=O_p(1)\). For \(\lambda\ge\varepsilon\)
the left-segment deviation is
\(C_L(\lambda)=n^{-1/2}W_n(\lambda)/\lambda_n\) with
\(\lambda_n=\lfloor\lambda n\rfloor/n\), so
\(\sup_{\lambda\ge\varepsilon}\|C_L(\lambda)\|=O_p(n^{-1/2})\), and
symmetrically on the right with the increment \(W_n(1)-W_n(\lambda)\).
Step 1 of P6 --- the exact first-order cancellation --- holds for every
segment separately, because Lemma 3 and the unit trace are per-segment
sample identities. Step 3's Taylor bound therefore holds uniformly: on
the event
\(\sup_\lambda\|C_L(\lambda)\|\vee\|C_R(\lambda)\|\le\lambda_{\min}(\bar\rho)/2\),
whose probability tends to one,
\begin{equation}\sup_{\lambda}\Big|n\hat I_L(\lambda)-\lambda_n^{-2}\,\Phi\big(W_n(\lambda)\big)\Big|\ \le\ n\,c\,\sup_\lambda\|C_L(\lambda)\|^3\ =\ O_p(n^{-1/2})\ \to\ 0,\end{equation}
and \(|\lambda_n^{-2}-\lambda^{-2}|=O(n^{-1})\) uniformly on
\([\varepsilon,1-\varepsilon]\). Since \(\Phi\) is a fixed continuous
quadratic form, the map from the path of \(W_n\) to the pair of
processes above is continuous with respect to the supremum norm at every
continuous path, and the continuous mapping theorem yields the joint
limit; a further application of continuous mapping gives the convergence of the scan functionals.
\end{proof}

\begin{proposition}{P6-R}[the empirical-rank version]
Let the pairs be
i.i.d. with independent components and continuous margins, fix \(D\), the
frequency draw and the bandwidth, assume \(\rho_X\succ0\) and \(\rho_Y\succ0\) as in P6, and apply the deployed features of
Section 4.1 to the pseudo-observations
\(\hat U_t=\mathrm{rank}(X_t)/(m{+}1)\),
\(\hat V_t=\mathrm{rank}(Y_t)/(m{+}1)\) of the \(m\)-sample. Then the
conclusion of P6 holds with \(G\) replaced by the matrix Gaussian
\(\tilde G=G-\mathrm{Tr}_YG\otimes\rho_Y-\rho_X\otimes\mathrm{Tr}_XG\),
whose covariance carries the classical rank corrections, and
\(\Phi(\tilde G)=\Phi(G)\): the limit law, and with it the weights, are
those of P6, \(m\,\hat I_m\Rightarrow\sum_jw_jZ_j^2\).
\end{proposition}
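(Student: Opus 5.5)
The plan is to reduce P6-R to the proof of P6 by exploiting two facts: the first-order cancellation of Step 1 of P6 is a sample-path identity that does not care how the features were produced, and the limiting form \(\Phi\) vanishes on product-form (tangent) directions. Write \(\hat\varphi_{X,t}=\varphi(\hat U_t)\) and \(\hat\varphi_{Y,t}=\varphi(\hat V_t)\) for the deployed unit-norm rank features. Keep the \emph{oracle} population centre \(\bar\rho=\rho_X\otimes\rho_Y\), with \(\rho_X=\mathbb E[\varphi(U)\varphi(U)^{\top}]\) for \(U=F_X(X)\). Set \(C=\hat\rho_{XY}-\bar\rho\), \(A=\hat\rho_X-\rho_X\) and \(B=\hat\rho_Y-\rho_Y\). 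Lemma 3 holds for any unit-norm vectors, so \(\mathrm{Tr}_Y C=A\) and \(\mathrm{Tr}_X C=B\) exactly, and Step 1 of P6 goes through verbatim: the first-order term of \(\hat I\) around \(\bar\rho\) vanishes identically. Step 3 is also unchanged once I know \(\|C\|=O_p(m^{-1/2})\). Then \(m\hat I_m=\Phi(\sqrt m\,C)+o_p(1)\), and everything rests on the weak limit of \(\sqrt m\,C\).

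\textbf{Linearisation in the ranks.} The deployed map \(u\mapsto\varphi(u)\) is analytic on a neighbourhood of \([0,1]\), provided its unnormalised norm stays away from zero there. I take this as part of fixing the frequency draw; numerically the norm lies in \([0.91,1.25]\). Under that proviso, \(K_X(u)=\varphi(u)\varphi(u)^{\top}\) is \(C^2\) with bounded derivatives. A Taylor expansion in \(\hat U_t-U_t\), with \(\sup_t|\hat U_t-U_t|=O_p(m^{-1/2})\) by Dvoretzky--Kiefer--Wolfowitz (the \(m{+}1\) scaling costs \(O(m^{-1})\)), gives
\begin{equation}
\sqrt m\,C=\sqrt m\,C^{\mathrm{or}}+m^{-1/2}\sum_t\bigl[K_X'(U_t)(\hat U_t-U_t)\otimes K_Y(V_t)+K_X(U_t)\otimes K_Y'(V_t)(\hat V_t-V_t)\bigr]+o_p(1),
\end{equation}
where \(C^{\mathrm{or}}\) is the oracle deviation of P6. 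The cross term, quadratic in the rank errors, is \(O_p(m^{-1/2})\).

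Next write \(\hat U_t-U_t=m^{-1}\sum_s(\mathbf 1\{U_s\le U_t\}-U_t)+O(m^{-1})\). The first correction is then a V-statistic of order two. Its Hájek projection, using independence of \(X\) and \(Y\) so that \(\mathbb E[K_Y(V_t)]=\rho_Y\) factors out, equals
\begin{equation}
\Bigl(m^{-1/2}\sum_s\int_0^1K_X'(u)\bigl(\mathbf 1\{U_s\le u\}-u\bigr)\,du\Bigr)\otimes\rho_Y+o_p(1),
\end{equation}
and the degenerate remainder is \(O_p(m^{-1/2})\). The \(Y\) term is handled symmetrically. Hence \(\sqrt m\,C\Rightarrow\tilde G=G+E_X\otimes\rho_Y+\rho_X\otimes E_Y\), jointly Gaussian by the multivariate CLT applied to the summed projections, with \(E_X\) and \(E_Y\) traceless because \(\mathrm{tr}\,K_X\equiv1\). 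Taking partial traces identifies the displayed formula for \(\tilde G\). What the argument actually uses, however, is only the product form of the corrections, not their exact covariance.

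\textbf{Annihilation by \(\Phi\).} This is the step that makes the law unchanged. Take traceless \(E_X\) and \(E_Y\) and the curve of product states \(\gamma(s)=(\rho_X+sE_X)\otimes(\rho_Y+sE_Y)\). On this curve \(I\equiv0\) by additivity of the entropy under tensor products. Differentiate twice at \(s=0\). The first-order part of the expansion vanishes on every direction, including the curvature term \(2E_X\otimes E_Y\), by the Step-1 identity. So \(\Phi(E_X\otimes\rho_Y+\rho_X\otimes E_Y)=0\). Moreover \(\Phi\) is positive semi-definite on traceless symmetric matrices: near \(\bar\rho\) we have \(I(M)=S(M\Vert\mathrm{Tr}_YM\otimes\mathrm{Tr}_XM)\ge0\) by Klein, with no first-order term, so \(\Phi\) is its Hessian form. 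A PSD quadratic form vanishing at \(v\) has \(v\) in its kernel, so \(\Phi(G+v)=\Phi(G)\) for every realisation. Therefore \(\Phi(\tilde G)=\Phi(G)\) pathwise, and the weights \(w_j\) are those of P6.

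\textbf{Process level and the main obstacle.} For the process-level statement I would replace the CLT by the sequential empirical process: the ranks are taken once over the whole window, and the Hájek projection becomes a Kiefer-process term in the split fraction. The same product-form annihilation then applies per segment, as in the Corollary. I expect the main obstacle to be the uniform control of the rank linearisation. This needs a uniform bound on the V-statistic remainder, and a uniform bound over split fractions at the process level, together with the assumption that the deployed norm stays bounded away from zero, which guarantees smoothness of \(K_X\). I would first prove the pointwise version via the U-statistic projection, then obtain uniformity from the Lipschitz dependence on \(\lambda\) and the weak convergence of the sequential empirical process.
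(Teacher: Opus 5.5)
Your proof is correct, but it differs from the paper's in both substantive steps.

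For the law of $\sqrt m\,C$, the paper writes each entry of $\hat\rho_{XY}$ as $\int H\,\mathrm d\hat C_m$ with $H(u,v)=K_X(u)_{ac}K_Y(v)_{bd}$. It integrates by parts, so that $\sqrt m\int H\,\mathrm d(\hat C_m-C)$ becomes a sup-norm-continuous linear functional of the empirical copula process, and then imports that process's weak convergence at the independence copula [34]. You instead Taylor-expand $K_X(\hat U_t)$ about the oracle $U_t$ and take the Hájek projection of the resulting V-statistic. This is closer to the linearisation the paper uses only later, for global ranks in its process-level Corollary. Your version is self-contained and shows the rank correction as an exactly product-form, traceless term already at finite $m$; the paper's delegates all remainder control to a known theorem. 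Your identification of $E_X$ through partial traces is also sound: the rank marginal state $m^{-1}\sum_i K_X(i/(m{+}1))$ is a deterministic Riemann sum, so $\mathrm{Tr}_Y\tilde G=0$ and hence $E_X=-\mathrm{Tr}_YG$.

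For the annihilation, the paper computes in the product eigenbasis. It uses $k(\lambda_a\mu_b,\lambda_{a'}\mu_b)=k(\lambda_a,\lambda_{a'})/\mu_b$ to show that the bilinear form of $\Phi$ between \emph{any} symmetric $E$ and $A\otimes\rho_Y$ vanishes, with no positivity needed. You argue instead that the correction is tangent to the family of product states on which $I\equiv0$. Since $\Phi$ is one half of the Hessian of the non-negative function $M\mapsto S(M\Vert\mathrm{Tr}_YM\otimes\mathrm{Tr}_XM)$ at a critical point, it is positive semi-definite on traceless matrices, and Cauchy--Schwarz then puts the correction in its kernel. Your route explains why the cancellation must occur and yields $w_j\ge0$ directly. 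Its cost is the restriction to traceless directions, which $G$ and the corrections do satisfy; the paper's argument is a bare algebraic identity.

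One small point: you need not assume that $\lVert f\rVert$ stays away from zero on $[0,1]$. The paper notes this holds almost surely over the frequency draw for $D\ge2$.
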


\begin{proof}
Steps 1, 3 and 4 of P6 are statements about the sample and
the fixed reference \(\bar\rho\) --- the partial-trace identity, the
unit trace, the entropy expansion, and the non-negativity of \(\hat I\)
--- and hold for arbitrary inputs in \((0,1)\), in particular for
pseudo-observations; only Step 2, the law of \(\sqrt m\,C\), changes.
Each entry of \(\hat\rho_{XY}\) is \(\int H\,\mathrm d\hat C_m\) with
\(H(u,v)=K_X(u)_{ac}K_Y(v)_{bd}\), where \(\hat C_m\) is the empirical
measure of the pseudo-observations, whose distribution function is the
empirical copula. For the deployed map
\(K_X(u)=f(u)f(u)^{\top}/\lVert f(u)\rVert^2\) with
\(f(u)=\sqrt{2/D}\cos(uW+b)\), so \(H\) is a ratio of trigonometric
polynomials. It is real-analytic on \([0,1]^2\) provided
\(\min_{u\in[0,1]}\lVert f(u)\rVert>0\), which holds almost surely over
the draw of \((W,b)\) for \(D\ge2\) (two cosines with independent uniform phases have a common zero with probability zero, and \([0,1]\) is compact).

Integrating by parts turns
\(\sqrt m\int H\,\mathrm d(\hat C_m-C)\) into the integral of the
empirical copula process against the fixed finite signed measure
\(\mathrm dH\) --- a bounded linear functional with respect to the
supremum norm --- and the empirical copula process converges weakly at
the independence copula, which satisfies the non-restrictive smoothness
condition of [34] trivially; the Cramér--Wold device then gives
\(\sqrt m\,C\Rightarrow\tilde G\) jointly with its partial traces. At
the independence copula the marginal part of that limit,
\(-[v\,\mathbb B(1,u,1)+u\,\mathbb B(1,1,v)]\), with \(\mathbb B\) the Kiefer process that limits the sequential empirical process of the pairs [10], integrates against \(H\)
to \(-(\mathrm{Tr}_YG\otimes\rho_Y+\rho_X\otimes\mathrm{Tr}_XG)\), where
\(\int K_Y(v)\,\mathrm dv=\rho_Y\) and \(\mathrm{Tr}_YG\) is traceless
because \(\mathrm{tr}\,K_X\equiv1\). This gives the stated form of
\(\tilde G\). \emph{The form \(\Phi\) annihilates the correction.} Let
\(P=A\otimes\rho_Y+\rho_X\otimes B\) with \(A,B\) symmetric and traceless.
In the product eigenbasis, where \(\bar\rho\) has eigenvalues
\(\lambda_a\mu_b\) and
\(k(\lambda_a\mu_b,\lambda_{a'}\mu_b)=k(\lambda_a,\lambda_{a'})/\mu_b\),
one has
\(q_{\bar\rho}(E,A\otimes\rho_Y)=q_{\rho_X}(\mathrm{Tr}_YE,A)\) for every
symmetric \(E\), and \(\mathrm{Tr}_X(A\otimes\rho_Y)=(\mathrm{tr}A)\rho_Y=0\).
Hence the bilinear form of \(\Phi\) vanishes between \(E\) and
\(A\otimes\rho_Y\). The same holds for \(\rho_X\otimes B\), and the cross
term between the two is \(-\tfrac12\mathrm{tr}A\,\mathrm{tr}B=0\). So
\(\Phi(E+P)=\Phi(E)\) for every \(E\), and
\(\Phi(\tilde G)=\Phi(G)\). The degeneracy structure is
untouched because Step 1 is pathwise, and Step 3 needs only
\(\|C\|=O_p(m^{-1/2})\), which the convergence supplies.
\end{proof}

\begin{corollary}[process level, empirical ranks]
Under the
assumptions of P6-R, let the pseudo-observations be computed \emph{once}
from the whole window of \(n\) observations and reused for every left
and right segment, as the deployed scan does (Section 4.6). Then the
joint limit of the process-level Corollary holds unchanged, with the
same oracle matrix Brownian motion \(G\). The same is true when the ranks
are recomputed within each segment.
\end{corollary}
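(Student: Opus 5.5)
The plan is to repeat the process-level Corollary of P6, with the input to Step 2 changed. Steps 1, 3 and 4 are pathwise and hold for any inputs in \((0,1)\). So what has to be shown is a joint functional limit for the rank-based partial-sum process
\begin{equation}\tilde W_n(\lambda)=n^{-1/2}\sum_{t\le\lambda n}\big(\mathrm{vec}(\hat\psi_t\hat\psi_t^{\top})-\mathrm{vec}\,\bar\rho\big),\end{equation}
where \(\hat\psi_t\) is built from the window-wide pseudo-observations. I will show that \(\tilde W_n\Rightarrow\tilde G\) in \(\ell^\infty[0,1]\), with
\begin{equation}\tilde G(\lambda)=G(\lambda)-\lambda\big(\mathrm{Tr}_Y G(1)\otimes\rho_Y+\rho_X\otimes\mathrm{Tr}_X G(1)\big).\end{equation}
The segment deviations are then \(C_L(\lambda)=n^{-1/2}\tilde W_n(\lambda)/\lambda_n\), and symmetrically on the right. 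The uniform Taylor bound of the Corollary's proof carries over verbatim, because it only needs \(\sup_\lambda\|C_L\|\vee\|C_R\|=O_p(n^{-1/2})\).

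To get the functional limit, I write each entry of \(\tilde W_n(\lambda)\) as \(\sqrt n\int H\,\mathrm d(\hat C_n(\lambda,\cdot,\cdot)-\lambda C)\). Here \(\hat C_n(\lambda,u,v)\) is the sequential empirical copula \(n^{-1}\sum_{t\le\lambda n}\mathbf 1\{\hat U_t\le u,\hat V_t\le v\}\), and \(H(u,v)=K_X(u)_{ac}K_Y(v)_{bd}\) is real-analytic almost surely over the frequency draw, as in P6-R. Integration by parts on \([0,1]^2\) turns this into a bounded linear functional, in the supremum norm, of the sequential empirical copula process. By [10,11,34], at the independence copula that process converges to
\begin{equation}\mathbb B(\lambda,u,v)-\lambda\big[v\,\mathbb B(1,u,1)+u\,\mathbb B(1,1,v)\big],\end{equation}
where \(\mathbb B\) is a Kiefer process. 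Integrating against \(\mathrm dH\), exactly as in P6-R, gives \(G(\lambda)\) from the first term. The second term gives \(-\lambda\) times the product-form correction, using \(\int K_Y=\rho_Y\) and \(\mathrm{tr}\,K_X\equiv1\). The Cramér--Wold device applied to the finitely many entries gives joint convergence.

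The decisive step is annihilation. For each \(\lambda\), \(C_L(\lambda)\) and \(C_R(\lambda)\) differ from the oracle ones only by a term of the form \(A\otimes\rho_Y+\rho_X\otimes B\) with traceless \(A,B\); on the right the increment \(\tilde G(1)-\tilde G(\lambda)\) carries the factor \(1-\lambda\). The lemma in the proof of P6-R shows \(\Phi(E+P)=\Phi(E)\) for every such \(P\). Because \(\Phi\) is a fixed quadratic form, the map \(W\mapsto(\lambda^{-2}\Phi(W(\lambda)),(1-\lambda)^{-2}\Phi(W(1)-W(\lambda)))\) therefore sends \(\tilde G\) and \(G\) to identical paths. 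The continuous mapping theorem then yields the joint limit with the oracle \(G\), and with it the scan functionals.

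For ranks recomputed within each segment, the argument is the same with a segment-specific correction. The left segment uses the empirical copula of \(\{1,\dots,\lfloor\lambda n\rfloor\}\), whose limit is \(\mathbb B(\lambda,u,v)-v\,\mathbb B(\lambda,u,1)-u\,\mathbb B(\lambda,1,v)\). This is again product form, now with \(A=\mathrm{Tr}_Y G(\lambda)\), and \(\Phi\) annihilates it. The right segment uses increments of \(\mathbb B\). Uniformity in \(\lambda\) follows because these are continuous maps of the single Kiefer-process limit.

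The main obstacle I expect is the sequential empirical copula limit uniformly in \(\lambda\in[\varepsilon,1-\varepsilon]\), together with the integration by parts against \(\mathrm dH\) inside \(\ell^\infty\). I would cite the sequential result of [10] (Bücher–Kojadinovic type) at the independence copula, where the smoothness condition holds trivially. Ties have probability zero under continuous margins, and the \(n/(n+1)\) scaling of the ranks costs \(O(n^{-1})\) uniformly, which is absorbed.
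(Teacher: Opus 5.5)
Your proposal is correct, but it takes a different route from the paper's proof. You lift the integration-by-parts argument of P6-R to the sequential level. You cite weak convergence of the sequential empirical copula process of the global pseudo-observations in \(\ell^\infty([0,1]^3)\), push it through the bounded linear map \(F\mapsto\int F\,\mathrm dH\), and apply the annihilation lemma to the limit \(\tilde G=G-\lambda P\), so the scan has the oracle limit in law. The paper's proof does not rest on a sequential copula limit; it mentions one only to interpret the correction. Instead it linearises the features in the rank error. Starting from \(\sqrt n(\hat U_t-U_t)=\alpha_n^X(U_t)+O(n^{-1/2})\) and the \(C^2\) smoothness of \(K_X,K_Y\), it Taylor-expands to first order. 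It then approximates \(\alpha_n^X\) by step functions, so that a uniform law of large numbers for partial sums and \(\mathbb E K_Y(V)=\rho_Y\) reduce the correction to \(\lambda\int K_X'\alpha_n^X\,\mathrm du\otimes\rho_Y\). A one-variable integration by parts identifies this with \(-\lambda\,\mathrm{Tr}_YW_n(1)\otimes\rho_Y\). The result is a uniform in-probability representation \(\sqrt n\,C_L(\lambda)=\lambda_n^{-1}[W_n(\lambda)-\lambda_nP_n]+o_p(1)\), with the oracle partial-sum process \(W_n\) on the same probability space, after which the oracle Corollary applies verbatim. Your version is shorter and handles global and within-segment ranks symmetrically through the two standard sequential limits. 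The cost is that it leans on a heavier external result, the global-rank sequential copula limit, which you quote rather than derive, and it delivers only equality of limit laws. The paper's linearisation needs only the one-dimensional uniform empirical process. It also yields a stronger statement: the rank-based and oracle scans are asymptotically equivalent in probability, so ``the same oracle \(G\)'' holds as a coupling and not just in distribution.
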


\begin{proof}
With global ranks the summands of a segment are no longer
i.i.d., so the substitution used for P6-R does not apply directly; we
linearise instead. Let \(U_t=F_X(X_t)\), \(V_t=F_Y(Y_t)\) and let \(W_n\)
be the oracle partial-sum process of the Corollary above. The
pseudo-observations satisfy \(\sup_t|\hat U_t-U_t|=O_p(n^{-1/2})\), and
\(\sqrt n(\hat U_t-U_t)=\alpha_n^X(U_t)+O(n^{-1/2})\), where
\(\alpha_n^X\) is the full-sample uniform empirical process. Since \(K_X\)
and \(K_Y\) are \(C^2\) (proof of P6-R), a first-order Taylor expansion
with uniformly bounded second-order remainder gives
\begin{equation}\begin{aligned}\sqrt n\,\big(\hat\rho_{XY,[0,\lfloor\lambda n\rfloor)}-\rho^{\circ}_{XY,[0,\lfloor\lambda n\rfloor)}\big)&=\lambda_n^{-1}\,\frac1n\sum_{t\le\lambda n}\Big[K_X'(U_t)\alpha_n^X(U_t)\otimes K_Y(V_t)\\
&\qquad+K_X(U_t)\otimes K_Y'(V_t)\alpha_n^Y(V_t)\Big]+O_p(n^{-1/2}),\end{aligned}\end{equation} uniformly in \(\lambda\), where \(\rho^{\circ}\) is the oracle segment moment. Approximate \(\alpha_n^X\) by a step function on a mesh of width \(\delta\); asymptotic equicontinuity makes the error small uniformly. For each step, the uniform law of large numbers for partial sums of bounded i.i.d. terms applies, and independence of \(U_t\) and \(V_t\) gives \(\mathbb E[K_Y(V)]=\rho_Y\). The first sum is therefore \(\lambda\int_0^1K_X'\alpha_n^X\,\mathrm du\otimes\rho_Y+o_p(1)\) uniformly, and integrating by parts (\(\alpha_n^X(0)=\alpha_n^X(1)=0\)) gives \(\int K_X'\alpha_n^X=-\mathrm{Tr}_YW_n(1)\). Hence, uniformly on \([\varepsilon,1-\varepsilon]\), \begin{equation}\begin{aligned}\sqrt n\,C_L(\lambda)&=\lambda_n^{-1}\big[W_n(\lambda)-\lambda_nP_n\big]+o_p(1),\\
\sqrt n\,C_R(\lambda)&=(1-\lambda_n)^{-1}\big[W_n(1)-W_n(\lambda)-(1-\lambda_n)P_n\big]+o_p(1),\end{aligned}\end{equation} with the \emph{common} correction \(P_n=\mathrm{Tr}_YW_n(1)\otimes\rho_Y+\rho_X\otimes\mathrm{Tr}_XW_n(1)\). This is the image under \(H\) of the marginal part \(-\lambda[v\,\mathbb B(1,u,1)+u\,\mathbb B(1,1,v)]\) of the limit of the sequential empirical process of the global pseudo-observations at the independence copula (cf. [10,11]). \(P_n\) has the product form of the proof of P6-R, with traceless factors, so \(\Phi(\sqrt n\,C_L(\lambda))=\lambda_n^{-2}\Phi(W_n(\lambda))+o_p(1)\) uniformly, and likewise on the right. Steps 1 and 3 hold per segment, because Step 1 is a sample identity and Step 3 a Taylor bound that needs only \(\sup_\lambda\|C_L\|\vee\|C_R\|=O_p(n^{-1/2})\), which the display supplies. The rest of the proof of the Corollary applies verbatim. Within-segment ranks give the same display, with each segment's own marginal fluctuation in place of \(\mathrm{Tr}_YW_n(1)\); that correction is again of product form and drops out.
\end{proof}

A finite-sample comparison of the global-rank, within-segment-rank and oracle statistics is in Section~\ref{sec:B3} [NV-8]. The boundary theory
(pointwise, at the process level and for the grid scan) holds, by the Corollary above, for the statistic exactly as computed, for a fixed frequency draw
and bandwidth. The deployed bandwidth is the median heuristic on a fixed 400-point subsample of the global pseudo-observations, a random quantity; the extension of the limit theory to this data-selected bandwidth is not established here, and Section~\ref{sec:B16} checks the deployed calibration empirically. The weights of the limit are continuous functions of the population feature moments, and the whole-series moments of the rank features converge to them. Provided the limit law of the scan is continuous at the nominal quantile and, for the studentised version, has positive variance at every candidate, critical values obtained from the limit with plug-in weights are consistent as \(n\) and the number of simulation draws grow; with the 2000 draws used here a Monte Carlo error remains. Section~\ref{sec:B16} compares them with the exact null law. The
permutation studentisation remains outside the theory; its null law
stabilises across \(n\) in the simulation of Section~\ref{sec:B3} [NV-6].

\subsection{Proposition P7 (consistency of the exchangeability diagnostic)}\label{sec:A13}

\begin{setting}
Let \((X_t,Y_t)\) be strictly stationary and ergodic
with continuous margins, and run the diagnostic of Section 4.3 with \(L\) lags, \(K\) permutation replicas and Bonferroni level \(\alpha\), all held fixed as \(n\to\infty\), with \(K\ge 5/\alpha-1\), so that the minimal attainable permutation
p-value \(1/(K{+}1)\) does not exceed the per-statistic threshold
\(\alpha/5\) (the paper uses \(K=199\), \(\alpha=0.05\)). Suppose the
serial structure is visible to one of the five statistics: for some lag
\(1\le k\le L\), the population autocorrelation \(\rho_k\) of
\(g(F_X(X_t))\) or \(g(F_Y(Y_t))\) is non-zero for \(g(u)=u\) (levels)
or \(g(u)=(u-\tfrac12)^2\) (scale), or a lagged cross-correlation of the
transformed margins is non-zero.
\end{setting}

\begin{claim}
The probability that the protocol selects the block calibration tends to one.
\end{claim}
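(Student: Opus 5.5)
The protocol selects block calibration as soon as one of the five permutation p-values is at most \(\alpha/5\). It therefore suffices to fix the statistic \(T\) that sees the serial structure and show that its own p-value reaches \(\alpha/5\) with probability tending to one. Take \(T\) to be the Ljung--Box portmanteau \(T=n(n+2)\sum_{k\le L}\hat r_k^2/(n-k)\) on the relevant sequence, levels, scale or cross. Write \(\hat r_k\) for the lag-\(k\) sample autocorrelation of \(g(\hat u_t)\), the transformed pseudo-observations. I would compare two quantities: the observed value, which grows linearly in \(n\), and the permutation distribution, which stays bounded in probability.

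\textbf{Step 1 (observed statistic diverges).} The pseudo-observations satisfy \(\sup_t|\hat u_t-F_X(X_t)|\to0\) almost surely by Glivenko--Cantelli. Since \(g\) is Lipschitz on \([0,1]\), \(\hat r_k\) differs from the sample autocorrelation of the oracle sequence \(g(F_X(X_t))\) by \(o(1)\) a.s. By the ergodic theorem the oracle autocorrelation converges to \(\rho_k\neq0\); the variance is positive because the margins are continuous. Hence \(T_0/n\to c>0\) a.s. The cross-correlation case is the same, with the cross-lag ergodic average.

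\textbf{Step 2 (permutation replicas stay bounded).} Condition on the data. Under a uniform random permutation the sequence \(z_{\sigma(t)}\) is a random ordering of a fixed finite population, and the pair is permuted jointly for the cross statistic. The finite-population permutation moments of Wald and Wolfowitz [33] give \(E_\sigma[\hat r_k]=O(1/n)\) and \(E_\sigma[\hat r_k^2]=O(1/n)\) uniformly over populations with positive variance. The sample variance of \(g(\hat u_t)\) is a deterministic positive constant up to \(O(1/n)\), since ranks are a permutation of \(\{1,\dots,n\}/(n+1)\). Consequently \(E_\sigma[T_k]=O(1)\) conditionally, uniformly in the data, and for the cross portmanteau the lagged cross-products have the same conditional order. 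By Markov's inequality, \(P_\sigma(T_k\ge T_0)\le E_\sigma[T_k]/T_0=O(1/n)\) on the event of Step 1.

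\textbf{Step 3 (conclusion).} Apply a union bound over the fixed \(K\) replicas: \(P(\exists k:T_k\ge T_0)\le K\,O(1/n)+P(T_0<cn/2)\to0\). On the complementary event the p-value equals its minimum \(1/(K+1)\le\alpha/5\), so the test rejects and block calibration is selected.

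The main obstacle is Step 2, specifically a uniform conditional bound on the permutation second moment of \(\hat r_k\), or of the cross-correlation, that holds for the actual data population rather than in expectation. The plan is to compute \(E_\sigma[(\sum_t z_{\sigma(t)}z_{\sigma(t+k)})^2]\) explicitly through the Wald--Wolfowitz formulas, expressing it in terms of the power sums \(\sum z_t^2\) and \(\sum z_t^4\), which are bounded because \(|z_t|\le1\). The resulting bound is \(O(n)\) against a denominator of order \(n^2\).
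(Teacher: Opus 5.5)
Your proposal is correct and follows the paper's proof essentially step for step. The observed portmanteau diverges by the stationary-ergodic Glivenko--Cantelli property, Lipschitz $g$ and the ergodic theorem; the Wald--Wolfowitz $O(n^{-1})$ permutation moments keep each replica statistic bounded; and the $p$-value then attains its minimum $1/(K+1)\le\alpha/5$. Your Markov-plus-union-bound step, with the remark that the rank multiset is deterministic so the conditional bound is uniform in the data, simply makes explicit the paper's $O_p(1)$ comparison of the observed statistic against the $K$ replicas.
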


\begin{proof}
Write \(z_t=g(\hat U_t)-n^{-1}\sum_{s}g(\hat U_s)\) for the centred transformed ranks of the margin in question, with \(g(u)=u\) for the level statistics and \(g(u)=(u-\tfrac12)^2\) for the scale statistics, and \(z_t^{\circ}=g(F_X(X_t))-E\,g\) for its oracle counterpart; the argument applies to either \(g\). \textbf{Step 1 (the observed statistic diverges).} The
ergodic theorem gives a.s. convergence of the sample autocovariances of
the bounded sequence \(z^{\circ}\) to their population values, and the
Glivenko--Cantelli property holds for strictly stationary ergodic
sequences, so \(\max_t|z_t-z_t^{\circ}|\to0\) a.s. (\(g\) is Lipschitz
on \([0,1]\)); boundedness then gives \(\hat r_k\to\rho_k\neq0\) a.s.
for the rank-based sample autocorrelation, whence the observed
portmanteau statistic satisfies
\(Q_{\mathrm{obs}}\ge n(n+2)\hat r_k^{\,2}/(n-k)\to\infty\) a.s.

\textbf{Step 2 (the replicas stay bounded).} Conditionally on the data,
a replica applies a uniformly random permutation to the pairs, and the
rank values form a fixed bounded multiset. For fixed bounded sequences,
the mean and variance of a lag-\(k\) serial (or lagged cross-)
covariance under a uniform random permutation are \(O(n^{-1})\) --- a
direct finite-population computation, classical since Wald and Wolfowitz
[33] --- so each replica's autocorrelations are \(O_p(n^{-1/2})\)
and each replica's portmanteau statistic is \(O_p(1)\); with \(L\) and
\(K\) fixed, the maximum over the \(K\) replicas is \(O_p(1)\).

\textbf{Step 3.} Hence the observed statistic exceeds all \(K\) replicas
with probability tending to one, its permutation p-value attains the
minimum \(1/(K{+}1)\le\alpha/5\), the Bonferroni rule rejects, and the
protocol selects the block calibration.
\end{proof}

\begin{remark}
Together with the Remark on testing exchangeability itself in \ref{sec:A5}, P7 makes the
diagnostic a bona fide test of exchangeability: exact level under the
exchangeable null, and power tending to one against every stationary
ergodic alternative whose serial structure is visible to one of the five
statistics at some lag \(k\le L\). Alternatives invisible to all five
--- structure only beyond lag \(L\), or orthogonal to levels, scales and
lagged cross-moments --- are not covered, in line with the impossibility
half of that Remark.
\end{remark}

%% file: appendixB.tex
\section{Additional experiments}

Sections~\ref{sec:B1}--\ref{sec:B5} support the synthetic results of Section 6: sensitivity to the design parameters (\ref{sec:B1}), the ablation of the functional (\ref{sec:B2}), finite-sample illustrations and simulation studies (\ref{sec:B3}), model-based and log-determinant baselines (\ref{sec:B4}), and variability over feature and calibration draws and over the break location (\ref{sec:B5}). Sections~\ref{sec:B6} and~\ref{sec:B7} vary what the main text fixes: the variable-block dimension of Section 4.1 and the kernel bandwidth. Section~\ref{sec:B8} gives the calibration and robustness checks for the applications of Sections 6.7 and 6.8, Section~\ref{sec:B9} the wall-clock timings for Sections 4.6 and 4.8, Section~\ref{sec:B10} the Gram form and the feature dimension, and Section~\ref{sec:B11} the non-Gaussian suite of Section 6.1 in full. Section~\ref{sec:B12} collects figures and tables cited from Sections 3.4, 4.3, 6.1--6.3, 6.5 and 6.8. The remaining sections report the matrix-information baseline BMCTC (\ref{sec:B13}), an exploratory resampling check of the whole two-stage procedure of Section 4.7 (\ref{sec:B14}), detection, localisation and permutation calibration in the single-break comparison, and a marginal-shape control for S2 (\ref{sec:B15}), the permutation-free calibration of Section 4.6 (\ref{sec:B16}), calibration under serial dependence (\ref{sec:B17}, Section 4.3), the combined statistic of Section 4.5 (\ref{sec:B18}), a simulation on empirical weather margins (\ref{sec:B19}), a representation-learning baseline (\ref{sec:B20}), a sign reversal of a correlation (\ref{sec:B21}) and the US financial panel of Section 6.8 under a fixed protocol (\ref{sec:B22}).

\subsection{Sensitivity}\label{sec:B1}

This section varies the series length, the candidate-window parameter
and the feature dimension \(D\) on scenario S2 at \(a=0.7\), with the
other settings of Section 6.1 held fixed. The sweep uses 200 replicates
and its own base seed, and estimates its thresholds from 100 null
replicates instead of 250, so its absolute levels are not comparable
with Table 1: at \(n=600\) it gives 0.475 where Table 1 gives 0.35 for
the same configuration.

Localised power (\(|\hat\tau-\tau|\le n/20\)) at \(n=300\), 600 and
1200 is 0.215, 0.475 and 0.910. Varying the candidate-window parameter
over \(\{40,60,100\}\) leaves power at 0.475: the global statistic of
Section 4.2 does not depend on the window, which only trims the
candidate grid, and each of the three grids contains the true break.
The sliding two-window form is not examined. At \(D=4\), 8 and 16 power
is 0.32, 0.475 and 0.41. \(D=8\) is used throughout, and no parameter is
varied per scenario; since \(D=8\) is also the best of the three values
on this sweep, the sweep does not validate that choice independently.

\subsection{Ablation: entropy versus Frobenius versus spectral}\label{sec:B2}

This ablation tests the claim of Sections 3.1 and 3.4 that, on the
alternatives targeted here, the entropy functional is more sensitive
than the Frobenius and spectral-only functionals of the same second
moment. It uses a separate scenario suite (200 replicates, \(n=600\),
break at 300, same 0.05 calibration) spanning mean shift, covariance
rotation at fixed spectrum, mixture shape, heavy tails, and dependence
without correlation, with all three functionals computed from the
\emph{same} raw-data random Fourier features (64 components) --- that
is, without the rank transform and tensor structure of Section 4.1, so
that the functional is the only quantity varied. Figure~\ref{fig:B1} shows the
comparison.

\begin{figure*}[!t]
\centering
\includegraphics[width=\textwidth,height=0.42\textheight,keepaspectratio]{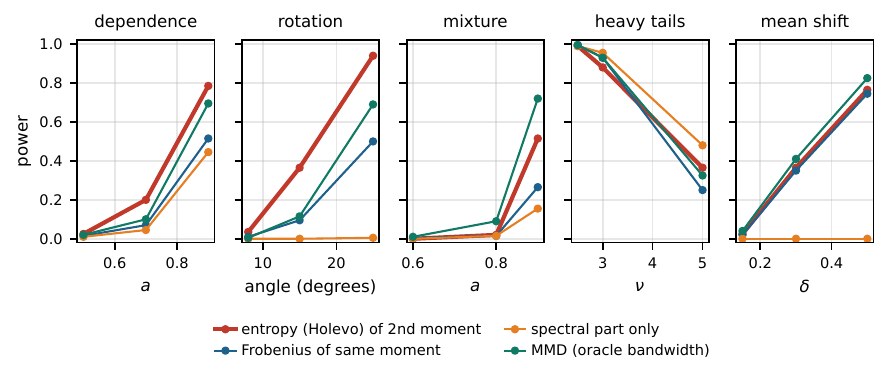}
\caption{Ablation across five alternative families --- dependence without correlation, covariance rotation, mixture shape, heavy tails and a mean shift, one panel each (\(x\)-axes: \(a\), rotation angle, \(a\), degrees of freedom \(\nu\), shift \(\delta\)) --- comparing entropy, Frobenius and spectral-only functionals of the same feature moment against MMD with an oracle bandwidth. Among the three functionals the ordering reverses only on heavy tails.}
\label{fig:B1}
\end{figure*}

On the two alternative types this paper targets, the entropy (Holevo)
functional is the strongest of the three: dependence without correlation
0.79 against 0.52 (Frobenius) and 0.45 (spectral-only), covariance
rotation 0.94 against 0.50 and 0.01. There the spectral-alignment excess
\(\Delta_{\mathrm{nc}}\) of Lemma 2 carries the signal, as Section 3.1
anticipates. On mixture shape the entropy functional also leads, 0.52
against 0.27 (Frobenius) and 0.16 (spectral) at the strongest level, and
at the weakest level all three are near zero. The exception
is the heavy-tailed family, where the spectral-only statistic is the
strongest of the three (at \(\nu=5\), 0.48 against 0.37 for the entropy and 0.25 for the Frobenius functional; at \(\nu=3\), 0.96 against 0.88 and 0.93), which indicates that the spectral-alignment excess does not capture a change
in tail weight. Across the fifteen cells the Frobenius functional
exceeds the entropy functional in three, none of them on a family this
paper targets. Two are ties to within one replicate. The third is heavy
tails at \(\nu=3\), 0.93 against 0.88: the paired McNemar test (both functionals are computed from the same replicates) gives 15 discordant replicates favouring Frobenius
against 5 favouring entropy and an uncorrected exact \(p\) of 0.041,
which does not survive correction for the three comparisons
(\(p=0.12\)). It is a reversal within the family already identified as the
exception, not an established difference. Against MMD with an oracle bandwidth (best of five), the
entropy statistic is higher on the dependence and rotation alternatives
(0.79 against 0.70; 0.94 against 0.69) and lower on mean shift (0.77 against 0.83 at the largest shift) and on mixture shape (0.52 against 0.72).

\subsection{Finite-sample illustrations and simulation studies}\label{sec:B3}

This section illustrates the boundary scaling of P6 and the finite-sample behaviour of the implemented permutation test, reports the localisation error that P3 concerns, investigates three properties that Supplement A leaves unproved, and compares global and within-segment ranks at the boundary. At \(\alpha=0.02/0.05/0.10/0.20\) the permutation p-value rejects at empirical rates 0.030/0.050/0.110/0.203 (300 null replicates, \(K=49\)), within
binomial error of nominal. The p-value form matters at finite \(K\): on the independent null of Section 4.3 under pair permutation, the rule that compares \(T_{\mathrm{obs}}\) with the empirical \((1-\alpha)\) quantile of the \(K\) replica maxima, which drops the \(+1\) of (9), rejects in 7.0\% of replicates at \(K=99\) against a nominal 0.05, where the p-value rule gives 5.0\%. Figure~\ref{fig:B2} collects these illustrations.

\begin{figure*}[!t]
\centering
\includegraphics[width=\textwidth,height=0.42\textheight,keepaspectratio]{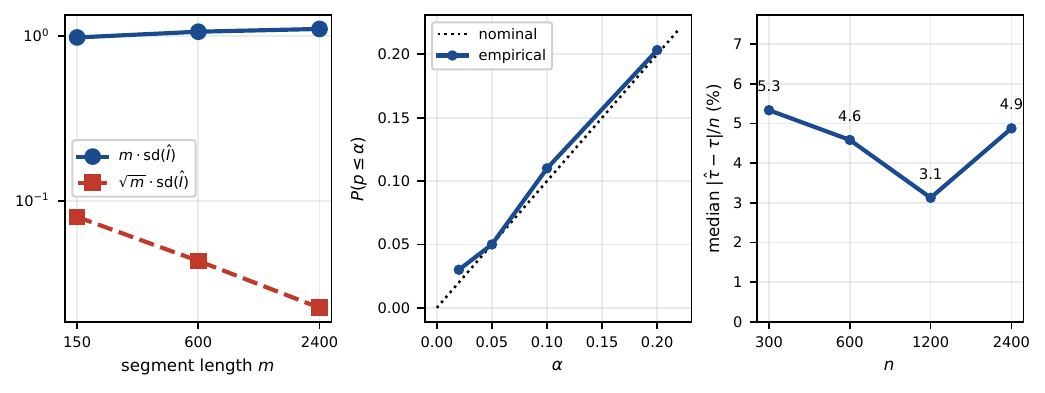}
\caption{Finite-sample illustrations. Left: scaling of the plug-in under independence, illustrating the rate $m$ of P6 (Sections~\ref{sec:A6} and \ref{sec:A12}). Middle: empirical rejection frequencies of the permutation test, whose finite-sample validity under the stated exchangeability assumptions is established by P1. Right: relative localisation error of a permutation-studentised estimator (S2 at \(a=0.9\), 120 replicates, 25 pair-permutation replicas); this panel does not verify P3, which concerns the unstudentised criterion.}
\label{fig:B2}
\end{figure*}

The left panel illustrates the \(m\) scaling predicted by P6: \(m\,\mathrm{sd}(\hat I)\) is 0.98, 1.07 and 1.11 at \(m=150\), 600 and 2400, whereas \(\sqrt m\,\mathrm{sd}(\hat I)\) falls by a factor of 3.6 over the same range; it is not a separate proof of the limiting law. The middle panel reports the empirical rejection frequencies above. The right panel shows the relative
localisation error, which is stable across \(n\) without a monotone
trend; the power scaling with \(n\) that P3 predicts is reported
numerically in Section~\ref{sec:B1} rather than plotted.

\emph{Identifiability in \(D\)} ([NV-5], for \ref{sec:A8}(iii)): the
population DOMI of four dependent families, estimated on samples of
20,000 pairs and corrected by the value on the same sample with \(Y\)
permuted, is bounded away from zero at every \(D\in\{4,8,16,32\}\), and
does not fall back towards it as \(D\) grows --- Gaussian \(r=0.5\):
0.030, 0.084, 0.088, 0.100; correlation-free \(|X|\)-dependence: 0.025,
0.044, 0.044, 0.063; Clayton at \(\tau=0.5\): 0.080, 0.207, 0.227,
0.255; sign-mixed dependence at zero correlation: 0.046, 0.082, 0.080,
0.116 --- while the independent pair returns \(0.0000\) at every \(D\)
to four decimals. No dimension in this range is blind to any of these
alternatives, which is what Section 3.5 requires of P4, and the
sign-mixed family shows the property holds where correlation carries no
signal at all.

\emph{The studentised scan} ([NV-6], open in Section 7):
under a constant-dependence null the law of the deployed studentised
maximum stabilises as \(n\) grows --- mean 1.869, 1.917, 1.959, 1.969
and 95th percentile 3.477, 3.580, 3.617, 3.622 at
\(n=300,600,1200,2400\) (10,000 simulated maxima each), with the
Kolmogorov distance between consecutive sample sizes falling across
0.022, 0.019, and 0.008. As an empirical control, the permutation test rejects at rates 0.035, 0.025, 0.030 and 0.040 at the nominal 0.05.

\emph{Segment-number consistency} ([NV-7], open in Section 7):
repeating the multiple-break design of Section 6.3 with the break
fractions held fixed and only \(n\) varying gives the correct number of breaks, \(K=3\), in 47.5\%, 99.5\%, and 100\% of replicates at
\(n=1800,3600,7200\) (adjusted Rand index 0.63, 0.98, 0.99; penalty and
grid resolution held fixed, null detection \(\le 1\%\) throughout). The
\(n=1800\) row is Table~\ref{tab:B7}'s configuration re-run on that fixed grid: it
reproduces the 47.5\% and returns 0.63 against the 0.62 reported there,
the difference being the grid. The failures at \(n=1800\) are
under-segmentation, not over-segmentation, and they disappear once the
segments are long enough.

\emph{Global against within-segment ranks at the boundary} ([NV-8], Section~\ref{sec:A12}): in 2000 null replicates at \(n=600\), \(D=8\) and the grid of Section 4.6, the empirical distribution functions of the global-rank, within-segment-rank and oracle statistics, pointwise at \(\lambda\in\{0.25,0.5\}\) and for the supremum of the scan, differ by at most 0.026 in Kolmogorov--Smirnov distance; their paired differences decrease as \(n\) grows. These are finite-sample illustrations of the common limit, not a test of equality of the finite-sample laws.

\subsection{Comparison with model-based and log-determinant alternatives}\label{sec:B4}

Three further baselines test whether existing correlation-monitoring
machinery suffices (200 replicates, calibration identical to Section 5):
a correlation-constancy CUSUM in the spirit of Wied et al.~[30]; a
two-stage DCC(1,1)-GARCH fit [29] followed by a CUSUM on the fitted
correlation path; and, as the closest non-quantum spectral relative, the
Bach--Jordan log-determinant mutual information computed by regularised
canonical correlations on the same rank features. Each baseline, like
the DOMI figures quoted beside it, is calibrated against the null that
matches each alternative. On the correlation
alternative S1 the correlation CUSUM is the strongest of the three baselines of this section
(power 0.75/0.91 at \(r=0.35\)/0.5, against 0.60/0.77 for DOMI in its random-feature form and 0.82/0.99 for its Gram form in the separate 500-replicate run of Section 6.1 and Figure 2). On the correlation-free
alternatives the ordering inverts: on S2 and S4 both correlation-based
methods attain power at or below 0.04 (structural insensitivity, since the tracked parameter does not change) while DOMI attains 0.09--0.87. Under the
marginal-only change M1 the DCC statistic is well behaved (0.02--0.05)
but the correlation CUSUM's false-alarm rate inflates to 0.09--0.20.

On the 2021--2022 stock--bond window both correlation methods reject under
pair-permutation calibration (\(p=0.01\) and 0.04), consistent with the
monotone-correlation character of the change; they place the break at
2022-05-25 and 2022-06-13, close to Spearman's 2022-05-31 and not at the
December 2021 candidate. These p-values rest on the pair-permutation
calibration, which Section 6.8 finds unreliable on this window.

The log-determinant statistic, used without the studentisation and bias
machinery of Sections 3.3 and 4.3, attains low power throughout:
0.00/0.00/0.15 on S1, where the correlation CUSUM attains
0.32/0.75/0.91, and 0.00--0.12 on S2 and 0.00--0.27 on S4, where DOMI
attains 0.09--0.87. Most of the loss is in localisation: at the strongest levels of S2
and S4 it exceeds its threshold in 90\% and 94\% of replicates but
places the break outside the tolerance, the edge-bias pattern that the
per-\(t\) studentisation of Section 4.3 is designed to remove. The
statistic differs from DOMI both in the functional and in the missing
studentisation, so the comparison does not separate the two. Under the
marginal-only change M1 its false-alarm rate rises to 0.06/0.17/0.23,
above the correlation CUSUM's at the largest shift. On the 2021--2022 stock--bond window it gives \(p=0.64\) under pair permutation, where DOMI, Spearman and the combined statistic give medians of 0.268, 0.006 and 0.016 over the permutation draws of Table~\ref{tab:B19}.

\subsection{Feature-draw variability and break location}\label{sec:B5}

\emph{The random-feature draw.} Every power in this paper, unless stated
otherwise, is computed from one draw of \(D\) frequencies per block. The
statistic is a function of that draw, so this section measures how far
the reported powers move when the draw changes. Across six independent
draws, on S2 at \(a=0.9\) with \(n=600\), the powers are 0.675, 0.810, 0.840,
0.865, 0.915, and 0.955 --- a spread far wider than the binomial
standard error of 0.035, with the six draws agreeing on the
reject/retain decision in only 62.5\% of replicates. On real data the
effect is of the same size: in a daily-resolution Seoul
temperature--wind test on a 2022--2023 window --- a separate check, not
the hourly stage two of Section 6.7 --- the block-permutation
p-value ranges from 0.37 to 0.86 across the same six draws. Averaging
the statistic over the six draws gives 0.915 and removes most of the
variation, at six times the cost. The comparative claims of this paper
are paired within a draw (DOMI and HSIC are computed from the \emph{same} features), so a feature draw moves both statistics; a power
quoted as a property of the method carries this additional uncertainty.

\emph{The calibration draw.} A power is also a function of which null
replicates set the threshold and the pointwise studentising moments.
Scoring one fixed set of alternative curves under three independent null
draws isolates that effect. On the correlation-free scenario S2 the DOMI
power runs over 0.352--0.444 at \(a=0.7\) and 0.872--0.954 at \(a=0.9\),
while HSIC moves the other way, over 0.062--0.030 and 0.658--0.500; the
DOMI-minus-HSIC margin therefore runs from 21 to 45 percentage points
across the three draws. On the disappearance alternative S4 at
\(r=0.85\) the spread is wider still. The DOMI power runs over
0.362--0.702 across the three draws, a wider range than the
feature-draw spread above, and the DOMI-minus-HSIC margin takes the
values 4.6, 15.6 and 40.4 points: positive under every draw, but well
below the reported 15.6 under one of them. Where the change is one of
correlation the margin nearly vanishes under some draws: 0.2, 3.6 and
17.4 points at S1 \(r=0.35\), and 2.8, 5.2 and 19.8 at \(r=0.5\). These
values are computed from per-replicate records, with the stored
matched-null files where the null is level-independent and fresh null
draws for S4.

\emph{The location of the break.} Every synthetic scenario places the
change at the midpoint, so every localised power reported above is a
centre-break power; Section~\ref{sec:B15} repeats the comparison of Table 1 with
the break at a quarter and three quarters of the series. Moving the break in S2 at \(a=0.9\) while holding everything else fixed
separates detection from localisation sharply. At \(\tau/n = 0.2\) the
test still rejects in 95\% of replicates but the estimate falls within
the tolerance in 0\% of them; at \(\tau/n = 0.33\), 100\% and 0\%; at
the midpoint, 100\% and 87\%; at \(0.67\), 100\% and 51\%; and at
\(0.8\), 79\% and 60\%. The statistic detects off-centre breaks but misplaces them when the dependent segment is the longer one (\(\tau/n<0.5\) here, where dependence begins at the break), and the failure is not repaired by any of the four
location rules we compared --- studentised, raw weighted, and the two
bias-centred variants. The segmentation route localises better, finding the break within the tolerance in 64\% of replicates
at the centre and in 23\% to 57\% off-centre across the same positions,
which is a further reason the applications use it for stage one.
Localisation consistency is not claimed for the deployed estimator
(Section 3.5); this is its practical consequence.

\subsection{Vector-valued blocks}\label{sec:B6}

The construction of Section 4.1 is written for two \emph{blocks}, not
two scalars: ranks are formed per component and one random Fourier map
is drawn per block, so the feature dimension \(D\) does not change with
the block dimension \(d\). Every other experiment in this paper uses
\(d=1\). This study measures the effect of the block form; at a fixed feature dimension it lowers both power and specificity.

Scenario MV-S2(\(d\)) applies scenario S2 coordinatewise:
\(X\in\mathbb{R}^d\) standard normal, \(Y=\varepsilon\) before the break
and \(Y_j=\sqrt{1-a^2}\,\varepsilon_j+a|X_j|\varepsilon'_j\) after it,
with \(a=0.7\). Every coordinate pair therefore carries the same correlation-free coupling, with the change of margin shape that S2 carries, and only the aggregation over coordinates
changes with \(d\). MV-M1(\(d\)) is the matching specificity control:
the blocks stay independent throughout and only the marginal scale of
\(X\) changes, by a factor of two. Calibration is that of Section 5 at
500 replicates and \(D=8\). Alongside the block test we run the pairwise
alternative, \(d\) separate scalar tests on the coordinate pairs, each
at the Bonferroni level \(0.05/d\). Table~\ref{tab:B1} reports the outcome. At \(d=1\) the design is S2 at \(a=0.7\), where Table 1 reports 0.35 against 0.46 here; the two runs use different replicates and null draws, and the calibration draw alone moves this power over 0.35--0.44 (Section~\ref{sec:B5}).

\begin{table*}[!t]
\centering
\caption{Vector-valued blocks. Localised power on the coordinatewise dependence change MV-S2 (a detection counts only if the break is localised to within 30 samples) and rejection rate on the marginal-only control MV-M1 (any detection), against the block dimension $d$ at a fixed feature dimension $D=8$ (500 replicates).}
\begin{tabular}{l cccc}
\toprule
Rate at a false-alarm rate of 0.05 & $d{=}1$ & $d{=}2$ & $d{=}3$ & $d{=}5$ \\
\midrule
MV-S2, DOMI on the blocks & 0.46 & 0.06 & 0.05 & 0.02 \\
MV-S2, DOMI pair by pair, Bonferroni & 0.46 & 0.45 & 0.64 & 0.63 \\
MV-S2, HSIC on the blocks & 0.06 & 0.00 & 0.00 & 0.01 \\
MV-M1, DOMI on the blocks: rejection rate (nominal 0.05) & 0.11 & 0.04 & 0.48 & 0.50 \\
MV-M1, DOMI pair by pair, Bonferroni: rejection rate (nominal 0.05) & 0.11 & 0.04 & 0.06 & 0.05 \\
\bottomrule
\end{tabular}
\label{tab:B1}
\end{table*}

Both power and specificity fail as \(d\) grows. The block test loses power
while the same statistic applied pair by pair does not, and the block test's specificity, near nominal at \(d=2\), is lost from \(d=3\) onward, where the pairwise test keeps it. At \(d=1\) this run rejects in 11\% of MV-M1 replicates; four further independent draws of the replicates and of the null calibration give 2.6--5.6\%, near the 0.05 of Table 1. The loss of power is not an artefact of estimation variance: at \(n=20{,}000\)
the difference between the two regimes' DOMI is 0.0169 at \(d=1\), 0.0045
at \(d=2\), 0.0026 at \(d=3\) and 0.0011 at \(d=5\), and raising the
feature dimension to \(D=16\) restores it to 0.0122 at \(d=2\) (one draw
of the data and of the frequencies; Section~\ref{sec:B5} measures how much the
latter moves such a number). Eight random frequencies spread over a
\(d\)-dimensional input resolve a coupling confined to a few
coordinates far more weakly, and the same thinning makes the plug-in
bias more sensitive to a segment's rank distribution --- which is what a
marginal change alters. That second half is measurable on its own. Under
MV-M1 the blocks are independent throughout and the population DOMI is zero on both sides of the break, so whatever the statistic sees there is a finite-sample effect of bias and sampling variability together; at the deployed segment length \(m=300\) its mean absolute value
is 0.0038, 0.0037, 0.0085 and 0.0093 at \(d=1,2,3,5\) (200 replicates).
The response to a purely marginal change thus more than doubles between
\(d=2\) and \(d=3\), which is where Table~\ref{tab:B1} shows the specificity
collapsing. Against the contrasts above, the ratio of signal to spurious
response falls from 4.4 at \(d=1\) to 1.2 at \(d=2\) and 0.31 at
\(d=3\). It falls below one between \(d=2\) and \(d=3\), where the
specificity fails; power has already collapsed at \(d=2\), where the
ratio is 1.2.
Routing the coupling through a fixed orthogonal mixing of the
coordinates, so that no single pair carries all of it, does not change
the ordering at \(D=8\) where the contrasts are resolvable: the block
contrast is 0.0047 against 0.0162 for the strongest coordinate pair at
\(d=2\), and 0.0026 against 0.0060 at \(d=3\); by \(d=5\) both have
fallen to the resolution floor (0.0009 and 0.0008).

The block form is available but is not the default.
The procedure deployed in Sections 6.7 and 6.8 tests variable pairs, and
a block of dimension \(d\) requires a feature dimension
that grows with it, at a cost of \(O(D^6)\) per segment evaluation. We
have not established the rate at which \(D\) must grow, and a coupling
that no coordinate pair carries at all --- the case in which a block
test would be the only option --- lies outside the designs used here.

\subsection{Kernel bandwidth, and a full-Gram HSIC}\label{sec:B7}

Two questions are left open by the design of Section 5. The
random-feature bandwidth is set once by the median heuristic and never
varied, so every power reported in this paper belongs to one operating
point rather than to the statistic at its best; and the HSIC comparator
is built from the same \(D=8\) random features as DOMI, which makes it a
clean ablation of the functional but not the standard method, whose Gram
matrix is full. One sweep answers both. Each statistic is recomputed at
Gaussian bandwidths of \(4,\,2,\,1,\,\tfrac12\) and \(\tfrac14\) times
the median heuristic under the calibration of Section 5 at 500
replicates; the full-Gram HSIC is the biased V-statistic on the same
global ranks, evaluated over the same candidate grid by two-dimensional
prefix sums. Table~\ref{tab:B2} reports the outcome.

\begin{table*}[!t]
\centering
\caption{Kernel bandwidth. Localised power at a false-alarm rate of 0.05 (500 replicates), and detection rate under the marginal-only control M1 at $s=2$, against the Gaussian bandwidth as a multiple of the median heuristic. Each statistic was swept over $\times 4,\,\times 2,\,\times 1,\,\times\tfrac12$ and $\times\tfrac14$; the table prints the deployed setting and, for each statistic, the settings that carry the comparison --- DOMI is best at $\times\tfrac12$ in five of the seven cells and at $\times\tfrac14$ on the copula-shape cell and, by 0.01, on S4 at $r{=}.85$; the full-Gram HSIC is best at $\times\tfrac14$ in all seven. The full grid is in the released outputs. As in Table 1, each column is calibrated against its own pre-change regime held throughout. The first row is the deployed setting and reproduces Table 1.}
\begin{tabular}{l cccccccc}
\toprule
Bandwidth & S1 $r{=}.35$ & S1 $r{=}.5$ & S2 $a{=}.7$ & S2 $a{=}.9$ & S3 $\tau{=}.5$ & S4 $r{=}.7$ & S4 $r{=}.85$ & M1 \\
\midrule
DOMI, $\times 1$ (deployed) & 0.60 & 0.77 & 0.35 & 0.87 & 0.01 & 0.09 & 0.55 & 0.05 \\
DOMI, $\times 1/2$ & 0.78 & 1.00 & 0.44 & 0.98 & 0.03 & 0.20 & 0.70 & 0.04 \\
DOMI, $\times 1/4$ & 0.06 & 0.32 & 0.10 & 0.76 & 0.19 & 0.16 & 0.71 & 0.07 \\
HSIC, full Gram, $\times 1$ & 0.47 & 0.60 & 0.03 & 0.51 & 0.01 & 0.01 & 0.26 & 0.06 \\
HSIC, full Gram, $\times 1/4$ & 0.74 & 0.99 & 0.25 & 0.96 & 0.03 & 0.12 & 0.64 & 0.09 \\
\bottomrule
\end{tabular}
\label{tab:B2}
\end{table*}

With each statistic at its best bandwidth on the shared grid, the comparison goes to DOMI
in every one of the seven cells, so the advantage in Table 1 is not an
artefact of holding HSIC to the feature dimension of DOMI: a full Gram
matrix with a tuned bandwidth does not close it. The margins divide into
two groups. On the copula-shape scenario and on three of the four
correlation-free cells the lead is decisive, 7 to 19 percentage points;
on the remaining three the two are separated by 0.6 to 3.6 points and are not distinguishable at 500
replicates. A tuned full-Gram HSIC exceeds the
\emph{untuned} DOMI of Table 1 in six cells of seven. Tuning one side and
not the other decides that comparison, not the functional.

The median heuristic is not DOMI's optimum, and the deployed
setting is not the best anywhere. Halving it raises power in
\emph{every} cell --- S1 at \(r=0.35\) from 0.60 to 0.78, S2 at
\(a=0.7\) from 0.35 to 0.44, S4 at \(r=0.7\) from 0.09 to 0.20, and even
the copula-shape scenario from 0.01 to 0.03 --- while the false-alarm
rate under M1 falls from 0.052 to 0.044. A quarter is better still on
the copula-shape scenario, and marginally on S4 at \(r=0.85\), and there the correlation alternatives
collapse, so no single value is best everywhere. The median heuristic is kept in the main text. It is the conventional
rule, so a power reported at it is comparable with the literature. The
dominance of \(\times\tfrac12\) is a selection over five values made
after seeing these seven cells. Every other method comparison in this paper is made
\emph{within} a bandwidth, and the operating point moves the levels far
more than the ranking. Against the full-Gram HSIC, DOMI leads in all
seven cells at \(\times 4\) and \(\times\tfrac12\), in six at
\(\times 1\) and in five at \(\times 2\), the exceptions being cells where Table 1 does not put it ahead of every statistic either. The ranking turns over at \(\times\tfrac14\), where DOMI leads in only three.

The copula-shape change is partly recoverable. At a quarter of
the median bandwidth DOMI attains 0.19 on S3 at \(\tau=0.5\), against
0.008 at the deployed setting and 0.032 for the tuned full-Gram HSIC: a
more than twentyfold gain over the deployed operating point and six times the tuned full-Gram HSIC, though below the 0.34 of the copula test (Table 1). Section 6.4 predicts the
direction: the obstacle there is estimation variance and not
non-identifiability, and a narrower kernel concentrates the feature
moment where the two copulas differ. The size of the recovery, however,
does not license calling S3 a limitation of the deployed bandwidth
alone. The narrow kernel also costs the competitor its
specificity. At \(\times\tfrac14\), where the full-Gram HSIC is most
powerful, its detection rate under the marginal-only control M1 is
0.092, nearly twice the nominal 0.05, against 0.044 for DOMI at its own
best bandwidth. The cost to DOMI is visible in the same
row: the correlation alternatives collapse, S1 at \(r=0.35\) falling
from 0.60 to 0.06, and it is at this bandwidth that the full-Gram HSIC overtakes DOMI in a majority of cells, four of seven. That reversal is
representational rather than functional. A fixed draw of \(D\) random
Fourier features approximates a Gaussian kernel with an error that grows
as the kernel narrows relative to the data scale, and the full Gram
matrix carries no such budget; raising \(D\) at the same bandwidth recovers part of what was lost, S1 at \(r=0.35\) going from 0.07 at \(D=8\) to
0.35 at \(D=12\) and 0.55 at \(D=16\) in a separate run of this sweep. The budget cuts both ways,
however: on the copula-shape scenario the same increase \emph{removes}
the recovery, 0.19 at \(D=8\) against 0.05 at \(D=12\) and \(D=16\), the
extra directions costing more in variance than they add in signal.
Bandwidth and feature dimension are therefore jointly constrained rather
than separately tunable, which is a further reason to fix both by rule.

\subsection{Calibration and robustness of the real-data analyses}\label{sec:B8}

This section gives the checks behind the calibration choices and
results of Sections 6.7 and 6.8.

\emph{Coverage.} Every analysis uses the hours at which both variables of a pair are present in the API Hub record; at most 148 of the 70,128 hours of 2018--2025 (0.21\%) are dropped for any station--pair (Suwon temperature--humidity). The two-week candidate grid counts observations, so an interval spans a few hours more than two weeks where hours are missing.

\emph{Effective sample sizes.} Over the full eight-year hourly record, which is what the segmentation of stage one sees, the effective sample sizes \(n(1-\hat\rho_1)/(1+\hat\rho_1)\) of Section 6.7 are 384--905 for temperature, 924--2265 for humidity, and 5073--12,418 for wind across the twelve stations.

\emph{Random-date rejection frequencies.} For each of the 36 station--pair analyses we drew 20
dates at random, at least a year from either end of the record, and ran
stage two as in Section 6.7 at each, with $K=999$. Under the
unrestricted six-week block permutation the dependence p-values fall at
or below 0.10, 0.05 and 0.01 at 18.5\%, 9.7\% and 1.25\% of the 720 dates, and the smaller of the two marginal p-values at 40.8\%, 23.2\% and 4.3\%. With blocks permuted only within calendar seasons the
dependence test gives 12.1\%, 5.4\% and 0.42\%, and the marginal test 23.1\%, 11.4\% and 2.5\% (each variable separately: 12.3\%, 5.8\% and 1.3\%).
The season-restricted dependence test is thus close to the nominal level at 0.05 and 0.01 and above it at 0.10, and the marginal test stays above 9.75\%, the rate at 0.05 for the smaller of two independent p-values. The windows overlap, may contain changes, and served to choose the scheme, so these frequencies are a diagnostic of calibration sensitivity, not verified null rejection rates. Different block permutations can yield equal test statistics without reproducing the observed arrangement; so that numerically equal statistics are not treated as strictly ordered, every permutation p-value in Sections 6.7 and 6.8 counts a replica whenever \(T_k\ge T_0-10^{-9}\max(1,|T_0|)\).

\emph{Tied values.} The anomaly series carry many tied values: averaged over consecutive two-year windows, 19--31\% of temperature, 43--65\% of wind-speed and 63--77\% of humidity observations share their value with another observation, across the twelve stations. The ranks break ties in time order, and stage two computes the features once and permutes them, which the Remark on randomness drawn independently of the data (Section~\ref{sec:A5}) does not cover. Repeated with ties broken at random, which it covers, stage two gives the same conclusions: the p-values of the five candidates at \(p\le0.05\) and of every split-sample window are unchanged, those of the other candidates move by at most 0.004, and the random-date frequencies at 0.10, 0.05 and 0.01 are unchanged, so the candidate counts and the classification are too. The marginal test uses the raw values and is unaffected up to Monte Carlo variation. Random tie-breaking addresses the tie-breaking issue only; the qualifications on the selected windows and on block exchangeability remain. On synthetic records with a constant Gaussian copula (\(n=600\), \(K=99\), 400 replicates) rounded to tie shares of 0, 28, 54 and 70\%, pair permutation rejects in 4.0--5.5\% of replicates whether ties are ranked in time order with the features fixed, broken at random, or re-ranked on every permuted record; at a tie share of 91\% all three reject in 7.25--7.75\% (standard error 1.1 percentage points), the two implementations the proof covers included. A re-check with 2000 new independent replicates gives 102, 103 and 106 rejections (5.10, 5.15 and 5.30\%; 95\% intervals 4.2--6.2, 4.2--6.2 and 4.4--6.4\%), so the excess is not reproduced.

\emph{Candidates.} Under the season-restricted stage two ($K=9999$) the
five candidates at $p\le0.05$ are Suwon humidity--wind ($p=0.0005$), Jeonju humidity--wind (0.0179), Daegu temperature--humidity (0.0237), Seoul temperature--wind (0.0335) and Jeonju temperature--humidity (0.0370);
every other candidate has $p\ge0.10$. Under the unrestricted scheme
the strongest candidate was Jeonju temperature--humidity
($p=0.001$). With the unrestricted scheme, 20 random feature draws
left Jeonju temperature--humidity and Suwon humidity--wind at $p\le0.012$
in every draw, while the three borderline candidates of that analysis
reached $p\le0.05$ in 65--75\% of the draws.

\emph{Split sample.} Selecting on alternate six-week blocks (24 hours trimmed at each block end) and testing on the others, and then the reverse, yields 22 and 11 candidates. Under the unrestricted scheme one
candidate in each direction reaches $p\le0.05$ (0.044 and 0.031) and
neither survives the multiplicity correction; under the
season-restricted scheme none reaches 0.05. With about two blocks per season
in a half-window, the season-restricted test has 6 to 48 distinct
permutations and 13 of the 33 test windows cannot reach $p\le0.05$, so
the split analysis is inconclusive at this record length.

\emph{Weekly financial returns.} On the weekly record a portmanteau on the returns gives \(p\) between 0.026 and 0.343 across the four series, and on \(|r_t|\) and \(r_t^2\) at most 0.0014 in every one, so weekly summation does not remove the volatility clustering (Section 6.8).

\emph{Diagnostic settings.} The diagnostic uses \(L=168\) lags (one week) on the hourly weather records, with the autocorrelation sums capped at 8760 lags, \(L=20\) on the weekly financial series and on the daily 2021--2022 window, and \(L=12\) on the monthly series; the synthetic designs of Section~\ref{sec:B17} use \(L=20\).

\emph{Financial breaks.} Stage two for the three breaks of Section 6.8
uses the diagnostic's recommended block length on the full record (44
weeks for the weekly pair, pair permutation for the monthly stock--bond
pair, twelve months for the monthly Korean pair) and the largest centred
window holding a whole number of blocks. The dependence p-values are
0.054 (pandemic onset, 2020-02-27), 0.988 (2021-08) and 0.395
(2018-01); across 20 feature draws they fall at or below 0.05 in 40\%,
0\% and 0\% of draws.

\subsection{Timing}\label{sec:B9}

The complexity of the procedure is given in Sections 4.6 and 4.8.
Table 2 of the main text compares the two computational forms of Section
4.8 on one replicate of scenario S2 at \(a=0.7\) for each series length,
timing one full difference curve over the candidate grid with no
permutation replicas. Each entry is a single-thread wall-clock time (BLAS threads pinned to one) on an otherwise idle Intel Core i9-12900 host; the per-candidate time is the median over 15 candidates
spread over the grid, and entries marked with an asterisk are that
median multiplied by the grid size. Across \(n\) from 600 to 19,200 the
cost per candidate of the random-feature form at \(D=8\) stays between 0.22 and 0.99 ms, whereas that of the Gram form rises from 21 ms at \(n=600\) to 17 s at \(n=4800\). A version of the Gram form that
subsamples at most 256 rows per segment costs between 7 and 11 ms per candidate for \(n\) from 600 to 19,200; Section~\ref{sec:B10} reports its effect on
localisation.

On a single core at \(D=8\), one full difference curve takes 0.14 s at \(n=600\), 0.32 s at \(n=1200\) and 1.5 s at \(n=4800\) (Table 2 of the main text), so the 100 difference curves alone of a \(K=99\) permutation test take about 14 s, 32 s and 146 s; the
end-to-end times below include the remaining steps.

The order-2 Gram variant has an entropy equal to the logarithm of a sum
of squared Gram entries, so two-dimensional prefix sums of the three
Gram matrices give every segment in constant time. One difference curve then takes 0.04 s at \(n=600\), 1.2 s at \(n=4800\) and 3.8 s at \(n=9600\), with a peak memory of 0.11, 0.60 and 2.2 GiB; three
\(n\times n\) arrays of doubles exceed 16 GiB beyond \(n\approx 26{,}750\).

End to end, on one core of the same host, Algorithm 1 with \(K=99\) takes 14 s at \(n=600\), 148 s at \(n=4800\) and 596 s at \(n=19{,}200\), with a peak memory of 112 to 870 MiB. Stage one of Section 6.7 on one station--pair series (\(n=70{,}069\), 209 two-week intervals) takes 136 s and 120 MiB, of which 130 s is the penalty calibration on 20 permuted copies. These timings belong to different workloads and implementations: Algorithm 1 evaluates every split of its grid (481 splits at \(n=600\)) through the general segment cache, the permutation timings of Section~\ref{sec:B16} use 161 split fractions and a block-sum scan, and stage two of Section 6.7, 27 windows each tested for dependence and for the scale of both margins at \(K=9999\), took 1736 s on six single-thread processes.

\emph{Test computation.} Table~\ref{tab:B16} times the pair-permutation test of Section 4.3 ($K=99$) for every statistic of Table~\ref{tab:B11} on scenario S2 at $a=0.7$ with the break at $n/2$, from ranks and features through the observed and 99 permuted curves, studentisation and the p-value; data generation, permutation-index generation and process start-up are not timed. Each test runs in a fresh single-thread process, four tests at a time on four pinned performance cores of an Intel Core i9-12900 host, while another job used about 3 CPU threads and the GPU (1-minute load 2.8--9.6 on 24 logical CPUs, the four tests included); the entry is the median over 10 data sets. The candidates are every split from $n/10$ to $9n/10$, as in the synthetic experiments, or 161 split fractions from 0.10 to 0.90, as for the long records of Section 6.7, where DOMI is computed from interval moment sums. A test that did not finish within 30 minutes or exceeded 16 GB of memory is marked, and the statistic was then not run at larger $n$.

On the fixed grid DOMI takes 5.3--10.4 s from $n=600$ to $n=76{,}800$, with a peak memory below 200 MB. Distance correlation takes 760 s and the global copula statistic 383 s at $n=9600$, each with 2.2 GB, and both exceed the memory limit at $n=38{,}400$; the copula test of [10] exceeds 30 minutes at $n=9600$ and the Gram form at $n=2400$. HSIC and Spearman take 12.4 and 1.8 s at $n=76{,}800$; on the correlation-free alternatives their localised power is lower (Table~\ref{tab:B11}).

\subsection{The Gram form and the feature dimension}\label{sec:B10}

Table~\ref{tab:B3} compares the computational forms of Section 4.8 on the
scenarios of the upper block of Table 1, under the calibration of Section 5 at 500
replicates, with S3 and S4 calibrated against their own pre-change
regimes. The Gram form uses the Gaussian kernel at the median-heuristic
bandwidth that the random features approximate, on every row of each
segment; the order-2 row replaces the von Neumann
entropy by the Rényi entropy of order 2 [14,15]. All entries are
studentised. The Gram form at \(\alpha=1\) is at least as powerful
as the random-feature form at \(D=8\) in every power column. At \(D=16\) the
random-feature form closes part of the difference on S2 at \(a=0.7\),
all of it at \(a=0.9\), and none of it on S4. The order-2 variant is weaker than the von Neumann one on S1,
0.48 against 0.82, and within 7 percentage points of it elsewhere, where it is higher on S2 at \(a=0.7\) (0.62) and on S4 at \(r=0.85\) (0.61); its detection rate under M1 is 0.05.

Capping the cost of the Gram form by subsampling rows is not a neutral
approximation. On four cells, with 200 null and 200 alternative
replicates each and the null split evenly between the studentising
moments and the threshold, the Gram form is computed both on every row
and on a fresh subsample of at most 256 rows per segment, on the same
replicates. On G2 at \(\tau=0.5\) and G4 at \(a=0.5\) both versions
reject in every replicate, but the subsampled version localises the
change in 33.0\% and 43.0\% of them, against 100.0\% and 99.5\% on full
segments; the random-feature form localises in 99.0\% of both. On S2 at
\(a=0.7\) the subsample lowers the localised power from 0.505 to 0.410,
and on G1 at \(\nu=2\) it raises it from 0.145 to 0.265. With only 100
null replicates for the threshold, these powers are not comparable with
Table 1; only the contrast within each cell is. The Gram form
in Table 1 and in Tables~\ref{tab:B3} and~\ref{tab:B4} is computed on full
segments.

\begin{table*}[!t]
\centering
\caption{The two computational forms and the feature dimension: localised power at a false-alarm rate of 0.05 (500 replicates, studentised variant) and detection rate under the marginal-only control M1 at $s=2$. Calibration as in Table 1. A dash marks a configuration not run.}
\begin{tabular}{l ccccccc}
\toprule
Form & S1 $r{=}.35$ & S2 $a{=}.7$ & S2 $a{=}.9$ & S3 $\tau{=}.5$ & S4 $r{=}.7$ & S4 $r{=}.85$ & M1 (FA) \\
\midrule
DOMI, random features, $D{=}8$ & 0.60 & 0.35 & 0.87 & 0.01 & 0.09 & 0.55 & 0.05 \\
DOMI, random features, $D{=}16$ & -- & 0.43 & 0.98 & 0.02 & 0.07 & 0.53 & -- \\
DOMI, Gram form ($\alpha{=}1$) & 0.82 & 0.58 & 0.98 & 0.01 & 0.12 & 0.55 & 0.04 \\
Gram form, $\alpha{=}2$ & 0.48 & 0.62 & 0.97 & 0.01 & 0.12 & 0.61 & 0.05 \\
\bottomrule
\end{tabular}
\label{tab:B3}
\end{table*}

\subsection{The non-Gaussian suite in full}\label{sec:B11}

Table~\ref{tab:B4} reports every statistic of the non-Gaussian suite of Section
5 in both of its variants, raw weighted global and studentised, at every
level. The lower block of Table 1 of the main text prints, for each statistic, the variant
selected by the reporting rule of Section 5. The studentised variant is selected for every statistic except Spearman and the copula statistic, for which the raw variant is the more powerful on average.

\begin{table*}[!t]
\centering
\caption{Non-Gaussian suite G1--G6 in full: localised power at a false-alarm rate of 0.05 (500 replicates; $|\hat\tau-\tau|\le 30$), each entry raw/studentised; \(\tau\) in G2--G3 is Kendall's \(\tau\), not the break location. CvM is the copula test of [10] with ranks recomputed within each segment; ``CvM, global'' computes it from global pseudo-observations. Margins are standard normal throughout; the pre-change regime is independence.}
\footnotesize
\setlength{\tabcolsep}{3pt}
\begin{tabular}{l cccccccc}
\toprule
Scenario & DOMI & Gram $\alpha{=}1$ & Gram $\alpha{=}2$ & HSIC & dCor & Spearman & CvM & CvM, global \\
\midrule
G1, $\nu=8$ & 0.00/0.01 & 0.00/0.01 & 0.00/0.01 & 0.00/0.00 & 0.00/0.00 & 0.00/0.00 & 0.01/0.00 & 0.00/0.00 \\
G1, $\nu=4$ & 0.00/0.03 & 0.00/0.05 & 0.00/0.08 & 0.00/0.00 & 0.00/0.00 & 0.00/0.00 & 0.01/0.00 & 0.00/0.00 \\
G1, $\nu=2$ & 0.00/0.33 & 0.00/0.44 & 0.01/0.49 & 0.00/0.03 & 0.00/0.01 & 0.00/0.00 & 0.01/0.01 & 0.01/0.01 \\
G2, $\tau=0.2$ & 0.01/0.72 & 0.00/0.79 & 0.03/0.54 & 0.03/0.45 & 0.27/0.48 & 0.51/0.38 & 0.60/0.36 & 0.05/0.07 \\
G2, $\tau=0.35$ & 0.29/0.97 & 0.21/1.00 & 0.36/0.73 & 0.35/0.65 & 0.62/0.68 & 0.88/0.68 & 0.91/0.69 & 0.22/0.24 \\
G2, $\tau=0.5$ & 0.49/1.00 & 0.48/1.00 & 0.61/0.89 & 0.51/0.76 & 0.82/0.84 & 0.99/0.82 & 0.99/0.86 & 0.50/0.54 \\
G3, $\tau=0.2$ & 0.01/0.72 & 0.01/0.86 & 0.06/0.63 & 0.10/0.65 & 0.29/0.56 & 0.55/0.52 & 0.63/0.57 & 0.04/0.04 \\
G3, $\tau=0.35$ & 0.30/0.97 & 0.27/1.00 & 0.38/0.89 & 0.38/0.88 & 0.64/0.78 & 0.92/0.89 & 0.94/0.91 & 0.22/0.20 \\
G3, $\tau=0.5$ & 0.47/0.99 & 0.45/1.00 & 0.60/0.98 & 0.46/0.96 & 0.82/0.87 & 0.98/0.98 & 0.99/1.00 & 0.49/0.46 \\
G4, $a=0.5$ & 0.07/0.97 & 0.02/0.98 & 0.05/0.83 & 0.08/0.77 & 0.01/0.52 & 0.01/0.01 & 0.36/0.11 & 0.00/0.00 \\
G4, $a=0.7$ & 0.31/1.00 & 0.25/1.00 & 0.33/0.97 & 0.30/0.91 & 0.35/0.86 & 0.01/0.01 & 0.84/0.58 & 0.01/0.01 \\
G4, $a=0.9$ & 0.58/1.00 & 0.56/1.00 & 0.63/1.00 & 0.50/1.00 & 0.68/0.99 & 0.01/0.01 & 0.99/0.87 & 0.05/0.06 \\
G5, $a=0.5$ & 0.00/0.94 & 0.00/0.98 & 0.03/0.90 & 0.01/0.77 & 0.01/0.49 & 0.00/0.00 & 0.22/0.12 & 0.00/0.01 \\
G5, $a=0.7$ & 0.15/1.00 & 0.20/1.00 & 0.32/0.99 & 0.17/0.97 & 0.22/0.92 & 0.01/0.01 & 0.80/0.67 & 0.01/0.02 \\
G5, $a=0.9$ & 0.41/1.00 & 0.49/1.00 & 0.53/1.00 & 0.36/0.99 & 0.56/0.99 & 0.01/0.01 & 0.97/0.97 & 0.02/0.04 \\
G6, $b=0.3$ & 0.00/0.49 & 0.00/0.65 & 0.00/0.36 & 0.00/0.16 & 0.00/0.18 & 0.00/0.00 & 0.10/0.02 & 0.00/0.01 \\
G6, $b=0.5$ & 0.10/0.86 & 0.09/0.91 & 0.14/0.78 & 0.04/0.60 & 0.05/0.73 & 0.01/0.00 & 0.59/0.27 & 0.01/0.02 \\
G6, $b=0.8$ & 0.38/0.94 & 0.37/0.97 & 0.46/0.94 & 0.32/0.74 & 0.56/0.95 & 0.01/0.00 & 0.95/0.71 & 0.03/0.04 \\
\bottomrule
\end{tabular}
\label{tab:B4}
\end{table*}

\subsection{Figures and tables supporting the main text}\label{sec:B12}

Figure~\ref{fig:B3} accompanies Section 3.4, Table~\ref{tab:B5} Section 4.3 and Section~\ref{sec:B17}, Table~\ref{tab:B6}
Sections 6.1 and 6.2, Table~\ref{tab:B7} and Figure~\ref{fig:B4} Section 6.3, Table~\ref{tab:B8}
Section 6.5, Figure~\ref{fig:B5} Section 6.8, and Table~\ref{tab:B9} Section~\ref{sec:B13}.

\begin{figure*}[!t]
\centering
\includegraphics[width=\textwidth,height=0.42\textheight,keepaspectratio]{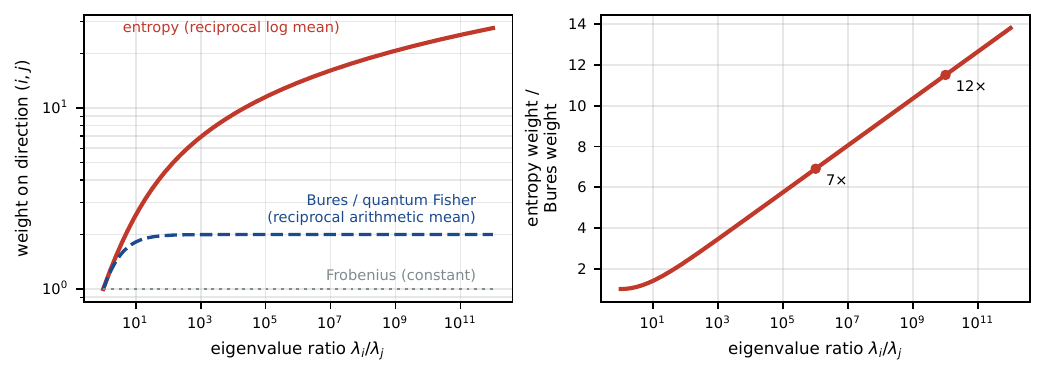}
\caption{The weight each functional places on the eigenvalue pair $(\lambda_i,\lambda_j)$ of a perturbed density operator, against the ratio $\lambda_i/\lambda_j$, the larger eigenvalue fixed at 1 (Section 3.4; the local expansion is derived in Section~\ref{sec:A9}). Left: the entropic kernel is the reciprocal logarithmic mean, the Bures or quantum Fisher kernel the reciprocal arithmetic mean, and the Frobenius kernel a constant; only the entropic one is unbounded in the ratio. Right: the entropic weight relative to the Bures weight, which grows like $\tfrac12\log(\lambda_i/\lambda_j)$ --- a 7-fold advantage at a ratio of $10^{6}$ and 12-fold at $10^{10}$, so the amplification grows only logarithmically in the ratio.}
\label{fig:B3}
\end{figure*}

\begin{table*}[!t]
\centering
\caption{Calibration under serial dependence: rejection rate with no change point present, at a nominal level of 0.05 (200 replicates, $D=8$, $K=99$). Pair permutation is exact only under exchangeability of the pairs; block permutation is exact under exchangeability of the blocks, which a serially dependent series satisfies only approximately. Studentisation is symmetric over all $K+1$ curves, as in P1 and P5. Under serial dependence pair permutation over-rejects, and block permutation at either block length is within about one standard error (1.5 percentage points) of nominal; under serial independence both schemes are valid.}
\begin{tabular}{l ccc}
\toprule
Null design & pair & block $b{=}20$ & block $b{=}50$ \\
\midrule
serially independent & 0.035 & 0.035 & 0.055 \\
AR(1) margins, $\phi = 0.6$, constant Gaussian copula & 0.245 & 0.050 & 0.035 \\
GARCH(1,1) volatility clustering & 0.085 & 0.050 & 0.045 \\
\bottomrule
\end{tabular}
\label{tab:B5}
\end{table*}

\begin{table*}[!t]
\centering
\caption{Joint-distribution references on the designs of the upper block of Table 1: localised power at a false-alarm rate of 0.05 (500 replicates) and detection rate under the marginal-only control M1 at $s=2$, calibrated as in Table 1. The three joint-distribution statistics are not specific to dependence and reject under M1 in every replicate; the copula statistic computed from global pseudo-observations rejects in 81\%.}
\begin{tabular}{l ccccccc}
\toprule
Method & S1 $r{=}.35$ & S2 $a{=}.7$ & S2 $a{=}.9$ & S3 $\tau{=}.5$ & S4 $r{=}.7$ & S4 $r{=}.85$ & M1 (FA) \\
\midrule
Joint-state Holevo (Section 4.2) & 0.38 & 0.09 & 0.81 & 0.09 & 0.22 & 0.87 & 1.00 \\
MMD (median bandwidth) & 0.09 & 0.04 & 0.34 & 0.01 & 0.02 & 0.13 & 1.00 \\
Gaussian likelihood ratio & 0.57 & 0.01 & 0.02 & 0.01 & 0.00 & 0.00 & 1.00 \\
Copula CvM, global pseudo-observations & 0.10 & 0.00 & 0.02 & 0.01 & 0.00 & 0.00 & 0.81 \\
\bottomrule
\end{tabular}
\label{tab:B6}
\end{table*}

\begin{table*}[!t]
\centering
\caption{Multiple-break recovery (scenario MB, 200 replicates). Penalties (for e.divisive, its significance level) are calibrated to a nominal 5\% of null replicates with any detection; the first column gives the achieved rate. $P(\hat K{=}3)$ is the probability of selecting the correct number of breaks; ``All three'' is the probability that the three sorted estimates lie within 90 of the three true breaks; Hausdorff distance (median) and adjusted Rand index (ARI, mean) at $a=0.9$. The last column is the rate of any detection on a marginal-only design (the scale of $X$ switching $1\to1.6\to1\to1.6$, no dependence change). ``+ DOMI re-test'' keeps a candidate only if the DOMI single-break test on a window around it rejects (pair permutation, $K=99$, Benjamini--Hochberg $q\le0.10$ within the replicate). Rank-Gaussian PELT and kernel binary segmentation (BinSeg-MMD), from the same replicates: null rate 0.02 and 0.05, $P(\hat K{=}3)$ 0.01 and 0.115 at $a=0.9$ (all three within 90: 0.000 and 0.050) and 0.01 and 0.015 at $a=0.8$; BMCTC with its calibrated split rule ($\alpha=1.01$): $P(\hat K{=}3)$ 0.19 and all three within 90: 0.115 at $a=0.9$ (Section~\ref{sec:B13}).}
\footnotesize
\begin{tabular}{l ccccccc}
\toprule
Method & Null & $P(\hat K{=}3)$, $a{=}.9$ & All three & Hausdorff & ARI & $P(\hat K{=}3)$, $a{=}.8$ & Marginal-only \\
\midrule
Holevo partitioning & 0.000 & 0.475 & 0.450 & 665 & 0.62 & 0.010 & 0.975 \\
KCP, raw values & 0.050 & 0.175 & 0.115 & 890 & 0.44 & 0.025 & 1.000 \\
KCP, ranks & 0.050 & 0.030 & 0.015 & 900 & 0.33 & 0.010 & 0.690 \\
e.divisive, raw values & 0.055 & 0.015 & 0.000 & 1800 & 0.07 & 0.005 & 0.865 \\
e.divisive, ranks & 0.060 & 0.000 & 0.000 & 1800 & 0.12 & 0.000 & 0.145 \\
Holevo partitioning + DOMI re-test & 0.000 & 0.470 & 0.450 & 665 & 0.62 & 0.010 & 0.130 \\
KCP, ranks + DOMI re-test & 0.005 & 0.015 & 0.015 & 900 & 0.32 & 0.000 & 0.090 \\
e.divisive, ranks + DOMI re-test & 0.005 & 0.005 & 0.000 & 1800 & 0.12 & 0.000 & 0.020 \\
\bottomrule
\end{tabular}
\label{tab:B7}
\end{table*}

\begin{figure*}[!t]
\centering
\includegraphics[width=\textwidth,height=0.42\textheight,keepaspectratio]{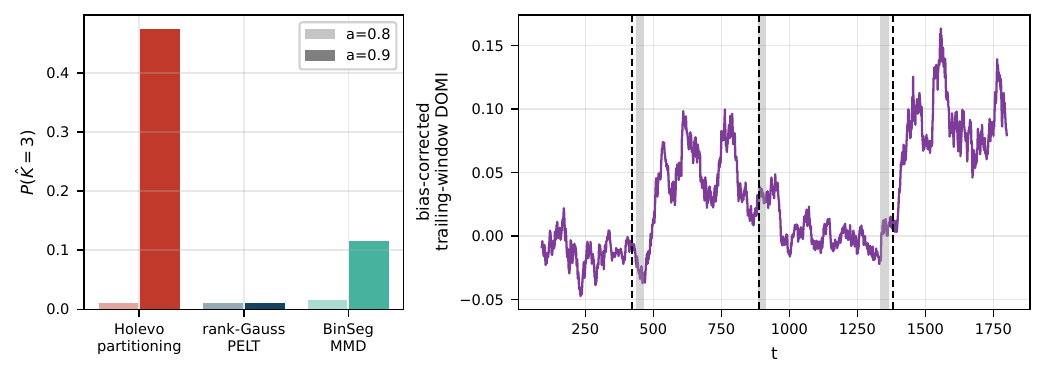}
\caption{Multiple breaks (scenario MB, three breaks). Left: the probability of selecting the correct number of breaks, $K=3$, for Holevo partitioning against rank-Gaussian PELT and kernel binary segmentation, at the two dependence levels (lighter bars $a=0.8$, darker bars $a=0.9$). Right: the first replicate at the stronger level, showing the 90-sample trailing-window DOMI, bias-corrected by one permuted copy and therefore possibly negative, with the true break locations in grey and the Holevo partitioning estimates at the calibrated penalty dashed.}
\label{fig:B4}
\end{figure*}

\begin{table}[!t]
\centering
\caption{The fully data-driven procedure: false-alarm rate and localised power under pair-permutation calibration ($K=99$ permutations, 100 replicates), with the threshold estimated from the observed series alone rather than from an oracle Monte Carlo null. M1 is a margins-only change, so its power entry counts localised false detections.}
\begin{tabular}{l cc}
\toprule
Pair-permutation test & false-alarm rate & power \\
\midrule
S1, $r{=}0.35$ & 0.06 & 0.41 \\
S2, $a{=}0.7$ & 0.03 & 0.20 \\
S2, $a{=}0.9$ & 0.03 & 0.67 \\
S4, $r{=}0.7$ & 0.06 & 0.53 \\
S4, $r{=}0.85$ & 0.05 & 0.66 \\
M1, $s{=}1.6$ & 0.03 & 0.00 \\
\bottomrule
\end{tabular}
\label{tab:B8}
\end{table}

\begin{figure}[!t]
\centering
\includegraphics[width=\columnwidth,height=0.42\textheight,keepaspectratio]{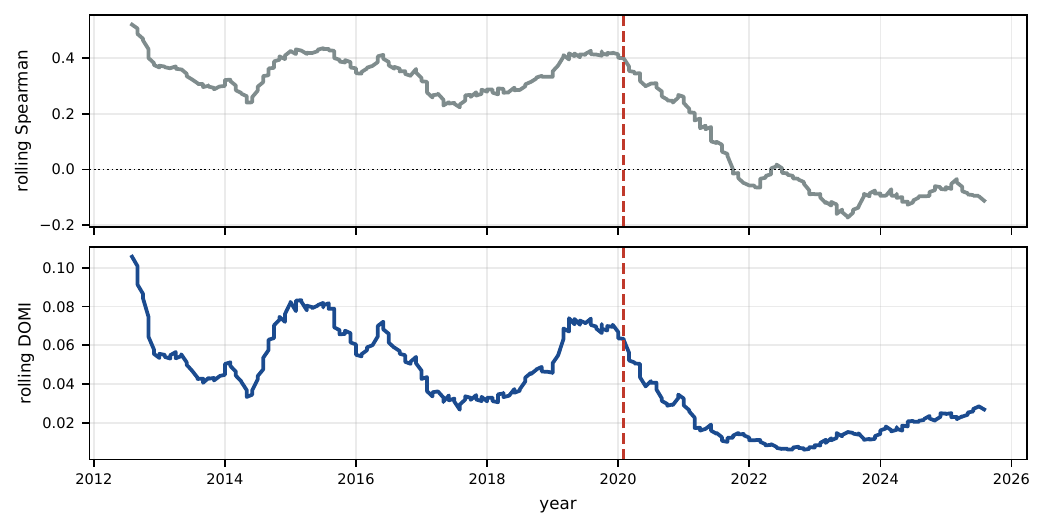}
\caption{S\&P 500 return--yield-change coupling, 2011--2026. Upper: the rolling Spearman correlation on two-year windows. Lower: the rolling DOMI on the same windows. The dashed vertical line in both panels is the single weekly Holevo partitioning break that the block-permutation calibration returns (2020-02, the pandemic onset); the breaks that the pair-permutation calibration reported are not shown, the diagnostic having rejected that calibration here.}
\label{fig:B5}
\end{figure}

\subsection{A matrix-information change-point baseline}\label{sec:B13}

The closest prior application of matrix-based information to change
detection is BMCTC, the moving-cut total-correlation method of [37]: the matrix-based R\'enyi total correlation on
a moving window, segmented by the Bernaola--Galv\'an procedure. We
reimplemented it from its description, with choices the description
does not fix stated in the released script (\(\alpha=1.01\) and \(\alpha=2\); Gaussian kernel with Silverman's width on values standardised within each window; window 60, step 1). Table~\ref{tab:B9} reports
it on the designs of Table 1.

With its own significance rule it
rejects in every null replicate and under M1, because the moving-window
sequence is strongly autocorrelated while the rule treats it as
independent, and on scenario MB it places about 38 breaks per series.

Under the calibration of Table 1 it is dependence-specific (a
false-alarm rate of 0.05--0.07 under M1) and less powerful than either
form of DOMI on S1, S2, G2, G4 and G6. It is more powerful on S4, at
0.80--0.94 against 0.55 at \(r=0.85\) and 0.23--0.57 against 0.09 and
0.12 at \(r=0.7\), and comparable on G1 at \(\nu=2\) (0.21--0.39 against 0.33 and 0.44). Across a $2\times2$ variation of BMCTC's input (raw values or whole-window ranks) and kernel rule (its own window standardisation with Silverman's width, or DOMI's median-heuristic width), rank input lowers its localised power on S4 at $r=0.7$ to 0.09--0.23, near that of DOMI, while it remains ahead at $r=0.85$ (0.67--0.83), and its false-alarm rate under M1 varies with both factors (0.04--0.99; Table~\ref{tab:B18}). The source of the remaining difference is not established. The signs in
that design are drawn independently at every time point, so the moving
window has no local regime to exploit; at a quarter of the median
bandwidth DOMI reaches 0.71 (Table~\ref{tab:B2}). With the calibrated split rule, BMCTC
returns exactly three breaks on scenario MB in 19\% of replicates (all three within 90: 11.5\%) at
\(a=0.9\), against 47.5\% for Holevo partitioning, and in 5.5\% (\(\alpha=1.01\)) and 7.5\% (\(\alpha=2\)) at \(a=0.8\).

\begin{table}[!t]
\centering
\caption{BMCTC (our reimplementation of [37]) on the designs of Table 1: localised power at a false-alarm rate of 0.05 under the calibration of Table 1 (500 replicates; raw and studentised at $\alpha=1.01$, raw at $\alpha=2$), and the detection rate with the published Bernaola--Galv\'an rule ($P_0=0.95$), which is not calibrated. The M1 row gives false-alarm rates.}
\footnotesize
\setlength{\tabcolsep}{3pt}
\begin{tabular}{l cccc}
\toprule
Setting & raw & stud. & $\alpha{=}2$ & BG rule \\
\midrule
S1 $r{=}.35$ & 0.11 & 0.17 & 0.08 & 1.00 \\
S2 $a{=}.7$ & 0.22 & 0.24 & 0.26 & 1.00 \\
S2 $a{=}.9$ & 0.67 & 0.71 & 0.69 & 1.00 \\
S3 $\tau{=}.5$ & 0.02 & 0.02 & 0.02 & 1.00 \\
S4 $r{=}.7$ & 0.33 & 0.23 & 0.57 & 1.00 \\
S4 $r{=}.85$ & 0.86 & 0.80 & 0.94 & 1.00 \\
G1 $\nu{=}2$ & 0.28 & 0.21 & 0.39 & 1.00 \\
G2 $\tau{=}.2$ & 0.13 & 0.10 & 0.13 & 1.00 \\
G4 $a{=}.5$ & 0.26 & 0.26 & 0.25 & 1.00 \\
G6 $b{=}.3$ & 0.06 & 0.05 & 0.04 & 1.00 \\
G6 $b{=}.5$ & 0.40 & 0.38 & 0.22 & 1.00 \\
M1 $s{=}2$ (FA) & 0.06 & 0.07 & 0.05 & 1.00 \\
\bottomrule
\end{tabular}
\label{tab:B9}
\end{table}

\subsection{An exploratory resampling check of the whole two-stage procedure}\label{sec:B14}

Stage-two p-values are nominal because stage one selects the windows
(Section 4.7). Permuting the whole procedure would address the selection: every permuted record passes through stage one and stage two unchanged, and the observed statistic is compared with the permuted
ones. We permute the twelve-week super-blocks of stage one within
calendar seasons. We hold the penalty and the bias correction at their observed values instead of recomputing them on every permuted record, which cuts one permuted run from over two minutes (136 s for stage one alone, Section~\ref{sec:B9}) to 1.7 s on one core (1.68--1.69 s over five runs on the host of Section~\ref{sec:B9}). Because these quantities come from the observed record, the observed and permuted statistics are not computed by the same rule, so this is an exploratory resampling check: its finite-sample validity and its adjustment for selection are not established.

On synthetic records ($n=600$, 200 replicates, 99 permutations each) the check rejects in 4.5\% of
replicates under a stationary null with AR(1) margins ($\phi=0.6$) and
constant dependence, and in 75\% against a coupling change of type S2 at
$a=0.9$, where the nominal stage two rejects in 53\%. Under a
marginal-only change (M1, $s=1.6$) it rejects in 57.5\%, against 6.5\%
for the nominal stage two: permuting blocks mixes the marginal regimes,
so stage one fires on 93\% of observed records and 8\% of permuted
ones. Its null is the absence of any change, and it is not specific to dependence.

On the weather records (199 permutations) the check built on the stage-two statistic gives an exploratory $p=0.105$ (0.18 with a Bonferroni combination). That statistic is the maximum over the candidates of each analysis, combined across analyses through the smallest p-value and calibrated by the same permutations. Stage one returns a candidate in 23 of the 36 analyses, against a mean of 0.91 and a maximum of 8 on the permuted records (exploratory tail fraction 0.005); this descriptive contrast uses the observed stage-one calibration and is not a level-controlled rejection of season-stratified block stationarity. By the M1 result neither establishes a dependence change.

\subsection{Detection, localisation and calibration of the single-break comparison}\label{sec:B15}

Table~\ref{tab:B10} gives the detection rates behind Table 1, on the same
replicates and in the same variants.

Table~\ref{tab:B11} repeats ten cells under the pair-permutation calibration of Section 4.3
($K=99$, 200 replicates),
in which the studentised statistics take their moments over the observed and replica curves together; Spearman and the copula statistics are raw, as in Table 1. On the three no-change designs every statistic rejects in 3.0--8.0\% of replicates. On the two marginal-only controls every statistic rejects in 3.0--7.5\% of replicates except the copula statistic computed from global pseudo-observations, which rejects in 83.5\% at M1 $s=2$.

Table~\ref{tab:B12} moves the break to $\tau=150$
and $\tau=450$ under the calibration of Table 1; the rows at $\tau=300$
reproduce Table 1. Localisation fails for every statistic when the
dependent segment is the longer one ($\tau=150$ in S2, G4 and G6, where
dependence begins at the break; $\tau=450$ in S4, where it ends), with
the studentised maximiser near the centre: on S2 at $\tau=150$ the
median estimate is 299 for DOMI and 294 for the Gram form.

\begin{table}[!t]
\centering
\caption{Detection rates without the localisation requirement for the settings of Table 1 (500 replicates, same variants and calibration). Bold marks the largest rate in a row at two decimals (ties bolded); the M1 row gives false-alarm rates.}
\footnotesize
\setlength{\tabcolsep}{3pt}
\begin{tabular}{l cccccc}
\toprule
Setting & DOMI & Gram & HSIC & dCor & Spear. & CvM \\
\midrule
S1, $r{=}0.35$ & 0.99 & 0.99 & 0.99 & \textbf{1.00} & 0.94 & 0.96 \\
S2, $a{=}0.7$ & 0.58 & \textbf{0.73} & 0.16 & 0.09 & 0.12 & 0.13 \\
S2, $a{=}0.9$ & \textbf{1.00} & \textbf{1.00} & 0.90 & 0.18 & 0.12 & 0.19 \\
S3, $\tau{=}0.5$ & 0.09 & 0.11 & 0.15 & 0.06 & 0.06 & \textbf{0.52} \\
S4, $r{=}0.7$ & 0.36 & \textbf{0.45} & 0.14 & 0.10 & 0.03 & 0.02 \\
S4, $r{=}0.85$ & 0.99 & \textbf{1.00} & 0.70 & 0.56 & 0.02 & 0.04 \\
M1, $s{=}2$ (FA) & 0.05 & 0.04 & 0.07 & 0.04 & 0.03 & 0.05 \\
\midrule
G1, $\nu{=}8$ & 0.04 & \textbf{0.06} & 0.04 & 0.04 & 0.05 & 0.03 \\
G1, $\nu{=}4$ & 0.12 & \textbf{0.20} & 0.04 & 0.04 & 0.05 & 0.05 \\
G1, $\nu{=}2$ & 0.70 & \textbf{0.86} & 0.20 & 0.09 & 0.08 & 0.08 \\
G2, $\tau{=}0.2$ & \textbf{0.98} & 0.97 & 0.94 & 0.97 & 0.81 & 0.89 \\
G2, $\tau{=}0.35$ & \textbf{1.00} & \textbf{1.00} & \textbf{1.00} & \textbf{1.00} & \textbf{1.00} & \textbf{1.00} \\
G2, $\tau{=}0.5$ & \textbf{1.00} & \textbf{1.00} & \textbf{1.00} & \textbf{1.00} & \textbf{1.00} & \textbf{1.00} \\
G3, $\tau{=}0.2$ & 0.97 & \textbf{0.98} & 0.97 & \textbf{0.98} & 0.85 & 0.90 \\
G3, $\tau{=}0.35$ & \textbf{1.00} & \textbf{1.00} & \textbf{1.00} & \textbf{1.00} & \textbf{1.00} & \textbf{1.00} \\
G3, $\tau{=}0.5$ & \textbf{1.00} & \textbf{1.00} & \textbf{1.00} & \textbf{1.00} & \textbf{1.00} & \textbf{1.00} \\
G4, $a{=}0.5$ & \textbf{1.00} & \textbf{1.00} & 0.97 & 0.74 & 0.06 & 0.54 \\
G4, $a{=}0.7$ & \textbf{1.00} & \textbf{1.00} & \textbf{1.00} & 0.99 & 0.08 & 0.98 \\
G4, $a{=}0.9$ & \textbf{1.00} & \textbf{1.00} & \textbf{1.00} & \textbf{1.00} & 0.09 & \textbf{1.00} \\
G5, $a{=}0.5$ & 0.99 & \textbf{1.00} & 0.91 & 0.61 & 0.05 & 0.39 \\
G5, $a{=}0.7$ & \textbf{1.00} & \textbf{1.00} & \textbf{1.00} & 0.98 & 0.07 & 0.96 \\
G5, $a{=}0.9$ & \textbf{1.00} & \textbf{1.00} & \textbf{1.00} & \textbf{1.00} & 0.06 & \textbf{1.00} \\
G6, $b{=}0.3$ & 0.85 & \textbf{0.92} & 0.42 & 0.36 & 0.06 & 0.19 \\
G6, $b{=}0.5$ & \textbf{1.00} & \textbf{1.00} & 0.99 & 0.95 & 0.07 & 0.74 \\
G6, $b{=}0.8$ & \textbf{1.00} & \textbf{1.00} & \textbf{1.00} & \textbf{1.00} & 0.07 & \textbf{1.00} \\
\bottomrule
\end{tabular}
\label{tab:B10}
\end{table}

\begin{table*}[!t]
\centering
\caption{Rejection rate / localised power under the pair-permutation calibration ($K=99$, 200 replicates, rejection at $p\le0.05$, tolerance 30). Kernel and distance statistics studentised, Spearman and the copula statistics raw, as in Table 1. The first three rows give rejection rates under no change; the M1 and M2 rows change only the margins. CvM is the copula test of [10] with ranks taken within each segment; CvM, global is the same statistic from global pseudo-observations.}
\footnotesize
\setlength{\tabcolsep}{4pt}
\begin{tabular}{l ccccccc}
\toprule
Cell & DOMI & Gram & HSIC & dCor & Spearman & CvM & CvM, global \\
\midrule
no change, S2 null & 0.065 & 0.060 & 0.060 & 0.060 & 0.050 & 0.040 & 0.030 \\
no change, Gaussian copula $\tau{=}0.5$ & 0.045 & 0.050 & 0.055 & 0.030 & 0.050 & 0.080 & 0.060 \\
no change, sign-mixed $r{=}0.85$ & 0.040 & 0.045 & 0.040 & 0.040 & 0.035 & 0.045 & 0.035 \\
M1 $s{=}2$ (margins only) & 0.035/0.005 & 0.030/0.015 & 0.030/0.000 & 0.065/0.000 & 0.055/0.010 & 0.045/0.015 & 0.835/0.690 \\
M2 $a{=}0.9$ (margins only) & 0.045/0.005 & 0.050/0.000 & 0.065/0.000 & 0.075/0.005 & 0.055/0.005 & 0.045/0.020 & 0.055/0.005 \\
S1 $r{=}0.35$ & 0.870/0.425 & 0.875/0.520 & 0.755/0.250 & 0.840/0.275 & 0.950/0.660 & 0.950/0.705 & 0.220/0.075 \\
S2 $a{=}0.7$ & 0.435/0.245 & 0.500/0.315 & 0.115/0.035 & 0.090/0.020 & 0.050/0.005 & 0.090/0.025 & 0.080/0.010 \\
S3 $\tau{=}0.5$ & 0.080/0.010 & 0.095/0.010 & 0.055/0.000 & 0.075/0.010 & 0.080/0.015 & 0.555/0.380 & 0.050/0.010 \\
S4 $r{=}0.85$ & 1.000/0.690 & 1.000/0.760 & 0.980/0.565 & 0.630/0.490 & 0.040/0.000 & 0.090/0.030 & 0.065/0.005 \\
G6 $b{=}0.5$ & 0.995/0.675 & 1.000/0.730 & 0.880/0.435 & 0.915/0.490 & 0.065/0.005 & 0.715/0.580 & 0.075/0.010 \\
\bottomrule
\end{tabular}
\label{tab:B11}
\end{table*}

\begin{table*}[!t]
\centering
\caption{Detection rate / localised power with the break at $\tau=150$, 300 and 450 ($n=600$, 500 replicates, calibration of Table 1, studentised forms).}
\footnotesize
\begin{tabular}{l c cccc}
\toprule
Cell & $\tau$ & DOMI & Gram & HSIC & dCor \\
\midrule
S2 $a{=}0.9$ & 150 & 0.98/0.00 & 1.00/0.00 & 0.72/0.00 & 0.07/0.00 \\
 & 300 & 1.00/0.87 & 1.00/0.98 & 0.90/0.66 & 0.18/0.07 \\
 & 450 & 0.94/0.69 & 0.97/0.76 & 0.51/0.28 & 0.17/0.05 \\
S4 $r{=}0.85$ & 150 & 0.89/0.85 & 0.95/0.93 & 0.38/0.32 & 0.27/0.15 \\
 & 300 & 0.99/0.55 & 1.00/0.55 & 0.70/0.39 & 0.56/0.24 \\
 & 450 & 0.93/0.01 & 0.96/0.00 & 0.37/0.00 & 0.18/0.00 \\
G4 $a{=}0.5$ & 150 & 0.97/0.00 & 0.97/0.00 & 0.90/0.01 & 0.36/0.00 \\
 & 300 & 1.00/0.97 & 1.00/0.98 & 0.97/0.77 & 0.74/0.52 \\
 & 450 & 0.89/0.62 & 0.90/0.62 & 0.69/0.39 & 0.25/0.08 \\
G6 $b{=}0.5$ & 150 & 0.99/0.00 & 1.00/0.00 & 0.91/0.00 & 0.54/0.00 \\
 & 300 & 1.00/0.86 & 1.00/0.91 & 0.99/0.60 & 0.95/0.73 \\
 & 450 & 0.98/0.81 & 0.99/0.86 & 0.72/0.49 & 0.43/0.22 \\
\bottomrule
\end{tabular}
\label{tab:B12}
\end{table*}

\emph{Marginal-shape control for S2.} In S2 the change also alters the shape of the \(Y\) margin: its variance stays 1 but its fourth moment rises to \(3+6a^4\). Scenario M2 reproduces that marginal change exactly while keeping \(X\) and \(Y\) independent, the multiplier \(|X'|\) being drawn independently of \(X\); it is run with the protocol of Table 1 (500 null and 500 alternative replicates per level). At \(a=0.5\), 0.7 and 0.9 the detection rate is 0.068, 0.074 and 0.056 for DOMI, 0.050, 0.084 and 0.054 for HSIC and 0.064, 0.090 and 0.072 for distance correlation, and the localised power of DOMI is at most 0.014, against 0.35 and 0.87 on S2 at \(a=0.7\) and 0.9. The joint-state Holevo statistic, which is not dependence-specific, detects 0.084, 0.050 and 0.288.

\subsection{Calibration by the boundary limit law}\label{sec:B16}

This section tests the permutation-free calibration of Section 4.6: critical values simulated from the limit law of P6-R are compared with the exact null quantile, and the level and power of the two limit-calibrated scans are measured. The limit theory of Section~\ref{sec:A12} is stated for a fixed bandwidth; the scans below use the deployed subsample bandwidth, so their agreement with the exact law is an empirical check, not a consequence of the theorem. The scan statistic here is $\max_\lambda \sqrt{\lambda(1-\lambda)}\,|n\hat I_L(\lambda)-n\hat I_R(\lambda)|$ on 161 split fractions from 0.10 to 0.90, with the deployed inputs (ranks over the whole series, $D=8$, median-heuristic bandwidth). Under independence the rank vectors are independent uniform permutations, so the exact null law depends only on $n$, $D$, the frequency draw and the grid; Table~\ref{tab:B13} gives its 95\% quantile from 2000 simulated series, the median over null replicates of the critical value obtained by simulating the limit functional, and the levels of the two calibrated scans (2000 replicates). The simulation discards directions of the moment matrices whose eigenvalue is below $10^{-9}$ of the largest and keeps the limit weights that carry all but $10^{-6}$ of their sum; at the tested cuts $10^{-12}$, $10^{-9}$ and $10^{-6}$ the bias constant differs by less than 0.01\% (one replicate at $n=600$; the critical values were not recomputed at the other cuts). The weights sum to 3.14 (median over replicates; the constant depends on the frequency draw and on the data-driven bandwidth), against a mean of $n\hat I$ of 3.10--3.22 over the four $n$.

The unstudentised scan places 69--72\% of its null maxima within 0.015 of the ends of the grid, where the short segment's upward bias is largest, and it localises poorly; the scan studentised by the moments of the limit process is slightly liberal below $n=38{,}400$ (levels up to 0.059), and the more powerful of the two: on an S2-type emergence at $a=0.7$ and $n=600$ it rejects in 64\% of 500 replicates, against 17\% for the unstudentised scan and 48\% for the permutation-calibrated scan with $K=99$.

Under the marginal-only change M1, which lies outside the null the calibration covers, both stay within one standard error of the nominal level or below it (0.040--0.054).

\begin{table*}[!t]
\centering
\caption{Calibration of the scan at the independence boundary by simulating its limit (P6-R): exact Monte Carlo 95\% quantile [95\% interval], median simulated critical value, level at 0.05 under independence (H0) and under the marginal-only change M1 for the unstudentised (raw) and limit-studentised (stud.) scans, and single-core wall-clock seconds per test against $K=99$ permutations.}
\footnotesize
\begin{tabular}{r c c cc cc cc}
\toprule
$n$ & Exact MC & Limit & H0 raw & H0 stud. & M1 raw & M1 stud. & Limit (s) & Perm. (s) \\
\midrule
600 & 16.63 [16.11, 17.05] & 16.64 & 0.052 & 0.058 & 0.046 & 0.054 & 0.69 & 5.78 \\
2{,}400 & 16.24 [16.01, 16.78] & 16.66 & 0.052 & 0.057 & 0.044 & 0.050 & 0.55 & 5.88 \\
9{,}600 & 16.87 [16.39, 17.19] & 16.66 & 0.051 & 0.059 & 0.046 & 0.044 & 0.77 & 7.48 \\
38{,}400 & 16.80 [16.49, 17.21] & 16.67 & 0.055 & 0.051 & 0.040 & 0.043 & 0.93 & 9.63 \\
\bottomrule
\end{tabular}
\label{tab:B13}
\end{table*}

\subsection{Calibration under serial dependence: block length, power and the adaptive protocol}\label{sec:B17}

Section 4.3 calibrates serially dependent series by block permutation, with an exchangeability diagnostic that chooses between pair and block permutation and sets the block length. The integrated autocorrelation time of a sequence is estimated as \(1+2\sum_{k=1}^{k^*}\hat r_k\), where \(\hat r_k\) is the lag-\(k\) sample autocorrelation and \(k^*\) the last lag before the first non-positive one (the initial positive sequence), with at most \(n/4\) lags. The block length is set from the margins' own sequences \(z_t\) and \(z_t^2\), so it does not account for cross-lagged dependence that extends beyond one block. The designs below use \(n=600\) and \(D=8\) at a nominal level of 0.05.

\emph{Block length.} Table~\ref{tab:B5} gives the level as a function of the block length. On the serially dependent designs block permutation restores it at both block lengths (0.035--0.050), where pair permutation rejects in 24.5\% of the AR(1) and 8.5\% of the GARCH replicates. Section~\ref{sec:A11} gives the trade-off: longer blocks carry less dependence across a boundary but leave a coarser permutation group, and the candidate grid does not depend on \(b\).

\emph{Power.} The power design has AR(1) margins with \(\phi=0.6\) and independent innovations, so the pair is independent before the break; a correlation-free dependence of type S2 appears in the innovations at the midpoint. This null differs from the constant Gaussian-copula AR(1) design of Table~\ref{tab:B5}: pair permutation rejects under it in 19\% of replicates, not 24.5\%. Block permutation with \(b=20\) or 50 rejects in 7.0\% of replicates at \(a=0.7\) and in 19--22\% at \(a=0.9\), against its own null rate of 6.5--8.5\%; at \(a=0.7\) the test has no power. With \(b=20\) the localised powers (break within 60 samples) are 2\% and 11\%, against 20\% and 67\% for independent series under pair permutation (Section 6.5) at the stricter tolerance of 30 samples. Pair permutation rejects in 25\% of replicates at \(a=0.7\), against 19\% under the null. A likely cause of the loss is a shortage of information, not a failure of calibration. An AR(1) with \(\phi=0.6\) has an effective sample size for the mean of \(n(1-\phi)/(1+\phi)=150\) at \(n=600\), or 75 points per segment, below the smallest segment of the sweep of Section~\ref{sec:B1} (150 points at \(n=300\)), where localised power has already fallen to 0.215.

\emph{Level of the adaptive protocol.} The protocol of Section 4.3 was run as Algorithm 1 with \(K=199\) (against \(K=99\) in Table~\ref{tab:B5}), studentising over all \(K+1\) curves. It rejects in 4.3\% of 720 serially independent replicates, and in 5.0\% and 5.5\% of 200 replicates with AR(1) margins under constant Gaussian-copula dependence and with GARCH margins. In the same run, fixed blocks with \(b=20\) reject in 6.0\% of the AR(1) constant-copula replicates, against 5.0\% in Table~\ref{tab:B5}. With AR(1) margins and independent innovations the protocol rejects in 9.0\% (standard error 2.0\%), against 6.5\% for either fixed block length in the same run, so choosing the block length from the data may add some inflation there.

\emph{Selection and block length.} On the three null designs (200 replicates, \(L=20\)) the protocol selects block permutation in 1\% of serially independent replicates, in 100\% of AR(1) replicates (the five statistics reject in 97.5--100\% of replicates individually), and in 96.5\% of GARCH replicates (through the scale statistics, which reject in 75--78\% of replicates individually), with median recommended block lengths \(\hat b=22\) under AR(1) and \(\hat b=21\) under GARCH. The rule \(b=\lceil 5\hat\tau_{\mathrm{int}}\rceil\) places them at the short end of the range of Table~\ref{tab:B5} by construction: it returns \(b=20\) for an AR(1) with \(\phi=0.6\), whose integrated autocorrelation time is 4. On the AR(1) constant-copula design the level at \(b=20\) is 5.0\% in Table~\ref{tab:B5} and 6.0\% in the protocol run, and the protocol itself, at a median \(\hat b=22\), rejects in 5.0\% (standard errors 1.5 to 1.7 percentage points), so on the constant-copula designs the recommended block length shows no inflation; under volatility clustering the level at \(b=20\) is 5.0\% in Table~\ref{tab:B5} and 4.5\% in the protocol run.

\subsection{The combined directional--omnibus statistic}\label{sec:B18}

Section 4.5 combines the directional and omnibus statistics as \(T=\max(T_{\mathrm{DOMI}},T_{\mathrm{Spearman}})\), both studentised maxima computed from the same pair-permuted replicas. This section measures what the combination gains on a monotone alternative and loses on correlation-free ones. Each configuration uses 100 null and 100 alternative replicates of length \(n=600\) and \(K=49\) pair permutations. The rates below are rejection rates, with no localisation requirement, so they exceed the localised powers of Table 1 and Table~\ref{tab:B8} for the same configurations.

On S1 at \(r=0.35\) the combination rejects in 83\% of replicates, against 87\% for Spearman and 77\% for DOMI. On S2 it equals DOMI at \(a=0.9\) (0.97) and falls below it at \(a=0.7\), 0.23 against 0.37 (Spearman 0.02). False-alarm rates under the null of each design are 0.01--0.03. On the daily 2021--2022 stock--bond window of Section 6.8 the single-break test was run with 20 independent permutation draws of \(K=999\) under each calibration (Table~\ref{tab:B19}). Each draw is a permutation test in its own right; the \(p\)-values differ between draws because the studentising moments and the null maxima depend on the drawn copies, and with 25 blocks at \(b=20\) and 10 at \(b=50\) the spread is visible. Under block permutation DOMI and the combined statistic do not reject in any draw. Spearman is borderline at \(b=20\) (median 0.057, range 0.035--0.078) and does not reject at \(b=50\); HSIC gives 0.009--0.033 at \(b=20\) and 0.050--0.105 at \(b=50\). The copula statistic from global ranks gives at most 0.003 under every calibration, but it also responds to changes confined to the margins (0.835 under M1, Table~\ref{tab:B11}), so these values do not by themselves indicate a change in dependence.

\subsection{Simulation on empirical weather margins}\label{sec:B19}

This design combines a known dependence change with the marginal distributions of the weather records of Section 6.7. The margins are the hourly observations of station 105 (Gangneung) in 2018, temperature for $X$ and relative humidity for $Y$. The empirical quantile functions of the winter (December--February) and summer (June--August) observations, and of December and of February alone, define the marginal regimes; the recorded values, quantised to 0.1\,$^\circ$C and 1\%, are first spread uniformly within their resolution. The dependence is the latent construction of S4 at $r=0.85$ (sign-mixed, correlation zero) or independence, carried to the margins by the normal distribution function and the empirical quantile function. Each series has $n=1200$ with the change at 600. The scan uses the pair-permutation calibration and the statistics of Table~\ref{tab:B11} ($K=99$, 200 replicates, candidates 120--1080, localisation tolerance 60). In the rounded variant the observations are rounded back to the sensor resolution, which produces ties as in the record. Conditions C4--C5b, a marginal change within one season, were added after C0--C3b had been evaluated.

Under no change the rejection rates are 0.035--0.080 (Table~\ref{tab:B14}). With winter margins throughout, DOMI and HSIC detect the start or the end of the correlation-free dependence in every replicate and localise it in 56--68\% of replicates, and distance correlation detects it in 92--94.5\% and localises it in 67.5--72\%; Spearman and the global copula statistic detect it in at most 9\% of replicates and the copula test of [10] in 44.5--49\%. Rounding changes no rejection rate by more than 0.025 and no localised power by more than 0.08.

A change of the margins from winter to summer, with the dependence unchanged (C1), is detected in every replicate by every statistic. The ranks do not absorb a marginal change of this size: the copula test of [10], which ranks within each segment, also rejects in every replicate, and only the global copula statistic places the break at the marginal change; with the dependence change at the same point (C3a, C3b) no other statistic localises the break. A change of the margins from December to February (C4) raises the false-alarm rate of DOMI to 0.110 and that of HSIC to 0.140, while Spearman and the copula test of [10] stay at 0.055 and 0.035; with a dependence change at the same point (C5a, C5b) DOMI detects it in every replicate and localises it in 66--68\%, against 63.5--67\% without the marginal change. The weather analyses of Section 6.7 permute blocks within seasons and use anomalies against an hourly climatology, which remove the annual cycle of the level but not of the variance.

\begin{table*}[!t]
\centering
\caption{Simulation on empirical weather margins (Section~\ref{sec:B19}): rejection rate / localised power under pair-permutation calibration ($K=99$, 200 replicates, rejection at $p\le0.05$, tolerance 60). Kernel and distance statistics studentised, Spearman and the copula statistics raw; CvM and CvM, global as in Table~\ref{tab:B11}. C0 rows give rejection rates under no change. Continuous: margins interpolated between order statistics, no ties; rounded: observations rounded to 0.1\,$^\circ$C and 1\%. C4--C5b were run in the continuous variant only.}
\footnotesize
\setlength{\tabcolsep}{4pt}
\begin{tabular}{l l cccccc}
\toprule
Variant & Condition & DOMI & HSIC & dCor & Spearman & CvM & CvM, global \\
\midrule
continuous & C0 no change & 0.070 & 0.070 & 0.080 & 0.075 & 0.055 & 0.035 \\
 & C1 margins, winter to summer & 1.000/0.000 & 1.000/0.000 & 1.000/0.000 & 1.000/0.000 & 1.000/0.000 & 1.000/1.000 \\
 & C2a dependence ends & 1.000/0.635 & 1.000/0.615 & 0.925/0.720 & 0.090/0.020 & 0.450/0.325 & 0.080/0.010 \\
 & C2b dependence begins & 1.000/0.670 & 1.000/0.600 & 0.945/0.675 & 0.075/0.005 & 0.475/0.315 & 0.065/0.010 \\
 & C3a ends, winter to summer & 1.000/0.000 & 1.000/0.000 & 1.000/0.000 & 1.000/0.000 & 1.000/0.000 & 1.000/1.000 \\
 & C3b begins, winter to summer & 1.000/0.000 & 1.000/0.000 & 1.000/0.000 & 1.000/0.000 & 1.000/0.000 & 1.000/1.000 \\
 & C4 margins, December to February & 0.110/0.010 & 0.140/0.005 & 0.105/0.010 & 0.055/0.015 & 0.035/0.000 & 1.000/0.965 \\
 & C5a ends, December to February & 1.000/0.680 & 1.000/0.600 & 0.870/0.630 & 0.060/0.005 & 0.510/0.335 & 1.000/0.940 \\
 & C5b begins, December to February & 1.000/0.660 & 1.000/0.590 & 0.940/0.700 & 0.045/0.000 & 0.435/0.325 & 1.000/0.970 \\
\midrule
rounded & C0 no change & 0.070 & 0.075 & 0.080 & 0.075 & 0.060 & 0.045 \\
 & C1 margins, winter to summer & 1.000/0.000 & 1.000/0.000 & 1.000/0.000 & 1.000/0.000 & 1.000/0.000 & 1.000/1.000 \\
 & C2a dependence ends & 1.000/0.615 & 1.000/0.560 & 0.920/0.705 & 0.090/0.020 & 0.445/0.315 & 0.055/0.000 \\
 & C2b dependence begins & 1.000/0.680 & 1.000/0.680 & 0.935/0.680 & 0.075/0.005 & 0.490/0.330 & 0.055/0.005 \\
 & C3a ends, winter to summer & 1.000/0.000 & 1.000/0.000 & 1.000/0.000 & 1.000/0.000 & 1.000/0.000 & 1.000/1.000 \\
 & C3b begins, winter to summer & 1.000/0.000 & 1.000/0.000 & 1.000/0.000 & 1.000/0.000 & 1.000/0.000 & 1.000/1.000 \\
\bottomrule
\end{tabular}
\label{tab:B14}
\end{table*}

\subsection{A representation-learning baseline}\label{sec:B20}

MC-TIRE [38] learns features of a multi-channel series with parallel autoencoders in the time and the frequency domain, separates them into a cross-channel part (A\&S) and a channel-specific part (B), and marks change points at peaks of the smoothed distance between the features of adjacent windows. We ran the authors' code unchanged, with the settings of their multi-channel experiments (window 40, discrete Fourier transform on 30 points, rank 1, at most 200 epochs with early stopping) and a fixed seed per series; the channels are $X$ and $Y$. We keep the authors' network and settings but replace their peak-merging step, which has no false-alarm control, by a single-break scan of the A\&S curve and of the sum of the A\&S and B curves, with the matched-null Monte Carlo threshold of BMCTC in Section~\ref{sec:B13}, on the same null and alternative replicates (Table~\ref{tab:B15}). The authors' native peak-merging procedure is therefore not evaluated here.

On every dependence change of the table MC-TIRE localises the break in at most 2.2\% of replicates and detects a change in at most 10.8\%, against localised powers of 0.35--0.87 for DOMI on S2 and 0.55 on S4 at $r=0.85$ (Table 1). Under M1 its A\&S curve rejects in 28.6--31.6\% of replicates. The comparison uses serially independent single-break series of length 600; the method was designed for longer records with many change points.

\begin{table}[!t]
\centering
\caption{MC-TIRE [38] on the designs of Table 1: detection rate / localised power at a false-alarm rate of 0.05 under the calibration of Table 1 (500 replicates), for the cross-channel curve (A\&S) and the sum of the cross-channel and channel-specific curves, raw and studentised. The M1 row gives false-alarm rates.}
\footnotesize
\setlength{\tabcolsep}{3pt}
\begin{tabular}{l cccc}
\toprule
 & \multicolumn{2}{c}{A\&S} & \multicolumn{2}{c}{A\&S + B} \\
Setting & raw & stud. & raw & stud. \\
\midrule
S1 $r{=}.35$ & 0.038/0.022 & 0.036/0.022 & 0.026/0.002 & 0.032/0.000 \\
S2 $a{=}.7$ & 0.108/0.010 & 0.104/0.010 & 0.018/0.000 & 0.014/0.000 \\
S2 $a{=}.9$ & 0.104/0.004 & 0.104/0.004 & 0.008/0.002 & 0.008/0.000 \\
S3 $\tau{=}.5$ & 0.084/0.006 & 0.088/0.006 & 0.094/0.012 & 0.088/0.008 \\
S4 $r{=}.7$ & 0.036/0.000 & 0.032/0.000 & 0.080/0.002 & 0.076/0.002 \\
S4 $r{=}.85$ & 0.018/0.000 & 0.014/0.000 & 0.032/0.002 & 0.032/0.002 \\
G1 $\nu{=}2$ & 0.106/0.006 & 0.108/0.006 & 0.090/0.010 & 0.088/0.014 \\
G2 $\tau{=}.2$ & 0.042/0.018 & 0.032/0.016 & 0.032/0.004 & 0.042/0.010 \\
G4 $a{=}.5$ & 0.084/0.010 & 0.070/0.012 & 0.020/0.002 & 0.014/0.002 \\
G6 $b{=}.3$ & 0.070/0.014 & 0.070/0.014 & 0.072/0.012 & 0.060/0.010 \\
G6 $b{=}.5$ & 0.106/0.012 & 0.106/0.008 & 0.070/0.006 & 0.070/0.006 \\
M1 $s{=}2$ (FA) & 0.286 & 0.316 & 0.082 & 0.086 \\
\bottomrule
\end{tabular}
\label{tab:B15}
\end{table}

\begin{table*}[!t]
\centering
\caption{Test-computation time: median wall-clock seconds of the pair-permutation test ($K=99$) from ranks through the p-value, excluding data and permutation-index generation, over 10 data sets of S2 at $a=0.7$, one thread per test, four at a time on pinned performance cores of an Intel Core i9-12900 alongside another job (Section~\ref{sec:B9}). Dense: every split from $n/10$ to $9n/10$; fixed: 161 split fractions. $>$30 min: time limit reached; mem: above 16 GB; --: not run after a failure at a smaller $n$. CvM and CvM, global as in Table~\ref{tab:B11}.}
\footnotesize
\setlength{\tabcolsep}{4pt}
\begin{tabular}{l ccc ccccc}
\toprule
 & \multicolumn{3}{c}{dense grid, $n$} & \multicolumn{5}{c}{fixed grid, $n$} \\
Statistic & 600 & 2400 & 9600 & 600 & 2400 & 9600 & 38{,}400 & 76{,}800 \\
\midrule
DOMI & 16.8 & 78.1 & 341 & 5.38 & 5.32 & 5.89 & 7.59 & 10.4 \\
Gram & 232 & $>$30 min & -- & 254 & $>$30 min & -- & -- & -- \\
HSIC & 1.23 & 3.76 & 14.5 & 0.754 & 1.03 & 1.80 & 5.88 & 12.4 \\
dCor & 4.46 & 50.8 & 755 & 4.63 & 54.4 & 760 & mem & -- \\
Spearman & 1.24 & 5.18 & 20.8 & 0.464 & 0.647 & 0.861 & 0.922 & 1.84 \\
CvM & 8.97 & 440 & $>$30 min & 7.03 & 404 & $>$30 min & -- & -- \\
CvM, global & 1.45 & 19.6 & 397 & 1.28 & 18.8 & 383 & mem & -- \\
\bottomrule
\end{tabular}
\label{tab:B16}
\end{table*}

\subsection{Sign reversal of a correlation}\label{sec:B21}

A Gaussian copula with correlation $r$ and one with $-r$ have the same mutual information, and the Gram form of DOMI takes the same value on both, since reversing the sign of one variable reflects its ranks (Lemma 4(ii)). This design changes the correlation from $+r$ to $-r$ at $\tau=300$ ($n=600$, standard normal margins, $r=0.35$ and 0.7) under the calibration of Table 1, with $+r$ held throughout as the null (500 null and 500 alternative replicates; Table~\ref{tab:B17}).

The two forms of DOMI do not localise the break. At $r=0.7$ both detect a change in every replicate and localise it in none; the studentised maximiser has a median of 394 for the random-feature form and 397 for the Gram form (interquartile ranges 169--451 and 167--448, all replicates). At $r=0.35$ they detect in 28 and 31\% of replicates and localise in 0.2\%. The segments at the break carry equal DOMI, so the difference statistic vanishes there; a straddling segment mixes the two signs and, as the marginals are fixed, by convexity of the relative entropy carries no more, so the curve peaks away from the break. HSIC and distance correlation do not localise it either. Spearman, the copula test of [10] and the correlation CUSUM localise the break in 98--100\% of replicates. The combined statistic of Section 4.5 localises it in 97.4 and 100\% of replicates, close to Spearman, while retaining detection of the correlation-free change S2, where Spearman has little power: its rejection rates are 0.23 at \(a=0.7\) and 0.97 at \(a=0.9\), against 0.37 and 0.97 for DOMI (Section~\ref{sec:B18}).

\begin{table}[!t]
\centering
\caption{Sign reversal of a Gaussian correlation from $+r$ to $-r$ at $\tau=300$ (Section~\ref{sec:B21}): detection rate / localised power under the calibration of Table 1 (500 null replicates with $+r$ throughout, 500 alternative replicates). Kernel and distance statistics studentised; Spearman, copula and CUSUM statistics raw; the combined statistic studentised. CvM and CvM, global as in Table~\ref{tab:B11}.}
\footnotesize
\setlength{\tabcolsep}{4pt}
\begin{tabular}{l cc}
\toprule
Statistic & $r=0.35$ & $r=0.7$ \\
\midrule
DOMI & 0.280/0.002 & 1.000/0.000 \\
DOMI, Gram form & 0.312/0.002 & 1.000/0.000 \\
HSIC & 0.172/0.000 & 0.990/0.000 \\
dCor & 0.928/0.000 & 1.000/0.000 \\
Spearman & 1.000/0.980 & 1.000/1.000 \\
CvM & 1.000/0.980 & 1.000/1.000 \\
CvM, global & 0.672/0.398 & 1.000/0.938 \\
correlation CUSUM & 1.000/0.984 & 1.000/0.998 \\
combined (Section 4.5) & 1.000/0.974 & 1.000/1.000 \\
\bottomrule
\end{tabular}
\label{tab:B17}
\end{table}

\subsection{A US financial panel under a fixed protocol}\label{sec:B22}

\emph{Protocol.} Before the gold and dollar-index series were obtained, we fixed in a dated file the data, pairs, primary statistics, calibration, multiplicity rule and decision rules of this analysis, and recorded its SHA-256 value (beginning \texttt{cd5f49d2220eef05}) at 08:42 KST on 2026-09-30. The protocol is not deposited in a public registry. The stage-two details were fixed in an addendum (SHA-256 beginning \texttt{29d9e704c6ad2714}, 13:52 KST the same day) after the stage-one p-values of the gold and Treasury pairs had been seen and before the dollar-index result, the sensitivity draws or any stage-two result were opened; a log records what was seen when, and the protocol, its addendum and the log are available from the authors on request. The Treasury pair is not a blind test: an exploratory whole-record scan of it (\(K=199\)) preceded the protocol and gave a DOMI p-value of 0.02 with the maximiser at 2013-05-08 and a Spearman p-value of 0.005 at 2020-12-23. Oil was excluded before any download.

\emph{Data and procedure.} Daily closing levels from Yahoo Finance on the trading days common to both series of a pair, 2011-08-23 to 2026-08-20: the S\&P 500 (\textasciicircum{}GSPC), the SPDR Gold Shares ETF (GLD) and the US Dollar Index (\mbox{DX-Y.NYB}) as log returns, and the 10-year Treasury yield (\textasciicircum{}TNX) as first differences. For each pair the exchangeability diagnostic of Section 4.3 on the whole record rejected pair exchangeability and set the block length; one whole-record single-break test then computed DOMI (random-feature form, \(D=8\)), Spearman and the combined statistic of Section 4.5 (the larger of the two studentised components, calibrated as such, so its p-value can exceed that of either component) from the same \(K=9999\) block permutations, with HSIC recorded descriptively. The primary statistic was DOMI for gold, whose average correlation with stocks is near zero with a sign that varies by regime, and the combined statistic for the Treasury yield and the dollar, whose expected changes are of correlation. The three primary p-values were corrected by Benjamini--Hochberg at \(q=0.10\). Each surviving pair entered stage two (Section 4.7) at the maximiser of its primary statistic, on the largest centred window holding a whole number of blocks, with \(K=999\) block permutations for the dependence and marginal-scale tests, Benjamini--Hochberg at \(q=0.10\) over the pairs for the dependence test, a marginal change declared at \(\min(p_X,p_Y)\le0.05\), and 20 further feature draws as sensitivity. The protocol called a change confirmed when its primary test survived the correction and stage two classified it as a change of the dependence, and correlation-free when, in addition, the Spearman test on the same record did not reject at 0.05; for the reason below we do not use that term for the gold pair.

\emph{Results} (Table~\ref{tab:B20}). All three primary tests reject and survive the correction. For gold, DOMI rejects and so does HSIC, while the Spearman scan does not; over 20 further permutation draws the DOMI p-value stays within 0.004--0.015. In stage two all three breaks are classified as dependence changes with no detected scale change. The Treasury break rests on 16 blocks, with \(p\le0.05\) in all 20 feature draws. The gold and dollar breaks lie near the start of the record, so their centred windows hold four and six blocks; both reach \(q=0.098\), inside the protocol's 0.10 but not the 0.05 used for the weather candidates, and the gold p-value exceeds 0.05 in all 20 feature draws. With four blocks the marginal-scale test has little power, so the absence of a detected scale change for gold is weak evidence: within the window the annualised volatility of gold returns is 0.14 before the break and 0.22 after it.

\emph{Reading the gold result.} Within the stage-two window the rank correlation of stock and gold returns falls from 0.31 before the break to \(-0.04\) after it, and the bias-corrected DOMI (Section 3.3) from 0.040 to 0.003, so the change is not free of correlation. The non-rejection of the whole-record Spearman scan is consistent with the stock--gold correlation changing sign from year to year (Table~\ref{tab:B21}; from \(-0.28\) to 0.39 across calendar years, 0.05 over the whole record), under which permuted records often show differences in correlation as large as the observed one. DOMI measures the strength of the dependence irrespective of its sign; in 2011 and 2012 it exceeds every calendar year from 2013 to 2025. The DOMI maximisers of all three pairs fall in March to May 2013; we report this as a descriptive observation and draw no causal conclusion.

\begin{table*}[!t]
\centering
\caption{BMCTC under a $2\times2$ variation of its input and kernel rule (Section~\ref{sec:B13}): localised power on S4 and detection rate under M1 (false alarm), raw / studentised, under the calibration of Table~\ref{tab:B9} (500 replicates, same replicates and seeds). Kernel rule BMCTC: each 60-point window standardised, Gaussian width $1.06\cdot60^{-1/5}$; DOMI: no window standardisation, the median-heuristic width of DOMI's random features. The first two rows are BMCTC itself (Table~\ref{tab:B9}).}
\footnotesize
\setlength{\tabcolsep}{4pt}
\begin{tabular}{llc cc cc cc}
\toprule
 & & & \multicolumn{2}{c}{S4 $r{=}0.7$} & \multicolumn{2}{c}{S4 $r{=}0.85$} & \multicolumn{2}{c}{M1 $s{=}2$ (FA)} \\
Input & Kernel rule & $\alpha$ & raw & stud. & raw & stud. & raw & stud. \\
\midrule
raw & BMCTC & 1.01 & 0.33 & 0.23 & 0.86 & 0.80 & 0.06 & 0.07 \\
raw & BMCTC & 2 & 0.57 & 0.51 & 0.94 & 0.93 & 0.05 & 0.04 \\
raw & DOMI & 1.01 & 0.25 & 0.17 & 0.69 & 0.63 & 0.99 & 0.99 \\
raw & DOMI & 2 & 0.62 & 0.46 & 0.90 & 0.88 & 0.57 & 0.49 \\
ranks & BMCTC & 1.01 & 0.14 & 0.09 & 0.80 & 0.67 & 0.44 & 0.44 \\
ranks & BMCTC & 2 & 0.15 & 0.10 & 0.77 & 0.69 & 0.13 & 0.10 \\
ranks & DOMI & 1.01 & 0.21 & 0.10 & 0.83 & 0.69 & 0.15 & 0.14 \\
ranks & DOMI & 2 & 0.23 & 0.16 & 0.82 & 0.73 & 0.18 & 0.19 \\
\bottomrule
\end{tabular}
\label{tab:B18}
\end{table*}

\begin{table*}[!t]
\centering
\caption{Single-break $p$-values on the daily 2021--2022 stock--bond window of Section 6.8 ($n=503$; Section~\ref{sec:B18}): median (range) over 20 independent permutation draws of $K=999$ each, all on the same observed window, so the spread is Monte Carlo variation and not variation over independent data; a median of 20 draws is the midpoint of the two central values, hence the occasional fourth decimal. Block permutation at $b=20$ and $b=50$ leaves 25 and 10 blocks. Pair permutation, which the diagnostic of Section 4.3 rules out for this window, is shown for comparison. CvM, global also responds to changes confined to the margins (Table~\ref{tab:B11}).}
\footnotesize
\begin{tabular}{lccc}
\toprule
Statistic & pair & block, $b=20$ & block, $b=50$ \\
\midrule
DOMI & 0.268 (0.223--0.303) & 0.231 (0.202--0.272) & 0.3605 (0.298--0.408) \\
combined (Section 4.5) & 0.016 (0.008--0.022) & 0.1475 (0.095--0.183) & 0.4335 (0.358--0.493) \\
Spearman & 0.006 (0.003--0.010) & 0.057 (0.035--0.078) & 0.2965 (0.241--0.340) \\
HSIC & 0.026 (0.017--0.049) & 0.0195 (0.009--0.033) & 0.078 (0.050--0.105) \\
CvM, global & 0.001 (0.001--0.002) & 0.001 (0.001--0.001) & 0.002 (0.001--0.003) \\
\bottomrule
\end{tabular}
\label{tab:B19}
\end{table*}

\begin{table*}[!t]
\centering
\caption{US financial panel (Section~\ref{sec:B22}; Section 6.8), daily, 2011-08-23 to 2026-08-20. Upper block: whole-record single-break tests under block permutation at the diagnostic's block length \(b\) in trading days (number of blocks), \(K=9999\): p-value of the primary statistic, its range over 20 further permutation draws of \(K=999\), Benjamini--Hochberg \(q\) over the three pairs, p-values of DOMI, Spearman and HSIC from the same replicas, and the maximiser of the primary statistic. Lower block: stage two (Section 4.7) on the largest centred window holding whole blocks: number of blocks, p-value of the dependence test (\(K=999\)) and its Benjamini--Hochberg \(q\) over the three pairs, p-values of the marginal-scale test for the S\&P 500 (\(p_X\)) and for the second series (\(p_Y\)), range of the dependence p-value over 20 feature draws and the number of draws with \(p\le0.05\), and the class of Algorithm 2 (dependence: a dependence change with no detected scale change; Figure~1).}
\footnotesize
\setlength{\tabcolsep}{3.5pt}
\begin{tabular}{llccccccc c}
\toprule
Pair & Primary & \(b\) (blocks) & \(p\) & Range & \(q\) & DOMI & Spearman & HSIC & Break \\
\midrule
S\&P 500--gold & DOMI & 134 (28) & 0.0080 & 0.004--0.015 & 0.0120 & 0.0080 & 0.2289 & 0.0225 & 2013-03-12 \\
S\&P 500--10-year yield & combined & 173 (21) & 0.0204 & 0.007--0.024 & 0.0204 & 0.0226 & 0.0001 & 0.0243 & 2020-12-07 \\
S\&P 500--dollar index & combined & 133 (28) & 0.0053 & 0.002--0.015 & 0.0120 & 0.0053 & 0.0227 & 0.0096 & 2013-04-01 \\
\bottomrule
\end{tabular}

\vspace{4pt}
\begin{tabular}{llcccccccl}
\toprule
Pair & Window & Blocks & \(p_{\mathrm{dep}}\) & \(q\) & \(p_X\) & \(p_Y\) & Range & \(p\le0.05\) & Class \\
\midrule
S\&P 500--gold & 2012-02-14 to 2014-04-02 & 4 & 0.075 & 0.098 & 0.168 & 0.867 & 0.075--0.150 & 0 of 20 & dependence \\
S\&P 500--10-year yield & 2015-06-09 to 2026-06-11 & 16 & 0.004 & 0.012 & 0.553 & 0.077 & 0.002--0.007 & 20 of 20 & dependence \\
S\&P 500--dollar index & 2011-08-25 to 2014-10-27 & 6 & 0.098 & 0.098 & 0.104 & 0.207 & 0.019--0.098 & 5 of 20 & dependence \\
\bottomrule
\end{tabular}
\label{tab:B20}
\end{table*}

\begin{table}[!t]
\centering
\caption{Calendar-year Spearman correlation (\(\rho_S\)) and bias-corrected DOMI (Section 3.3; \(D=8\), ranks within the year) of daily returns for the pairs of Table~\ref{tab:B20} (Section~\ref{sec:B22}). 2011 starts on 2011-08-23 and 2026 ends on 2026-08-20. The bias-corrected estimate can be slightly negative. Descriptive; not used for any decision.}
\footnotesize
\setlength{\tabcolsep}{3.5pt}
\begin{tabular}{l cc cc cc}
\toprule
 & \multicolumn{2}{c}{S\&P 500--gold} & \multicolumn{2}{c}{S\&P 500--yield} & \multicolumn{2}{c}{S\&P 500--dollar} \\
Year & \(\rho_S\) & DOMI & \(\rho_S\) & DOMI & \(\rho_S\) & DOMI \\
\midrule
2011 & 0.23 & 0.044 & 0.71 & 0.192 & $-$0.77 & 0.235 \\
2012 & 0.35 & 0.063 & 0.58 & 0.123 & $-$0.50 & 0.087 \\
2013 & 0.03 & 0.006 & 0.16 & 0.007 & $-$0.10 & 0.004 \\
2014 & $-$0.18 & 0.011 & 0.45 & 0.092 & 0.12 & $-$0.002 \\
2015 & $-$0.08 & 0.009 & 0.39 & 0.054 & 0.15 & 0.007 \\
2016 & $-$0.28 & 0.031 & 0.34 & 0.047 & $-$0.01 & $-$0.003 \\
2017 & $-$0.24 & 0.015 & 0.34 & 0.047 & 0.21 & 0.013 \\
2018 & 0.06 & 0.006 & 0.25 & 0.024 & $-$0.11 & 0.018 \\
2019 & $-$0.19 & 0.009 & 0.44 & 0.061 & 0.05 & $-$0.001 \\
2020 & 0.07 & 0.005 & 0.41 & 0.065 & $-$0.08 & 0.013 \\
2021 & 0.17 & 0.003 & 0.10 & 0.004 & $-$0.26 & 0.023 \\
2022 & 0.16 & 0.006 & $-$0.11 & 0.007 & $-$0.46 & 0.061 \\
2023 & 0.03 & 0.001 & $-$0.04 & 0.002 & $-$0.24 & 0.012 \\
2024 & 0.26 & 0.027 & $-$0.15 & 0.019 & $-$0.18 & 0.006 \\
2025 & $-$0.06 & 0.011 & 0.01 & 0.012 & $-$0.03 & 0.008 \\
2026 & 0.39 & 0.061 & $-$0.37 & 0.056 & $-$0.42 & 0.061 \\
\bottomrule
\end{tabular}
\label{tab:B21}
\end{table}